\documentclass[twoside]{article}

\usepackage[preprint]{aistats2022}
%
%


\usepackage[round]{natbib}


\usepackage{hyperref}
\usepackage{cuted}
\usepackage[textsize=tiny]{todonotes}


\usepackage{xspace}
\usepackage{bbm}
\input{mathlig}
\usepackage{mathtools}
\usepackage{relsize}
\usepackage{mathrsfs}
\usepackage{dsfont}
\DeclarePairedDelimiterX{\inp}[2]{\langle}{\rangle}{#1, #2}
\makeatletter
\newcommand*\bigcdot{\mathpalette\bigcdot@{.5}}
\newcommand*\bigcdot@[2]{\mathbin{\vcenter{\hbox{\scalebox{#2}{$\m@th#1\bullet$}}}}}
\makeatother

\newcommand{\muspace}{\mspace{1mu}}

\DeclareRobustCommand{\scond}{\mathchoice{\muspace\vert\muspace}{\vert}{\vert}{\vert}}
\mathlig{|}{\scond}

\DeclareRobustCommand{\discint}{\mathchoice{\mspace{-1.5mu}:\mspace{-1.5mu}}{\mspace{-1.5mu}:\mspace{-1.5mu}}{:}{:}}
\mathlig{::}{\discint}
\newcommand{\suchthat}{\mathchoice{\colon}{\colon}{:\mspace{1mu}}{:}}

%
%
%
%
%
%
%
%
%
%
\def\sgn{\mathop{\rm sgn}\nolimits}%
%
%

\newcommand{\Ac}{\mathcal{A}}

\newcommand{\Cc}{\mathcal{C}}

\newcommand{\Hc}{\mathcal{H}}

\newcommand{\Mc}{\mathcal{M}}

\newcommand{\Tc}{\mathcal{T}}




\newcommand{\av}{{\bf a}}

\newcommand{\fv}{{\bf f}}
\newcommand{\gv}{{\bf g}}

\newcommand{\xv}{{\bf x}}

\newcommand{\uv}{{\bf u}}
\newcommand{\vv}{{\bf v}}


\newcommand{\xb}{{\mathbf x}}

\newcommand{\yb}{{\mathbf y}}





\newcommand{\yh}{{\hat{y}}}



\newcommand{\xt}{{\tilde{x}}}

\def\a{\alpha}
\def\b{\beta}

\def\e{\epsilon}

\def\th{\theta}







\newcommand\eg{e.g.,\xspace}
\newcommand\ie{i.e.,\xspace}
\def\textiid{i.i.d.\@\xspace}
\newcommand\iid{\ifmmode\text{ i.i.d. } \else \textiid \fi}

\newcommand{\Real}{\mathbb{R}}


\newcommand{\half}{\frac{1}{2}}



\def\mathllap{\mathpalette\mathllapinternal}
\def\mathllapinternal#1#2{%
  \llap{$\mathsurround=0pt#1{#2}$}}

\def\clap#1{\hbox to 0pt{\hss#1\hss}}
\def\mathclap{\mathpalette\mathclapinternal}
\def\mathclapinternal#1#2{%
  \clap{$\mathsurround=0pt#1{#2}$}}




\let\oldstackrel\stackrel
\renewcommand{\stackrel}[2]{\oldstackrel{\mathclap{#1}}{#2}}


\DeclarePairedDelimiterX{\infdivx}[2]{(}{)}{%
  #1\;\delimsize\|\;#2%
}



\renewcommand{\hbar}{h\mathllap{\overline{\vphantom{h}\hphantom{\rule{4.6pt}{0pt}}}\mspace{0.77mu}}}

\catcode`~=11 
\newcommand{\urltilde}{\kern -.06em\lower -.06em\hbox{~}\kern .02em}
\catcode`~=13 

\hyphenation{Gauss-ian}
\hyphenation{qua-dra-tic}
\hyphenation{Vis-wa-nath}
\hyphenation{non-trivial}
\hyphenation{multi-letter}
\hyphenation{Gauss-ian}
\hyphenation{super-posi-tion}
\hyphenation{de-cod-er}
\hyphenation{Nara-yan}
\hyphenation{multi-message}
\hyphenation{Dimi-tris}
\hyphenation{Pol-ty-rev}
\hyphenation{multi-cast}
\hyphenation{multi-user}
\hyphenation{multi-plex-ing}
\hyphenation{bi-directional}
\hyphenation{comput}


\DeclarePairedDelimiterX{\norm}[1]{\lVert}{\rVert}{#1}
\DeclarePairedDelimiterX{\abs}[1]{\lvert}{\rvert}{#1}

\usepackage{xparse}

\let\oldpartial\partial
\renewcommand*{\partial}{\mathop{}\!\oldpartial}
%


\newcommand{\defeq}{\mathrel{\mathop{:}}=}





\newcommand{\numberthis}{\addtocounter{equation}{1}\tag{\theequation}}

\makeatletter
\let\save@mathaccent\mathaccent
\newcommand*\if@single[3]{%
  \setbox0\hbox{${\mathaccent"0362{#1}}^H$}%
  \setbox2\hbox{${\mathaccent"0362{\kern0pt#1}}^H$}%
  \ifdim\ht0=\ht2 #3\else #2\fi
  }
\newcommand*\rel@kern[1]{\kern#1\dimexpr\macc@kerna}
\newcommand*\wideaccent[2]{\@ifnextchar^{{\wide@accent{#1}{#2}{0}}}{\wide@accent{#1}{#2}{1}}}
\newcommand*\wide@accent[3]{\if@single{#2}{\wide@accent@{#1}{#2}{#3}{1}}{\wide@accent@{#1}{#2}{#3}{2}}}
\newcommand*\wide@accent@[4]{%
  \begingroup
  \def\mathaccent##1##2{%
    \let\mathaccent\save@mathaccent
    \if#42 \let\macc@nucleus\first@char \fi
    \setbox\z@\hbox{$\macc@style{\macc@nucleus}_{}$}%
    \setbox\tw@\hbox{$\macc@style{\macc@nucleus}{}_{}$}%
    \dimen@\wd\tw@
    \advance\dimen@-\wd\z@
    \divide\dimen@ 3
    \@tempdima\wd\tw@
    \advance\@tempdima-\scriptspace
    \divide\@tempdima 10
    \advance\dimen@-\@tempdima
    \ifdim\dimen@>\z@ \dimen@0pt\fi
    \rel@kern{0.6}\kern-\dimen@
    \if#41
      #1{\rel@kern{-0.6}\kern\dimen@\macc@nucleus\rel@kern{0.4}\kern\dimen@}%
      \advance\dimen@0.4\dimexpr\macc@kerna
      \let\final@kern#3%
      \ifdim\dimen@<\z@ \let\final@kern1\fi
      \if\final@kern1 \kern-\dimen@\fi
    \else
      #1{\rel@kern{-0.6}\kern\dimen@#2}%
    \fi
  }%
  \macc@depth\@ne
  \let\math@bgroup\@empty \let\math@egroup\macc@set@skewchar
  \mathsurround\z@ \frozen@everymath{\mathgroup\macc@group\relax}%
  \macc@set@skewchar\relax
  \let\mathaccentV\macc@nested@a
  \if#41
    \macc@nested@a\relax111{#2}%
  \else
    \def\gobble@till@marker##1\endmarker{}%
    \futurelet\first@char\gobble@till@marker#2\endmarker
    \ifcat\noexpand\first@char A\else
      \def\first@char{}%
    \fi
    \macc@nested@a\relax111{\first@char}%
  \fi
  \endgroup
}
\makeatother

\usepackage{amsmath,amsthm,amssymb}
\usepackage{xcolor}
\usepackage{enumitem}
\usepackage{wrapfig}

\usepackage{algorithm,algpseudocode}
\newcommand{\Hyperparameter}{
\textbf{Parameters} }

\algnewcommand\algorithmicforeach{\textbf{for each}}
\algdef{S}[FOR]{ForEach}[1]{\algorithmicforeach\ #1\ \algorithmicdo}

\theoremstyle{plain}
\newtheorem{theorem}{Theorem}
\numberwithin{theorem}{section}
\newtheorem{lemma}[theorem]{Lemma}
\newtheorem{corollary}[theorem]{Corollary}
\theoremstyle{remark}
\newtheorem{remark}{Remark}
\numberwithin{remark}{section}
\newtheorem{example}{Example} 
\numberwithin{example}{section}

\numberwithin{definition}{section}

\allowdisplaybreaks
\numberwithin{equation}{section}
\numberwithin{theorem}{section}

\usepackage{thmtools}
\usepackage{thm-restate}

\declaretheoremstyle[
  headfont=\color{red}\normalfont\bfseries,
  bodyfont=\color{red}\normalfont\itshape,
]{colored}

\declaretheorem[
  name=Proposition,
  sibling=theorem,
]{proposition}

\declaretheoremstyle[
  headfont=\color{blue}\normalfont\bfseries,
  bodyfont=\color{blue}\normalfont\itshape,
]{blue}


\usepackage{tcolorbox} 

\newtcolorbox{mycolorbox}[3][]
{
  colframe = #2!25,
  colback  = #2!10,
  coltitle = #2!20!black,  
  title    = {\bf #3},
  #1,
}

\newif\iflong
\longtrue


\makeatletter
\newcommand*{\addFileDependency}[1]{
  \typeout{(#1)}
  \@addtofilelist{#1}
  \IfFileExists{#1}{}{\typeout{No file #1.}}
}
\makeatother

\newcommand*{\myexternaldocument}[1]{%
    \externaldocument{#1}%
    \addFileDependency{#1.tex}%
    \addFileDependency{#1.aux}%
}
\iflong
\else
    \myexternaldocument{supplement}
\fi

\newcommand{\unitball}{\mathbb{B}}

\newcommand{\ct}{\tilde{c}}
\newcommand{\qt}{\tilde{q}}

\newcommand{\ev}{\mathbf{e}}
\newcommand{\hv}{\mathbf{h}}
\newcommand{\nv}{\mathbf{n}}
\newcommand{\wv}{\mathbf{w}}
\newcommand{\Tree}{\mathbf{T}}

\newcommand{\Reg}{\mathsf{Reg}}
\newcommand{\wealth}{\mathsf{W}}
\newcommand{\Tscr}{\mathscr{T}}

\newcommand{\bkt}{b^{\mathsf{KT}}}
\newcommand{\phikt}{\phi^{\mathsf{KT}}}
\newcommand{\psikt}{\psi^{\mathsf{KT}}}
\newcommand{\Psikt}{\Psi^{\mathsf{KT}}}

\newcommand{\qtkt}{\qt^{\mathsf{KT}}}
\newcommand{\vvkt}{\vv^{\mathsf{KT}}}
\newcommand{\xkt}{w^{\mathsf{KT}}}
\newcommand{\xvkt}{\wv^{\mathsf{KT}}}

\newcommand{\psimix}{\Psi^{\mathsf{mix}}}
\newcommand{\uvmix}{\uv^{\mathsf{mix}}}
\newcommand{\vvmix}{\vv^{\mathsf{mix}}}
\newcommand{\xvmix}{\wv^{\mathsf{mix}}}

\newcommand{\Psictw}{\Psi^{\mathsf{CTW}}}
\newcommand{\uvctw}{\uv^{\mathsf{CTW}}}
\newcommand{\vvctw}{\vv^{\mathsf{CTW}}}
\newcommand{\xvctw}{\wv^{\mathsf{CTW}}}

\usepackage{graphicx}
\usepackage{calc}
\newlength{\depthofsumsign}
\setlength{\depthofsumsign}{\depthof{$\sum$}}

\newcommand{\nsum}[1][0.7]{
    \mathop{%
        \raisebox
            {-#1\depthofsumsign+1\depthofsumsign}
            {\scalebox
                {#1}
                {$\displaystyle\sum$}%
            }
    }
}
\usepackage{etoolbox}
\robustify\setcounter
\robustify\addtocounter
\robustify\setlength
\robustify\addtolength


\begin{document}

%
\runningtitle{Parameter-free Online Linear Optimization with Side Information via Universal Coin Betting}

%
\runningauthor{Ryu, Bhatt, Kim}
\twocolumn[
\aistatstitle{{Parameter-free Online Linear Optimization\\with Side Information via Universal Coin Betting}}

\aistatsauthor{J. Jon Ryu \And Alankrita Bhatt \And  Young-Han Kim}

\aistatsaddress{UC San Diego \And UC San Diego \And UC San Diego/Gauss Labs Inc.}
]

\begin{abstract}
A class of parameter-free online linear optimization algorithms is proposed that harnesses the structure of an adversarial sequence by adapting to some side information. These algorithms combine the reduction technique of Orabona and P{\'a}l (2016) for adapting coin betting algorithms for online linear optimization with universal compression techniques in information theory for incorporating sequential side information to coin betting.
Concrete examples are studied in which the side information has a tree structure and consists of quantized values of the previous symbols of the adversarial sequence, including fixed-order and variable-order Markov cases.
By modifying the context-tree weighting technique of Willems, Shtarkov, and Tjalkens (1995), the proposed algorithm is further refined to achieve the best performance over all adaptive algorithms with tree-structured side information of a given maximum order in a computationally efficient manner.
\end{abstract}

\section{INTRODUCTION}
In this paper, we consider the problem of online linear optimization (OLO) in a Hilbert space $V$ with norm $\|\cdot\|$.
In each round $t=1,2,\ldots$, a learner picks an action $\xv_t\in V$, receives a vector $\gv_t\in V$ with $\|\gv_t\|\le 1$, and suffers loss $\langle\gv_t,\xv_t\rangle$.
In this repeated game, the goal of the learner is to keep her \emph{cumulative regret} small with respect to any competitor $\uv$ for any adversarial sequence $\gv^T\defeq \gv_1,\ldots,\gv_T$, where the cumulative regret is defined as the difference between the cumulative losses of the learner and $\uv\in V$, \ie
\[\Reg_T(\uv)\defeq\Reg(\uv;\gv^T) \defeq\sum_{t=1}^T \langle\gv_t,\xv_t\rangle -  \sum_{t=1}^T \langle\gv_t,\uv\rangle.
\]
Albeit simple in nature, an OLO algorithm serves as a versatile building block in machine learning algorithms~\citep{Shalev-Shwartz2011}; for example, it can be used to solve online convex optimization. 

While there exist standard algorithms such as online gradient descent (OGD) that achieve optimal regret of order $\Reg_T(\uv)=O(\|\uv\|\sqrt{T})$, these algorithms typically require tuning parameters with unknowns such as the norm $\|\uv\|$ of a target competitor $\uv$. 
For example, OGD with step size $\eta=1/\sqrt{T}$ achieves $\Reg_T(\uv)=O((1+\|\uv\|^2)\sqrt{T})$ for any $\uv\in V$, while OGD with $\eta=U/\sqrt{T}$ achieves $\Reg_T(\uv)=O(U\sqrt{T})$ for any $\uv\in V$ such that $\|\uv\|\le U$; see, \eg \citep{Shalev-Shwartz2011}.
To avoid tuning parameters, several \emph{parameter-free} algorithms have been proposed in the last decade, aiming to achieve cumulative regret of order $\tilde{O}(\|\uv\|\sqrt{T})$ for any $\uv\in V$ without knowing $\|\uv\|$ a priori~\citep{Orabona2013,McMahan--Abernethy2013,Orabona2014,McMahan--Orabona2014,Orabona--Pal2016}, where $\tilde{O}(\cdot)$ hides any polylogarithmic factor in the big O notation; the extra polylogarithimic factor is known to be necessary~\citep{Orabona2013,McMahan--Abernethy2013}.

While these optimality guarantees on regret seem sufficient, they may not be satisfactory in bounding the incurred loss of the algorithm, due to the limited power of the class of static competitors $\uv$ as a benchmark.
For example, consider the adversarial sequence $\gv,-\gv,\gv,-\gv,\ldots$ for a fixed vector $\gv\in\mathbb{B}\defeq\{\xv\in V\suchthat \|\xv\|\le 1\}$. 
Despite the apparent structure (or predictability) in the sequence, the best achievable reward of any static competitor $\uv\in V$ is zero for any even $T$.
In general, the cumulative loss of a static competitor $\uv$ is
$\sum_{t=1}^T \langle \gv_t,\uv\rangle
=\langle \sum_{t=1}^T \gv_t, \uv\rangle$, and can be large if and only if the norm $\|\sum_{t=1}^T \gv_t\|$ is large, or equivalently, when $\gv_1,\ldots,\gv_T$ are well \emph{aligned}.
It is not only a theoretical issue, since, for example, when we consider a practical scenario such as weather forecasting, the sequence $(\gv_t)$ may have such a \emph{temporal structure} that can be exploited in optimization, rather than being completely adversarial.

One remedy for this issue is to consider a larger class of competitors, which may \emph{adapt} to the history $\gv^{t-1}\defeq \gv_1,\ldots,\gv_{t-1}$.
Hereafter, we use $x_t^s$ to denote the sequence $x_t,\ldots,x_s$ for $t\le s$ and $x^t\defeq x_1^t$ by convention.
For instance, in the previous example, consider a competitor which can play two different actions $\uv_{+1}$ and $\uv_{-1}$ based on the quantization $Q(\gv_{t-1}) = \sgn(\langle \fv,\gv_{t-1}\rangle)$ for some fixed $\fv\in V$; for example, we chose standard vectors $\ev_i$ for a Euclidean space $V$ in our experiments; see Section~\ref{sec:exps}.
Then the best loss achieved by the competitor class on this sequence becomes $-(T/2)\|\gv\| (\|\uv_{+1}\|+\|\uv_{-1}\|)$, which could be much smaller than 0.
We remark that, from the view of binary prediction, this example can be thought of a first-order Markov prediction, which takes only the previous time step into consideration.
Hence, it is natural to consider a $k$-th order extension of the previous example, \ie a competitor that adapts to the length-$k$ sequence $Q(\gv_{t-k}^{t-1})\defeq Q(\gv_{t-k})\dotsc Q(\gv_{t-1}) \in \{1,\bar{1}\}^k$, where we define $\bar{1}\defeq -1$.

\begin{wrapfigure}{r}{0.125\textwidth}
\vspace{-1em}
  \begin{center}
    \includegraphics[width=.125\textwidth]{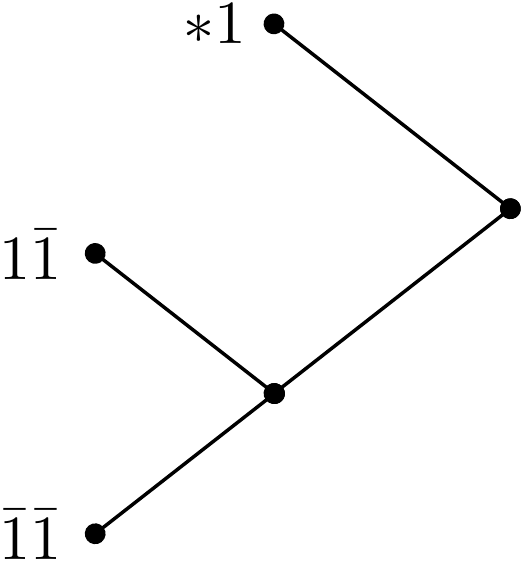}
  \end{center}
  \caption{$\Tree=\{*1,1\bar{1},\bar{1}\bar{1}\}$.}
  \label{fig:suffix_tree}
\end{wrapfigure}
We can even further sophisticate a competitor's dependence structure by allowing it to adapt to a \emph{tree structure} (also known as a \emph{variable-order Markov structure}) of the quantization sequence, which is widely deployed structure in sequence prediction; see, \eg \citep{Begleiter--El-Yaniv--Yona2004}.
For example, for the depth-2 quantization sequence $Q(\gv_{t-2}^{t-1})$, rather than adapting to the all four possible states, a competitor may adapt to the suffix falls into a set of suffixes $\Tree=\{*1,1\bar{1},\bar{1}\bar{1}\}$ of one fewer states; here, $*$ denotes that any symbol from $\{1,\bar{1}\}$ is possible in that position.
As depicted in Figure~\ref{fig:suffix_tree} for $\Tree$, in general, a suffix set has a one-to-one correspondence between a full binary tree, and is thus often identified as a tree; see Section~\ref{sec:tree_side_info} for the formal definition and further justification of the tree side information. 

Since we do not know a priori which tree structure is best to adapt to,
we ultimately aim to design an OLO algorithm that achieves the performance of the best tree competitor of given maximum depth $D\ge 1$. Since there are $O(2^{2^D})$ possible trees of depth at most $D$, it becomes challenging even for a moderate size of $D$.
We remark that the problem of following the best tree structure in hindsight, the \emph{tree problem} in short, is a classical problem which has been studied in multiple areas such as information theory~\citep{Willems--Shtarkov--Tjalkens1995} and online learning~\citep{Freund--Shapire--Singer--Warmuth1997}, but an application of this framework to the OLO problem has not been considered in the literature.




To address this problem, we combine two technical components from online learning and information theory. 
Namely, we apply an information theoretic technique of following the best tree structure for universal compression, called the \emph{context tree weighting} (CTW) algorithm invented by \citet{Willems--Shtarkov--Tjalkens1995}, to generalize a parameter-free OLO algorithm called the \emph{KT OLO algorithm} proposed by \citet{Orabona--Pal2016}, which is designed based on universal coin betting. 
Consequently, as the main result, we propose the \emph{CTW OLO algorithm} that efficiently solves the problem with only $O(D)$ updates per round achieving nearly minimax optimal regret; see Section~\ref{sec:olo_tree_side_info}.

We motivate the proposed approach by solving two intermediate, abstract OLO problems, the one with (single) side information (Section~\ref{sec:olo_single_side_information}) and the other with multiple side information (Section~\ref{sec:olo_multiple_side_information}), and propose information theoretic OLO algorithms (\ie product KT and mixture KT) respectively, which might be of independent interest. We remark, however, that it is not hard to convert any parameter-free algorithm to solve the abstract problems with same guarantees and complexity of the proposed solutions, using existing meta techniques such as a black-box aggregation scheme by \citet{Cutkosky2019} with per-state extension of a base OLO algorithm; hence, the contribution of the intermediate solutions is rather purely of intellectual merit. 
In Section~\ref{sec:exps}, we experimentally demonstrate the power of the CTW OLO algorithm with real-world temporal datasets. 
We conclude with some remarks in Section~\ref{sec:conclusion}. 
All proofs and discussion with related work are deferred to Appendix due to the space constraint.

\paragraph{Notation}
Given a tuple $\av=(a_1,\ldots,a_m)$, we use $\nsum \av\defeq \sum_{i=1}^m a_i$ to denote the sum of all entries in a tuple $\av$.
For example, we write $\nsum g^{t-1}$ to denote the sum of $g_1,\ldots,g_{t-1}$ by identifying $g^{t-1}$ as a tuple $(g_1,\ldots,g_{t-1})$.
For the empty tuple $()$, we define $\nsum ()\defeq 0$ by convention.
We use $|\av|$ to denote the number of entries of a tuple $\av$.
For a tuple of vectors $\uv_{1:S}\defeq (\uv_1,\ldots,\uv_S)\in V\times\cdots\times V$, we use $\|\uv\|_{1:S}\defeq (\|\uv_1\|,\ldots,\|\uv_S\|)\in\Real_{\ge 0}^S$ to denote the tuple of norms of each entry.



\section{PRELIMINARIES}
\label{sec:prelim}
We review the coin betting based OLO algorithm of \citet{Orabona--Pal2016}.
From this point, we will describe all algorithms in the reward maximization framework, which is philosophically consistent with the goal of gambling, to avoid any confusion, but we will keep using the conventional naming OGD even though it is actually gradient \emph{ascent}.\footnote{Note that one can translate a reward maximization algorithm to an equivalent loss minimization algorithm by feeding $-\gv_t$ instead of $\gv_t$, and vice versa.}

\subsection{Continuous Coin Betting and 1D OLO}
\label{sec:coin_1d}
Consider the following repeated gambling.
Starting with an initial wealth $\wealth_0$, at each round $t$, a player picks a \emph{signed relative bet} $b_t\in[-1,1]$.
At the end of the round, a real number $g_t\in[-1,1]$ is revealed as an outcome of the ``continuous coin toss'' and the player gains the reward  $g_t b_t \wealth_{t-1}$.
This game leads to the cumulative wealth \[\wealth_t(g^t)=\wealth_0\prod_{i=1}^t (1+g_i b_i).\]
When $g_t\in \{\pm1\}$, this game boils down to the standard coin betting, where the player splits her wealth into
$\frac{1+b_t}{2}\wealth_{t-1}$ and $\frac{1-b_t}{2}\wealth_{t-1}$, and bets the amounts on the binary outcomes $+1$ and $-1$, respectively.
It is well known that the standard coin betting game is equivalent to the binary compression, or binary log-loss prediction, which have been extensively studied in information theory; see, \eg \citep[Chapter 6]{Cover--Thomas2006}.

Even when the outcomes $g_t$ are allowed to take continuous values, many interesting connections remain to hold. 
For example, the \citet{Krichevsky--Trofimov1981}'s (KT) probability assignment, which is competitive against \iid Bernoulli models, can be translated into a betting strategy \[\bkt(g^{t-1})\defeq\bkt_t(\nsum g^{t-1}),\] where $\bkt_t(x)\defeq \frac{x}{t}$ for $x\in[-t+1,t-1]$.
As a natural continuous extension of the KT probability assignment, we define the \emph{KT coin betting potential} \[\psikt(g^t)\defeq \psikt_t(\nsum g^t)\defeq 2^t\qtkt_t(\nsum g^t),\] where
\[\qtkt_t(x)\defeq B\Bigl(\frac{t+x+1}{2},\frac{t-x+1}{2}\Bigr)\Big/B\Bigl(\half,\half\Bigr)\] for $x\in[-t,t]$ and $B(x,y)\defeq \Gamma(x)\Gamma(y)/\Gamma(x+y)$ and $\Gamma(x)$ denote the Beta function and Gamma function, respectively.
We remark that the interpolation for continuous values is naturally defined via the Gamma functions.
This simple KT betting scheme guarantees that the cumulative wealth satisfies
\[
\wealth_T(g^T) 
\ge \wealth_0 \psikt(g^T) = \wealth_0 2^T\qtkt_T(\nsum g^T)
\numberthis\label{eq:1d_coin_betting_kt_lower_bound}
\]
for any $T\ge 1$ and $g_1,\ldots,g_T\in[-1,1]$; see the proof of Theorem~\ref{thm:continuous_coin_betting_to_1D_OLO} in Appendix.
It can be easily shown that the wealth lower bound is near-optimal when compared to the best static bettor $b_t=b$ for some fixed $b\in[-1,1]$ in hindsight, the so-called Kelly betting~\citep{Kelly1956}. This follows as a simple consequence of the fact that the KT probability assignment is a near-optimal probability assignment for universal compression of \iid sequences. 
In this paper, going forward the interpretation of the coin betting potential as probability assignment in the parlance of compression will prove useful. 

In their insightful work, \citet{Orabona--Pal2016} demonstrated that the universal continuous coin betting algorithm can be directly translated to an OLO algorithm with a parameter-free guarantee.
By defining an \emph{absolute betting} $w_t\defeq b_t\wealth_{t-1}$, we can write the cumulative wealth in an additive form
\[\wealth_t(g^t)=\wealth_0+\sum_{i=1}^t g_t w_t,\]
whence we interpret $\sum_{i=1}^t g_i w_i$ as the cumulative reward in the 1D OLO with $g_1,\ldots,g_t\in[-1,1]$.
Now, if we define the KT coin betting OLO algorithm by the action \[\xkt_t\defeq \xkt(g^{t-1})=\bkt(g^{t-1})\wealth_{t-1}(g^{t-1}),\]
then the ``universal'' wealth lower bound~\eqref{eq:1d_coin_betting_kt_lower_bound} with respect to any $g^T$ can be translated to establish a ``parameter-free'' bound on the 1D regret
\[
\Reg(u;g^T)\defeq \sum_{t=1}^T g_tu - \sum_{t=1}^T g_t\xkt_t,
\]
against static competitors $u\in\Real$.
Let $(\psikt_T)^\star\suchthat\Real\to\Real$ denote the Fenchel dual of the potential function $\psikt_T\suchthat\Real\to\Real$, \ie
\[
(\psikt_T)^\star(u)\defeq \sup_{g\in\Real} (gu - \psikt_T(g)).
\]
\begin{theorem}
\label{thm:continuous_coin_betting_to_1D_OLO}
For any $g_1,\ldots,g_T\in[-1,1]$, 
the 1D OLO algorithm $\xkt_t=\bkt(g^{t-1})\wealth_{t-1}$ satisfies
\[
\sup_{u\in\Real} \Bigl\{\Reg(u;g^T) -\wealth_0(\psikt_T)^\star\Bigl(\frac{u}{\wealth_0}\Bigr)\Bigr\} \le \wealth_0.
\]
In particular, for any $u\in \Real$, we have \[
\Reg(u;g^T)
\le \sqrt{Tu^2
\ln(Tu^2/(e\sqrt{\pi}\wealth_0^2) +1)}
+ \wealth_0.\]
\end{theorem}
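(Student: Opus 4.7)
The plan is to prove the theorem in three steps: establish the wealth lower bound~\eqref{eq:1d_coin_betting_kt_lower_bound}, convert it to a regret bound via Fenchel duality, and then explicitly bound the Fenchel conjugate.

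For the first two steps, I would prove $\wealth_T(g^T)\ge\wealth_0\psikt_T(\nsum g^T)$ by induction on $T$. The induction step reduces to the pointwise inequality
\[
\psikt_{t-1}(x)\bigl(1 + g\,\bkt_t(x)\bigr)\;\ge\;\psikt_t(x+g), \qquad x\in[-(t-1),t-1],\;g\in[-1,1].
\]
At $g=\pm 1$ this holds with equality by the Beta-function recursion $B(\alpha+1,\beta) = \tfrac{\alpha}{\alpha+\beta}B(\alpha,\beta)$ applied with $\alpha=(t+x+1)/2$, $\beta=(t-x+1)/2$, which is precisely the binary KT conditional update (indeed $1 + g\,\bkt_t(x) = 2\qkt(g\mid x)$ in the binary case). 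For $g\in(-1,1)$ the inequality extends by convexity of $g\mapsto\psikt_t(x+g)$, which itself follows from log-convexity of $\alpha\mapsto B(\alpha,c-\alpha)$ (a consequence of convexity of $\log\Gamma$). Then, rewriting the wealth additively as $\wealth_T(g^T)=\wealth_0+\sum_{t=1}^T g_t\xkt_t$ gives $\sum_t g_t\xkt_t \ge \wealth_0(\psikt_T(\nsum g^T)-1)$, so for any $u\in\Real$,
\[
\Reg(u;g^T) = u\,\nsum g^T - \sum_t g_t\xkt_t \;\le\; \wealth_0 + \bigl(u\,\nsum g^T - \wealth_0\psikt_T(\nsum g^T)\bigr) \;\le\; \wealth_0 + \wealth_0\,(\psikt_T)^{\star}\!\bigl(u/\wealth_0\bigr),
\]
using $\wealth_0\,(\psikt_T)^{\star}(u/\wealth_0) = \sup_{y\in\Real}\{uy - \wealth_0\psikt_T(y)\}$ in the last step. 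Rearranging gives the first claim.

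For the explicit bound, I would derive a Gaussian-type lower bound $\psikt_T(y) \ge A_T\,\exp\!\bigl(y^2/(2T)\bigr)$ via Stirling-type estimates of $B\bigl((T+y+1)/2,(T-y+1)/2\bigr)/B(1/2,1/2)$, with the constant $A_T$ tracked carefully to recover $e\sqrt{\pi}$ in the target. Substituting this into the conjugate, the first-order condition yields a transcendental equation of the form $y^2/(2T) = \ln(\text{affine in }u,y)$, which I would resolve by a standard bootstrapping argument (linearizing via $\ln(x)\le x$ after suitable normalization to remove $y$ from the right-hand side). Plugging the resulting bound $y \le \sqrt{T\ln(Tu^2/(e\sqrt{\pi}\wealth_0^2)+1)}$ into $\phi^{\star}(v)\le vy$ gives the stated closed form. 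The main obstacle is step~3: matching the sharp constant $e\sqrt{\pi}$ demands a non-asymptotic (and uniform in $|y|\le T$) control of the Beta function rather than a leading-order Stirling estimate, and the bootstrapping must be arranged so that neither the dominant $\sqrt{Tu^2\log T}$ rate nor the residual $\wealth_0$ term is degraded.
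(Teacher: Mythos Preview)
Your proposal is correct and follows essentially the same route as the paper: induction for the wealth lower bound via equality at $g=\pm1$ plus convexity in $g$, then the Fenchel-duality conversion, then a Gaussian-type lower bound on $\psikt_T$ followed by bounding its conjugate. The one place where the paper is more direct is step~3: rather than a bootstrapping argument, it solves the stationarity condition exactly via the Lambert $W$ function (so that $h(x)=\beta e^{x^2/(2\alpha)}$ has $h^\star(y)=y\sqrt{\alpha}\bigl(\sqrt{W(\alpha y^2/\beta^2)}-1/\sqrt{W(\alpha y^2/\beta^2)}\bigr)$) and then invokes $W(x)\le\ln(1+x)$; also note the exponent in the Gaussian lower bound is $2y^2/T$, not $y^2/(2T)$, which you will need to match the stated constant.
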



\subsection{Reduction of OLO over a Hilbert Space to Continuous Coin Betting}
\label{sec:coin_hilbert}
This reduction can be extended for OLO over a Hilbert space $V$ with norm $\|\cdot\|$, where we wish to maximize the cumulative reward $\sum_{t=1}^T \langle \gv_t,\xv_t \rangle$ for $\gv_1,\ldots,\gv_T\in \unitball\defeq \{\xv\in V\suchthat \|\xv\|\le 1\}$.
\citet{Orabona--Pal2016} proposed the following OLO algorithm over Hilbert space based on the continuous coin betting.
For an initial wealth $\wealth_0>0$, we define the \emph{cumulative wealth} 
\[\wealth_T(\gv^T)\defeq \wealth_0 + \sum_{t=1}^T \langle \gv_t,\xv_t \rangle\] as the cumulative reward plus the initial wealth, analogously to the coin betting.
If we define the \emph{vectorial betting} given $\gv^{t-1}$ as
\[
\vvkt(\gv^{t-1})
\defeq \bkt_t(\|\nsum\gv^{t-1}\|)\frac{\nsum\gv^{t-1}}{\|\nsum\gv^{t-1}\|}
=\frac{1}{t}\nsum\gv^{t-1}
\]
and define a \emph{potential} function
\[\Psikt(\gv^t)\defeq \psikt_t(\|\sum\gv^t\|)= 2^t\qtkt_t(\|\nsum\gv^t\|),\]
then the corresponding OLO algorithm ensures the wealth lower bound $\wealth_t(\gv^t)\ge \wealth_0\Psikt(\gv^t)$, and thus the corresponding regret upper bound in the same spirit of Theorem~\ref{thm:continuous_coin_betting_to_1D_OLO}.

\begin{theorem}[{\citealp[Theorem~3]{Orabona--Pal2016}}]
\label{thm:continuous_coin_betting_to_Hilbert_OLO}
For any $\gv_1,\ldots,\gv_T\in \unitball$, the OLO algorithm $\xvkt_t=\vvkt(\gv^{t-1})\wealth_{t-1}$ based on the coin betting satisfies
$\wealth_T\ge \wealth_0 \Psikt(\gv^T)$,
and moreover
\[
\sup_{\uv\in V} \Bigl\{\Reg(\uv;\gv^T)-\wealth_0(\psikt_T)^\star\Bigl(\frac{\|\uv\|}{\wealth_0}\Bigr)\Bigr\}\le \wealth_0.
\]
In particular, for any $\uv\in V$, we have
\[
\Reg(\uv;\gv^T)
\le \sqrt{T\|\uv\|^2 
\ln(T\|\uv\|^2/(e\sqrt{\pi}\wealth_0^2) +1)}
+ \wealth_0.\]
\end{theorem}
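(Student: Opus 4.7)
The plan is to establish the three claims in order: the wealth bound $\wealth_T\ge\wealth_0\Psikt(\gv^T)$ does the heavy lifting, and the two regret bounds drop out from Cauchy--Schwarz and Fenchel duality, in close analogy with the 1D proof of Theorem~\ref{thm:continuous_coin_betting_to_1D_OLO}.

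For the wealth bound, write $S_t\defeq\nsum\gv^t$ for brevity and proceed by induction on $t$, maintaining the hypothesis $\wealth_t\ge\wealth_0\psikt_t(\|S_t\|)$. Since $\vvkt(\gv^{t-1})=S_{t-1}/t$, the per-round update takes the form $\wealth_t=\wealth_{t-1}(1+\langle\gv_t,S_{t-1}\rangle/t)=\wealth_{t-1}(1+g_t\|S_{t-1}\|/t)$ with $g_t\defeq\langle\gv_t,S_{t-1}/\|S_{t-1}\|\rangle\in[-1,1]$ by Cauchy--Schwarz. This matches a 1D KT wealth update whose running state is $\|S_{t-1}\|$ rather than $\sum_{i<t}g_i$. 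From $\|S_t\|^2=\|S_{t-1}\|^2+2g_t\|S_{t-1}\|+\|\gv_t\|^2\ge(\|S_{t-1}\|+g_t)^2$ (using $g_t^2\le\|\gv_t\|^2$), we get $\|S_t\|\ge|\|S_{t-1}\|+g_t|$. To close the induction it suffices to prove the per-step inequality
\[
\psikt_t(\|S_t\|)\le\psikt_{t-1}(\|S_{t-1}\|)\bigl(1+g_t\|S_{t-1}\|/t\bigr),
\]
which is strictly stronger than the plain 1D KT recursion $\psikt_t(s+g)/\psikt_{t-1}(s)\le 1+gs/t$ implicit in Theorem~\ref{thm:continuous_coin_betting_to_1D_OLO}, because $\psikt_t$ is even and nondecreasing in $|s|$ while $\|S_t\|\ge|\|S_{t-1}\|+g_t|$. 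The extra slack is supplied by the specific Beta--Gamma representation $\psikt_t(s)\propto B\bigl(\tfrac{t+s+1}{2},\tfrac{t-s+1}{2}\bigr)$ together with log-convexity of $\Gamma$.

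Given the wealth bound, the regret inequalities follow cleanly. By Cauchy--Schwarz, $\sum_{t=1}^T\langle\gv_t,\uv\rangle=\langle\uv,S_T\rangle\le\|\uv\|\|S_T\|$, so
\[
\Reg(\uv;\gv^T)\le\|\uv\|\|S_T\|-\wealth_0\psikt_T(\|S_T\|)+\wealth_0.
\]
Taking the supremum over $\|S_T\|\ge 0$ (legitimate since $\psikt_T$ is even) and recognizing the Fenchel conjugate $(\psikt_T)^\star(u)=\sup_{s\in\Real}(us-\psikt_T(s))$ yields the first regret inequality. The explicit $\sqrt{T\|\uv\|^2\ln(\cdots)}+\wealth_0$ bound then follows by plugging in the same Stirling-based estimate of $(\psikt_T)^\star$ already used in the 1D proof, with $|u|$ replaced by $\|\uv\|$; no new analysis of the conjugate is required because the dependence on the competitor enters only through its norm $\|\uv\|$.

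I expect the main obstacle to be the spherical strengthening of the 1D KT recursion described above. Geometrically, the component of $\gv_t$ orthogonal to $S_{t-1}$ inflates $\|S_t\|$ without contributing to the wealth, so the KT potential must be rotation-robust enough to absorb this extra radial mass (essentially, concave in $s^2$). This is the one step that cannot be obtained as a black-box consequence of Theorem~\ref{thm:continuous_coin_betting_to_1D_OLO}, and most of the proof effort will go into reducing it to a short inequality on Beta functions.
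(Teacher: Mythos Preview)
Your overall architecture is right and matches the paper: establish $\wealth_t\ge\wealth_0\psikt_t(\|S_t\|)$ by induction via a per-step inequality, then convert to the regret statement by Fenchel duality. Your Cauchy--Schwarz route to the regret bound (bounding $\langle\uv,S_T\rangle\le\|\uv\|\,\|S_T\|$ and then taking a supremum over the scalar $\|S_T\|$) is equivalent to the paper's use of Lemma~\ref{lem:dual_composition_dual}, namely $(\phi\circ\|\cdot\|)^\star=\phi^\star\circ\|\cdot\|$; either way one lands on $\Reg(\uv)\le\wealth_0(\psikt_T)^\star(\|\uv\|/\wealth_0)+\wealth_0$.

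The gap is in the per-step lemma. With $s=\|S_{t-1}\|$ and $g_t=\langle\gv_t,S_{t-1}\rangle/s$, you correctly isolate the target
\[
\psikt_t\bigl(\sqrt{s^2+2g_ts+\|\gv_t\|^2}\bigr)\le(1+g_ts/t)\,\psikt_{t-1}(s),
\]
which is indeed strictly stronger than the 1D recursion because the orthogonal component of $\gv_t$ inflates the argument on the left. But your diagnosis that ``concavity in $s^2$'' supplies the slack is backwards: the operative property is that $\psikt_t(\sqrt{y})$ is \emph{convex} in $y$, equivalently $x(\psikt_t)''(x)\ge(\psikt_t)'(x)$ on $[0,t)$ (Proposition~\ref{prop:kt_potential}(d)). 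The paper does not attack a Beta/Gamma identity directly. Instead it parametrizes by the cosine $\alpha=\langle\gv_t,S_{t-1}\rangle/(\|\gv_t\|\,s)\in[-1,1]$ and shows (Lemma~\ref{lem:extremes}) that
\[
f(\alpha)\defeq c\,\alpha\,\|\gv_t\|\,s-\psikt_t\Bigl(\sqrt{s^2+2\alpha\|\gv_t\|\,s+\|\gv_t\|^2}\Bigr)
\]
is \emph{concave} in $\alpha$, precisely because of the convexity-in-$s^2$ property; hence $\min_{\alpha}f(\alpha)=\min\{f(-1),f(+1)\}$. At $\alpha=\pm1$ the square root collapses to $|s\pm\|\gv_t\||$ and the inequality becomes verbatim the 1D single-round bound (Lemma~\ref{lem:coin_betting_potential_implication}) with coin $\pm\|\gv_t\|\in[-1,1]$. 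So the ``spherical strengthening'' you anticipate \emph{is} obtained as a black-box consequence of the 1D case once you pass through the geometric variable $\alpha$; no new Beta-function inequality is needed, and chasing log-convexity of $\Gamma$ under the wrong curvature hypothesis is likely to lead you astray.
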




\section{MAIN RESULTS}
\label{sec:olo_side_information}

In what follows, we will illustrate how to incorporate (multiple) sequential side information based on coin betting algorithms in OLO over Hilbert space with an analogous guarantee by extending the aforementioned algorithmic reduction and guarantee translation.
In doing so, we will leverage the connection between coin betting and compression, and adopt universal compression techniques beyond the KT strategy, namely per-state adaptation (Section~\ref{sec:olo_single_side_information}), 
mixture (Section~\ref{sec:olo_multiple_side_information}), and context tree weighting techniques (Section~\ref{sec:tree_side_info}).
For each case, we will first define a potential function and introduce a corresponding vectorial betting which guarantees the cumulative wealth to be at least the desired potential function.

\subsection{OLO with Single Side Information via Product Potential}
\label{sec:olo_single_side_information}
We consider the scenario when a (discrete) side information $H=(h_t\in[S])_{t\ge 1}$ is sequentially available for some $S\ge 1$.
That is, at each round $t$, the side information $h_t$ is revealed before the plays.
As motivated in the introduction, the canonical example is a \emph{causal} side information based on the history $\gv^{t-1}$ such as a quantization of $\gv_{t-D}^{t-1}$ for some $D\ge 1$. 
Yet another example is side information given by an oracle with foresight such as $h_t=\sgn(\langle\gv_t,\fv\rangle)$, \ie the sign of the correlation between a fixed vector $\fv\in V$ and the incoming symbol $\gv_t$, as a rough hint to the future.

We define an \emph{adaptive competitor with respect to the side information $H$}, denoted as $\uv_{1:S}[H]$ for an $S$-tuple $\uv_{1:S}\defeq(\uv_1,\ldots,\uv_S)\in V\times \cdots \times V$, to play $\uv_{h_t}$ at time $t$, and let $\Cc[H]\defeq \{\uv_{1:S}[H]\suchthat \uv_{1:S}\in V\times \cdots\times V\}$ denote the collection of all such adaptive competitors.

We first observe that the cumulative loss incurred by an adaptive competitor $\uv_{1:S}[H]\in \Cc[H]$ can be decomposed with respect to the \emph{states} defined by the side information symbols, \ie 
\[
\sum_{t=1}^T \langle \gv_t,\uv_{h_t}\rangle=\sum_{s=1}^S \Bigl\langle \sum_{t\in[T]\suchthat h_t=s} \gv_t,\uv_s\Bigr\rangle.\]
Hence, a naive solution is to run independent OGD algorithms for each subsequence $\gv^t(s;h^t)\defeq (\gv_i\suchthat h_i=s,i\in[t])$ sharing the same side information $s\in[S]$; it is straightforward to show that the per-state OGD with optimal learning rates achieves the regret of order $O(\sum_{s=1}^S\|\uv_s\|\sqrt{T_s})$ with knowing the competitor norms $\|\uv\|_{1:S}$.
Like the per-state OGD algorithm, we can also extend other parameter-free algorithms such as DFEG~\citep{Orabona2013} and AdaNormal \citep{McMahan--Orabona2014} to adapt to side information; see Appendix~\ref{sec:other_per_state_algos}. 
This is what we call the \emph{per-state extension} of an OLO algorithm.

Here, we propose a different type of parameter-free per-state algorithm based on coin betting.
To compete against any adaptive competitor from $\Cc[H]$, we define a \emph{product KT potential function}
\begin{align*}
\Psikt(\gv^t;h^t)
&\defeq \prod_{s\in[S]}\Psikt(\gv^t(s;h^t))\\
&=\prod_{s\in[S]}\psikt_{t_s}(\|\nsum\gv^t(s;h^t)\|),
\end{align*}
where $t_s\defeq |\gv^t(s;h^t)|$ 
for each $s\in[S]$. 
Note that $\Psikt(\gv^t;h^t)$ is a function of the summations of the subsequences $(\sum \gv^t(1;h^t),\ldots,\sum \gv^t(S;h^t))$.
For each time $t$, we then define the vectorial KT betting with side information $h^t$ as the application of the vectorial KT betting onto the subsequence corresponding to the current side information symbol $h_t$, \ie
\[
\vvkt(\gv^{t-1};h^t)\defeq \vvkt(\gv^{t-1}(h_t;h^{t-1})).
\]

Unlike the other per-state extensions which play independent actions for each state thus allowing straightforward analyses, the per-state KT actions
\[
\xvkt_t(\gv^{t-1};h^t)= \vvkt(\gv^{t-1};h^t)\wealth_{t-1}
\numberthis\label{eq:kt_olo}
\]
depend on all previous history $\gv^{t-1}$ due to the wealth factor $\wealth_{t-1}$. 
We can establish the following guarantee with the same line of argument in the proof of Theorem~\ref{thm:continuous_coin_betting_to_1D_OLO}, by analyzing the Fenchel dual of $\Psikt(\gv^t;h^t)$.
Recall that for a multivariate function $\Psi\suchthat\Real^d\to\Real$, its Fenchel dual $\Psi^\star\suchthat\Real^d\to\Real$ is defined as
\[
\Psi^\star(\yb) \defeq \sup_{\xb\in\Real^d} (\yb^T\xb - \Psi(\xb)).
\]
\begin{theorem}
\label{thm:olo_side_information}
For any side information $H=(h_t\in[S])_{t\ge 1}$ and any $\gv_1,\ldots,\gv_T\in \unitball$, let $\phikt_{T_{1:S}}\suchthat\Real^S\to \Real$ be the Fenchel dual of the function
\[
(f_1,\ldots,f_S)\mapsto \prod_{s\in[S]}\psikt_{T_s}(f_s),
\]
where $T_s\defeq |\{t\in[T]\suchthat h_t=s\}|$.
Then, the OLO algorithm $\xvkt_t(\gv^{t-1};h^t)\defeq \vvkt(\gv^{t-1};h^t)\wealth_{t-1}$ satisfies
$\wealth_T\ge \wealth_0 \Psikt(\gv^T;h^T)$,
and moreover
\[
\sup_{\uv_{1:S}} \Bigl\{\Reg(\uv_{1:S}[H];\gv^T)-\wealth_0\phikt_{T_{1:S}}\Bigl(\frac{\|\uv\|_{1:S}}{\wealth_0}\Bigr)\Bigr\}\le \wealth_0.
\]
In particular, 
for any $\uv_{1:S}[H]\in \Cc[H]$,
\begin{align*}
&\Reg(\uv_{1:S}[H];\gv^T)
\numberthis\label{eq:regret_product_potential}
= \wealth_0 +\tilde{O}\Biggl(\sqrt{\sum_{s = 1}^{S} T_{s}\|\uv_{s}\|^2} \Biggr).
\end{align*}
\end{theorem}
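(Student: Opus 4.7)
The plan is to mirror the proof of Theorem~\ref{thm:continuous_coin_betting_to_Hilbert_OLO} with the product potential in place of the single one, in three steps: (i) prove the wealth lower bound $\wealth_T \ge \wealth_0 \Psikt(\gv^T;h^T)$ by induction on $t$; (ii) convert it to the Fenchel-dual form via Fenchel--Young; (iii) bound $\phikt_{T_{1:S}}$ explicitly to obtain the closed-form rate.

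For step~(i), the key observation is that $\Psikt(\gv^t;h^t)$ differs from $\Psikt(\gv^{t-1};h^{t-1})$ only in the factor indexed by $s=h_t$, because only the state-$h_t$ subsequence is extended at round $t$. Since $\vvkt(\gv^{t-1};h^t) = \vvkt(\gv^{t-1}(h_t;h^{t-1}))$ is, by definition, the standard vectorial KT betting applied to that subsequence, the one-step KT property underlying Theorem~\ref{thm:continuous_coin_betting_to_Hilbert_OLO}, applied to the state-$h_t$ subsequence, gives
\[
(1+\langle\gv_t,\vvkt(\gv^{t-1};h^t)\rangle)\, \psikt_{t_{h_t}-1}(\|\nsum\gv^{t-1}(h_t;h^{t-1})\|) \ge \psikt_{t_{h_t}}(\|\nsum\gv^t(h_t;h^t)\|).
\]
Multiplying both sides by the unchanged factors $\prod_{s\ne h_t}\psikt_{t_s}(\|\nsum\gv^{t-1}(s;h^{t-1})\|)$ yields $(1+\langle\gv_t,\vvkt\rangle)\Psikt(\gv^{t-1};h^{t-1})\ge \Psikt(\gv^t;h^t)$, and the induction closes because $\wealth_t = \wealth_{t-1}(1+\langle\gv_t,\vvkt(\gv^{t-1};h^t)\rangle)$ follows directly from $\xvkt_t = \vvkt\,\wealth_{t-1}$.

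For step~(ii), I would decompose $\sum_t\langle\gv_t,\uv_{h_t}\rangle = \sum_s\langle\nsum\gv^T(s;h^T),\uv_s\rangle$ and apply Cauchy--Schwarz per state to get the upper bound $\sum_s F_s\|\uv_s\|$ with $F_s\defeq\|\nsum\gv^T(s;h^T)\|$. The map $(f_1,\ldots,f_S)\mapsto\prod_s\psikt_{T_s}(f_s)$ is convex, since each $\log\psikt_{T_s}$ is convex, so $\log\prod_s\psikt_{T_s}$ is a sum of convex functions and its exponential is in turn convex; the Fenchel--Young inequality therefore gives $\sum_s F_s\|\uv_s\| \le \wealth_0\Psikt(\gv^T;h^T) + \wealth_0\phikt_{T_{1:S}}(\|\uv\|_{1:S}/\wealth_0)$. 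Combining with the wealth bound from (i) and $\sum_t\langle\gv_t,\xv_t\rangle = \wealth_T-\wealth_0$ absorbs the $\Psikt$ term and leaves $\Reg \le \wealth_0 + \wealth_0\phikt_{T_{1:S}}(\|\uv\|_{1:S}/\wealth_0)$, matching the stated Fenchel-dual bound.

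Step~(iii) is the technical crux, and I expect it to be the main obstacle: the Fenchel dual of a product is not the sum of duals, so one cannot simply invoke Theorem~\ref{thm:continuous_coin_betting_to_1D_OLO} per state and sum. Instead, I would use the sub-Gaussian lower bound $\psikt_t(f) \gtrsim e^{f^2/(2t)}/\sqrt{t}$ that already underlies Theorem~\ref{thm:continuous_coin_betting_to_1D_OLO} to obtain $\prod_s\psikt_{T_s}(f_s) \gtrsim \exp(\sum_s f_s^2/(2T_s))/\prod_s\sqrt{T_s}$, and then collapse the multivariate supremum in the definition of $\phikt_{T_{1:S}}$ to a scalar one via Cauchy--Schwarz $\sum_s y_sf_s \le \sqrt{\sum_s T_sy_s^2}\cdot\sqrt{\sum_s f_s^2/T_s}$, reparameterizing by $u\defeq\sqrt{\sum_s f_s^2/T_s}$. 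The resulting scalar problem $\sup_u(au - e^{u^2/2}/C)$ with $a=\sqrt{\sum_s T_sy_s^2}$ and $C\propto\prod_s\sqrt{T_s}$ is identical in form to the one-dimensional optimization inside Theorem~\ref{thm:continuous_coin_betting_to_1D_OLO}, and produces $\phikt_{T_{1:S}}(\yb)\le \tilde O(\sqrt{\sum_s T_sy_s^2})$ with polylogarithmic factors from $\log C$ and $\log a$; substituting $\yb=\|\uv\|_{1:S}/\wealth_0$ yields the stated regret rate.
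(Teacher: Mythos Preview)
Your proposal is correct and follows essentially the same three-step architecture as the paper: per-state one-round bound to get the product wealth lower bound, a Fenchel--Young/duality conversion (the paper packages this as Proposition~\ref{prop:regret_to_wealth_side_information}), and then an explicit bound on the dual of the product of sub-Gaussian lower envelopes (the paper's Proposition~\ref{prop:conjugate_base_function_product}). The one noteworthy difference is your step~(iii): you reduce the multivariate dual to a scalar one via the weighted Cauchy--Schwarz $\sum_s y_sf_s\le\sqrt{\sum_s T_sy_s^2}\sqrt{\sum_s f_s^2/T_s}$ and reparametrize by $u$, whereas the paper solves the stationarity conditions of the product directly and observes they collapse to the same Lambert-$W$ equation; both routes yield the identical closed form, and your reduction is arguably the cleaner explanation of \emph{why} the product dual depends only on $\sum_s T_s y_s^2$.
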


\begin{example}
Recall the ``easy'' adversarial sequence $\gv^T = (\gv, -\gv, \gv,\dotsc,-\gv)$ for some $\gv\in \unitball$ previously considered in the introduction. 
For a side information $h_t=\sgn(\langle \gv_t, \fv \rangle)$ with some $\fv \in V$,  Theorem~\ref{thm:olo_side_information} states that $\Reg((\uv_+,\uv_-);\gv^T) = \tilde{O}((\|\uv_+\|+\|\uv_-\|)\sqrt{T})$, matching the regret guarantee of the optimally tuned per-state OGD up to logarithmic factors.
Overall, the regret guarantee against adaptive competitors for the per-state KT method implies a much larger overall reward than was achieved by an algorithm competing against static competitors.
\end{example}

\begin{remark}[Cost of noninformative side information]
Consider a scenario where competitors of the form $\uv_{1:S} =(\uv,\ldots,\uv)$ with some vector $\uv\in V$ perform best; in this case, an algorithm without adapting to side information may suffice for optimal regret guarantees. 
Even in such cases with \emph{noninformative} side information, the dominant factor in the regret remains the same as the regret guarantee with respect to the static competitor class, since $\sum_{s=1}^S T_s\|\uv_s\|^2=T\|\uv\|^2$.
\end{remark}

\begin{remark}[Effect of large $S$]
\label{rem:effect_large_state}
While side information with larger $S$ may provide more levels of granularity, too large $S$ may degrade the performance of the per-state algorithms.
Intuitively, if $S\gg 1$, it is likely that we will see each state only few times, which results in poor convergence for almost every state. These are also captured in the regret guarantee; we note that the hidden logarithmic factor of the regret bound~\eqref{eq:regret_product_potential}
might incur a multiplicative factor of at most $O(\sqrt{S})$.
Similarly, in the optimal regret attained by the per-state OGD, we have $O(\sum_{s=1}^S\|\uv_s\|\sqrt{T_s})\le O(\max_{s\in[S]}\|\uv_s\|\sqrt{ST})$.
\end{remark}

\subsection{OLO with Multiple Side Information via Mixture of Product Potentials}
\label{sec:olo_multiple_side_information}

Now suppose that multiple side information sequences $\{H^{(m)}=(h_t^{(m)}\in S^{(m)})_{t\ge 1}\suchthat m\in[M]\}$ are sequentially available;
for example, each $H^{(m)}$ can be either constructed based on a different quantizer $Q_m\suchthat V\to\{1,\bar{1}\}$ and/or based on the history $\gv_{t-D_m}^{t-1}$ of different lengths $D_m\ge 0$, each of which aims to capture a different structure of $(\gv_t)$. 
In this setting, we aim to minimize the \emph{worst} regret among all possible side information, \ie
\begin{align}
&\max_{m \in [M]} \Reg (\uv_{1:S^{(m)}}[H_m];\gv^T) \nonumber\\
&\quad= \sum_{t=1}^T \langle \gv_t, \wv_t \rangle - \min_{m \in [M]} \sum_{t=1}^T \langle \gv_t, \uv^{(H)}_{h_{mt}}\rangle,
\label{eq:regretDefMultipleSideInfo}
\end{align}
which is equivalent to aiming to follow the best side information in hindsight.

We first remark that \citet{Cutkosky2019} recently proposed a simple black-box meta algorithm that combines multiple OLO algorithms achieving the best regret guarantee, which can also be applied to solving this multiple side information problem.
For example, for algorithms $(\Ac_m)_{m\in[M]}$ each of which play an action $\wv_t^{(m)}$, the meta algorithm $\Ac$ which we refer to the \emph{addition} plays $\wv_t=\sum_{m=1}^M\wv_t^{(m)}$ and guarantees the regret \[\Reg_T^{\Ac}(\uv)\le \varepsilon +\min_{m\in[M]} \Reg_T^{\Ac_m}(\uv),\] 
provided that $\Ac_m$'s suffer at most constant regret $\varepsilon$ against $\uv=0$; the same guarantee also hold for adaptive competitors. 

Rather, we propose the following information theoretic solution.
For each side information sequence $H^{(m)}$, we can apply the per-state KT algorithm from the previous section, which guarantees the wealth lower bound $\wealth_0\Psikt(\gv^t;(h^{(m)})^t)$. 
To achieve the best among the per-state KT algorithms, we consider the \emph{mixture potential}
\[
\psimix(\gv^t;\hv^{t})
=\sum_{m=1}^M w_m \Psikt(\gv^t;(h^{(m)})^{t})
\]
for some $w_1,\ldots,w_M>0$ such that $\sum_{m=1}^M w_m=1$.
Here, $\hv_t\defeq (h_t^{(1)},\ldots,h_t^{(M)})$ denotes the side information vector revealed at time $t$. 
When there exists no prior belief on how useful each side information is, one can choose the uniform weight $w_1=\ldots=w_M=1/M$ by default. 
Now, define the \emph{vectorial mixture betting} given $\gv^{t-1}$ and $\hv^t$ as
\begin{align*}
&\vvmix(\gv^{t-1};\hv^t)\defeq  \frac{\uvmix(\gv^{t-1};\hv^{t})}{\psimix(\gv^{t-1};\hv^{t-1})},\quad\textnormal{where}\\
&\uvmix(\gv^{t-1};\hv^{t})\\
&\quad\defeq \sum_{m=1}^M w_m \Psikt(\gv^{t-1};(h^{(m)})^{t-1})\vvkt(\gv^{t-1};(h^{(m)})^t),
\end{align*}
and finally define the \emph{mixture OLO} algorithm by the action
\[\xvmix_t(\gv^{t-1};\hv^t)\defeq \vvmix(\gv^{t-1};\hv^t)\wealth_{t-1}.
\numberthis\label{eq:mixture_olo}\]
In the language of gambling, the mixture strategy bets by distributing her wealth based on the weights $w_m$'s to strategies, each of which is tailored to a side information sequence, and thus can guarantee at least $w_m$ times the cumulative wealth attained by the $m$-th strategy following $H^{(m)}$ for any $m\in [M]$.

\begin{theorem}
\label{thm:olo_multiple_side_information}
For any side information $H^{(1)},\ldots,H^{(M)}$ and any $\gv_1,\ldots,\gv_T\in \unitball$, the mixture OLO algorithm~\eqref{eq:mixture_olo} satisfies
$\wealth_T\ge \wealth_0 \psimix(\gv^T;\hv^T)$,
and moreover for any $m\in[M]$, we have
\begin{align*}
&\sup_{\uv_{1:S^{(m)}}} \Bigl\{\Reg(\uv_{1:S^{(m)}}[H^{(m)}]);\gv^T)
\\&\qquad\qquad
-w_m\wealth_0 \phikt_{T_{1:S^{(m)}}}\Bigl(\frac{\|\uv\|_{1:S^{(m)}}}{w_m\wealth_0}\Bigr)\Bigr\} 
\le w_m\wealth_0.
\end{align*}
In other words, for any $m$ and any $\uv_{1:S^{(m)}}$, we have
\begin{align*}
&\Reg(\uv_{1:S^{(m)}}[H_m];\gv^T)\\
&\quad= w_m\wealth_0  + \tilde{O}\Biggl(\sqrt{\Bigl(\ln \frac{1}{w_m}\Bigr)\sum_{s_m = 1}^{S_m} T_{s_m}^{(H_m)}\|\uv_{s_m}^{(H_m)}\|^2} \Biggr).
\end{align*}
\end{theorem}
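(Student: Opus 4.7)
The plan is to mirror the two-step structure in the proofs of Theorems~\ref{thm:continuous_coin_betting_to_Hilbert_OLO} and~\ref{thm:olo_side_information}: first establish the wealth lower bound $\wealth_T \ge \wealth_0 \psimix(\gv^T; \hv^T)$, then derive both Fenchel-dual regret inequalities from it.

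For the wealth bound I would proceed by induction on $T$. The update rule~\eqref{eq:mixture_olo} gives $\wealth_t = \wealth_{t-1}\bigl(1 + \langle \gv_t, \vvmix(\gv^{t-1}; \hv^t)\rangle\bigr)$, so it suffices to establish the one-step potential inequality
\[
\psimix(\gv^t; \hv^t) \le \psimix(\gv^{t-1}; \hv^{t-1})\bigl(1 + \langle \gv_t, \vvmix(\gv^{t-1}; \hv^t)\rangle\bigr).
\]
Expanding $\psimix$ as $\sum_m w_m \Psikt(\gv^t; (h^{(m)})^t)$ and unfolding the definitions of $\vvmix$ and $\uvmix$, this collapses into the $w_m$-weighted sum over $m$ of the per-side-information one-step inequality
\[
\Psikt(\gv^t; (h^{(m)})^t) \le \Psikt(\gv^{t-1}; (h^{(m)})^{t-1})\bigl(1 + \langle \gv_t, \vvkt(\gv^{t-1}; (h^{(m)})^t)\rangle\bigr),
\]
each of which is the potential contraction used inside the proof of Theorem~\ref{thm:olo_side_information}, ultimately derived from the Hilbert-space KT inequality of Theorem~\ref{thm:continuous_coin_betting_to_Hilbert_OLO} applied to the subsequence associated to the active state $h_t^{(m)}$. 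A side check that $\|\vvmix_t\| \le 1$ --- immediate from the triangle inequality together with $\|\vvkt\| \le 1$ --- ensures that the multiplier $1 + \langle \gv_t, \vvmix_t\rangle$ is nonnegative, so the induction propagates.

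For the regret bounds I would fix $m \in [M]$ and use the trivial inequality $\psimix(\gv^T; \hv^T) \ge w_m \Psikt(\gv^T; (h^{(m)})^T)$, giving $\wealth_T \ge w_m \wealth_0 \prod_s \psikt_{T_s^{(m)}}(\|\Sigma_s^{(m)}\|)$ where $\Sigma_s^{(m)} \defeq \nsum \gv^T(s; (h^{(m)})^T)$. Bounding the adaptive competitor's reward by Cauchy--Schwarz as $\sum_s \langle \Sigma_s^{(m)}, \uv_s\rangle \le \sum_s \|\Sigma_s^{(m)}\|\|\uv_s\|$, and applying the Fenchel--Young inequality to the function $(f_1,\ldots,f_{S^{(m)}}) \mapsto \prod_s \psikt_{T_s^{(m)}}(f_s)$ with scaling factor $c = w_m \wealth_0$, the contribution of $\prod_s \psikt_{T_s^{(m)}}(\|\Sigma_s^{(m)}\|)$ cancels against the wealth lower bound, leaving the claimed supremum inequality. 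The explicit $\tilde{O}$ bound then follows by specializing $\phikt_{T_{1:S^{(m)}}}$ at the scaled argument $\|\uv\|_{1:S^{(m)}}/(w_m \wealth_0)$ as in the proof of Theorem~\ref{thm:olo_side_information}; the resulting $1/w_m^2$ that appears inside the logarithm expands additively into $2\ln(1/w_m)$, producing the $\sqrt{\ln(1/w_m)}$ factor in the final regret bound.

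The main obstacle is the inductive potential inequality in the first step. Conceptually it is the analogue of the universal-source-coding fact that a Bayesian mixture of probability assignments inherits every per-model redundancy guarantee; the technical content here is that the same phenomenon lifts from probability densities to Hilbert-space coin-betting potentials, which is exactly what the definition of $\vvmix$ as a potential-weighted mixture of the per-$m$ KT bets enforces by construction. Once that inequality is in hand, the rest is routine Fenchel-duality bookkeeping repeating the pattern of the earlier theorems.
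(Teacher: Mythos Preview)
Your proposal is correct and follows essentially the same approach as the paper's proof: induction on $t$ for the wealth lower bound via the one-step mixture-potential inequality (which the paper compresses into the single phrase ``follows from the construction of $\vvmix$''), followed by dropping all but the $m$-th summand and invoking the Fenchel-duality machinery of Theorem~\ref{thm:olo_side_information}. Your explicit verification that $\|\vvmix_t\|\le 1$ (needed so that the nonnegative multiplier $1+\langle\gv_t,\vvmix_t\rangle$ preserves the induction inequality) is a detail the paper leaves implicit.
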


\begin{remark}[Cost of mixture]
A mixture strategy adapts to any available side information with the cost of replacing $\wealth_0$ with $w_m\wealth_0$ in the regret guarantee for each $m\in[M]$. 
Since the dependence of regret on $\wealth_0$ scales as $O(\sqrt{\ln(1 + 1/\wealth_0)}+\wealth_0)$ from Theorem~\ref{thm:olo_side_information}, 
a small $w_m$ may degrade the quality of the regret guarantee by only a small multiplicative factor $O(\sqrt{\ln(1/w_m)})$.
\end{remark}
\begin{remark}[Comparison to the addition technique]
While the mixture algorithm attains a similar guarantee to the addition technique~\citep{Cutkosky2019}, it is only applicable to coin betting based algorithms and requires a rather sophisticated aggregation step.
Thus, if there are only moderate number of side information sequences, the addition of per-state parameter-free algorithms suffices.
The merit of mixture will become clear in the next section in the tree side information problem of combining $O(2^{2^{D}})$ many components for a depth parameter $D\ge 1$, while a naive application of the addition technique to the tree problem is not feasible due to the number of side information; see Section~\ref{sec:conclusion} for an alternative solution with the addition technique. 
\end{remark}

\subsection{OLO with Tree Side Information}
\label{sec:olo_tree_side_info}
In this section, we formally define and study a tree-structured side information $H$, which was illustrated in the introduction.
We suppose that there exists an auxiliary binary sequence $\Omega=(\omega_t\in\{\pm 1\})_{t\ge 1}$, which is revealed one-by-one at the \emph{end} of each round; hence, a learner has access to $\omega^{t-1}$ when deciding an action at round $t$.
In the motivating problem in the introduction, such an auxiliary sequence was constructed as $\omega_t\defeq Q(\gv_{t})$ with a fixed binary quantizer $Q\suchthat V\to\{\pm1\}$.


\subsubsection{Markov Side Information}\label{sec:markov_side_info}

Given $\Omega=(\omega_t)_{t\ge 1}$, the most natural form of side information is the \emph{depth-$D$ Markov side information} $h_t\defeq \omega_{t-D}^{t-1}\in \{\pm1\}^D$, \ie the last $D$ bits of $(\omega_t)_{t\ge 1}$---note that it can be mapped into a perfect binary tree of depth $D$ with $2^D$ possible states.

\begin{example}
As an illustrative application of the mixture algorithm and a precursor to the tree side information problem, suppose that we wish to compete with any Markov side information of depth $\le D$.
Then, there are $D+1$ different side information, one for each depth $d=0,\ldots,D$; for simplicity, assume uniform weights $w_d=1/(D+1)$ for each depth $d$.
Then, Theorem~\ref{thm:olo_multiple_side_information} guarantees that the mixture OLO algorithm~\eqref{eq:mixture_olo} satisfies, for any depth $d=0,\ldots,D$, 
\begin{align*}
\Reg(\uv_{1:2^d}^{(d)};\gv^T)
&= \frac{\wealth_0}{D+1}  + \tilde{O}\Biggl(\sqrt{\ln (D+1)\sum_{s = 1}^{2^d} T_{s}^{(d)}\|\uv_{s}^{(d)}\|^2} \Biggr)
\end{align*}
for any competitor $\uv_{1:2^d}^{(d)}\in V^{2^d}$, where we identify $2^d$ possible states by $1,\ldots,2^d$ and $T_s^{(d)}$ is the number of time steps with $s$ as side information.
\end{example}

While a larger $D$ can capture a longer dependence in the sequence, however, the performance of a per-state algorithm could significantly degrade due to the exponential number of states as pointed out in Remark~\ref{rem:effect_large_state}.

\subsubsection{Tree-Structured Side Information}\label{sec:tree_side_info}

The limitation of Markov side information motivates a general \emph{tree-structured side information} (or tree side information in short). 
Informally, we say that a sequence has a \emph{depth-$D$ tree structure} if the state at time $t$ depends on at most $D$ of the previous occurrences, corresponding to a full binary tree of depth $D$; see Figure~\ref{fig:suffix_tree}.
This degree of freedom allows to consider different lengths of history for each state, leading to the terminology \emph{variable-order Markov structure}, as opposed to the previous \emph{fixed-order Markov structure}.
If an underlying structure is approximately captured by a tree structure of depth $D$ with the number of leaves far fewer than $2^D$, the corresponding per-state algorithm can enjoy a much lower regret guarantee.



We now formally define a tree side information.
We say that a string $\omega_{1-l}\omega_{2-l}\ldots \omega_0$ is a \emph{suffix} of a string $\omega_{1-l'}'\omega_{2-l'}'\ldots \omega_0'$, if $l\le l'$ and $\omega_{-i}=\omega_{-i}'$ for all $i\in\{0,\ldots,l-1\}$.
Let $\lambda$ denote the empty string.
We define a \emph{(binary) suffix set} $\Tree$ as a set of binary strings that satisfies the following two properties~\citep{Willems--Shtarkov--Tjalkens1995}:
(1) Properness: no string in $\Tree$ is a suffix of any other string in $\Tree$; 
(2) Completeness: every semi-infinite binary string $\ldots h_{t-2}h_{t-1}h_t$ has a suffix from $\Tree$.
Since there exists an one-to-one correspondence between a binary suffix set and a full binary tree, we also call $\Tree$ a \emph{suffix tree}.
Given $D \ge 0$, let $\Tscr_{\le D}$ denote the set of all suffix trees of depth at most $D$. 

For a suffix tree $\Tree\in\Tscr_{\le D}$, we define a \emph{tree side information} $H_{\Tree;\Omega}$ with respect to $\Tree$ and $\Omega=(\omega_t)_{t\ge 1}$ as the matching suffix from the auxiliary sequence.
We can also identify $h_t$, the tree side information defined by $\Tree$ at time $t$, with a unique leaf node $s_t^{\Tree} \in \Tree$. 
For example, if a suffix set $\Tree$ consists of all possible $2^D$ binary strings of length $D\ge 1$, then it boils down to the fixed-order Markov case $h_t=\omega_{t-D}^{t-1}$.

For a single tree $\Tree$, the goal is to keep the regret
\[
\Reg(\uv[\Tree];\gv^T) \defeq \sum_{t=1}^T \langle\gv_t, \wv_t - \uv^{\Tree}_{s_t^{\Tree}} \rangle
\]
small for any competitor $\uv[\Tree] \defeq (\uv_s^{\Tree})_{s \in \Tree}$. 
In the next two subsections, we aim to follow the performance of the \emph{best suffix tree} of depth at most $D$, or equivalently, to keep the worst regret $\max_{\Tree \in \Tscr_{\le D}} \Reg_{\Ac}(\uv[\Tree];\gv^t)$ small for any collection of competitors $(\uv[\Tree])_{\Tree \in \Tscr_{\le D}}$. 

\begin{remark}[Matching Lower Bound]
When the auxiliary sequence $\Omega$ is constructed from a binary quantizer $Q$ with the history $\gv^{t-1}$ as mentioned earlier, we can show an optimality of the per-state KT algorithm in Section~\ref{sec:olo_side_information} for a single tree by establishing a matching regret lower bound extending the technique of \citet[Theorem~5.12]{Orabona2019}; see Appendix~\ref{supp:sec:lower_bound}.
\end{remark}

Below, we will use the \emph{tree potential} with respect to $\Tree$ and $\Omega$ defined as
\[
\Psikt(\gv^t;\Tree,\Omega)\defeq \prod_{s\in\Tree}\Psikt(\gv^t(s;\Omega)),
\]
where we write $s\in\Tree$ for any leaf node $s$ of the tree $\Tree$ with a slight abuse of notation and we define \[\gv^t(s;\Omega)\defeq (\gv_i\suchthat \textnormal{$s$ is a suffix of $\omega_{i-D}^{i-1}$, $1\le i\le t$}).\]
From now on, we will hide any dependence on $\Omega$ whenever the omission does not incur confusion.

\subsubsection{Context Tree Weighting for OLO with Tree Side Information}
\label{sec:ctw_olo}

To compete against the best competitor adaptive to \emph{any} tree side information of depth $\le D$, a natural solution 
is to consider a mixture of all tree potentials; note, however, that there are doubly-exponentially many $O(2^{2^{D}})$ possible suffix trees of depth $\le D$, and thus it is not computationally feasible to compute such a mixture naively.
Instead, inspired by the context tree weighting (CTW) probability assignment of \citet{Willems--Shtarkov--Tjalkens1995}, we analogously define the CTW potential as
$\Psictw(\gv^t)\defeq \Psictw_{\lambda}(\gv^t)$ with a recursive formula
\begin{align*}
\small
&\Psictw_s(\gv^t)
\numberthis
\label{eq:psictw_recursion}
\\&
\defeq\begin{cases}
\half\Psikt_s(\gv^t)
+\half\Psictw_{\bar{1} s}(\gv^t)\Psictw_{1 s}(\gv^t) & \textnormal{if }|s|<D\\
\Psikt_s(\gv^t) & \textnormal{if }|s|=D
\end{cases}
\end{align*}
\begin{wrapfigure}{r}{0.225\textwidth}
  \vspace{-1.5em}
  \begin{center}
    \includegraphics[width=.225\textwidth]{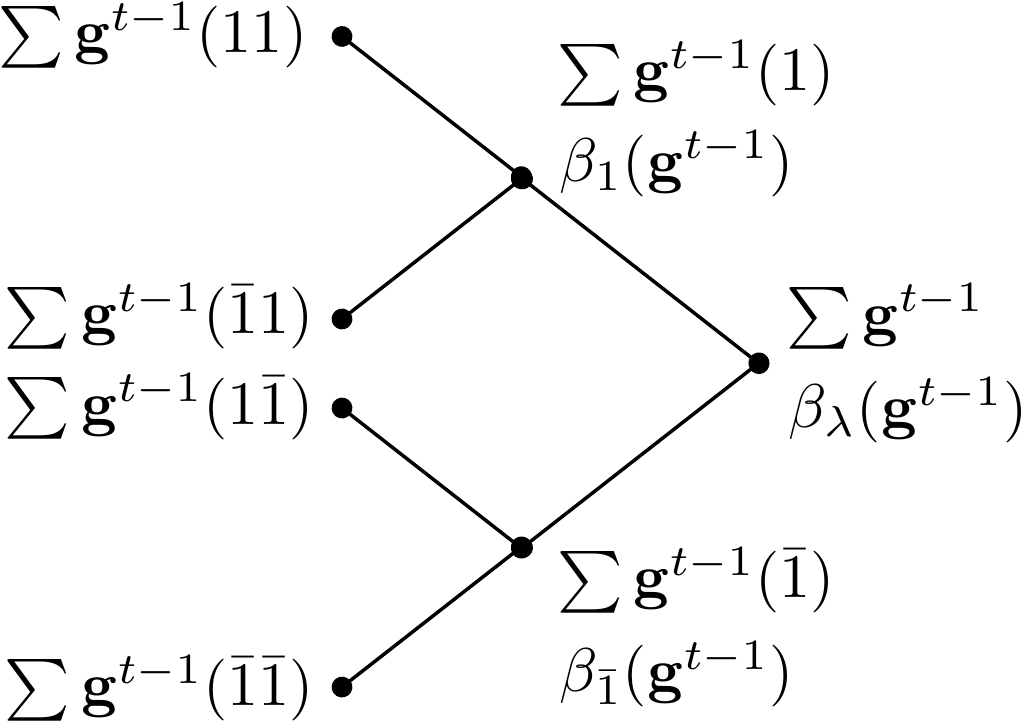}
  \end{center}
  \caption{A context tree of depth 2.}
  \label{fig:context_tree}
\end{wrapfigure}
for any binary string $s$ of length $\le D$ and $\Psikt_s(\gv^t)\defeq \Psikt(\gv^t(s))$.
Conceptually, this recursion can be performed over the perfect suffix tree of depth $D$, which we denote by $\Tc_D$ and call the context tree of depth $D$; see Figure~\ref{fig:context_tree} for the context tree of depth $D=2$.
Following the same logic of \citet{Willems--Shtarkov--Tjalkens1995}, one can easily show that
\[
\Psictw(\gv^t)=\sum_{\Tree\in\Tscr_{\le D}} w(\Tree)\Psikt(\gv^t;\Tree)
\]
for $w(\Tree)=2^{-\Gamma_D(\Tree)}$, where
$\Gamma_D(\Tree)\defeq 2|\Tree|-1-|\{s\in\Tree\suchthat |s|=D\}|$ is a complexity measure of a full binary tree $\Tree$ of depth $\le D$, 
$|\Tree|$ denotes the number of leaf nodes of a full binary tree $\Tree$, and $\Tscr_{\le D}$ denotes the set of all suffix trees of depth $\le D$.

For a path $\rho$ from the root to a leaf node of $\Tc_D$ and a full binary tree $\Tree$, we let $s_{\Tree}(\rho)$ denote the unique leaf node of $\Tree$ that intersects with the path $\rho$.
We also define $\vvkt(\gv^{t-1};\Tree)\defeq \vvkt(\gv^{t-1}(s_{\Tree}(\omega_{t-D}^{t-1})))$.
Then, based on the construction of the vectorial betting for a mixture potential in Section~\ref{sec:olo_multiple_side_information}, we define the vectorial CTW betting
\begin{align*}
\vvctw(\gv^{t-1})&\defeq \frac{\uvctw(\gv^{t-1})}{\Psictw(\gv^{t-1})},
\qquad\textnormal{where}
\numberthis
\label{eq:def_uvctw}
\\
\uvctw(\gv^{t-1})
&\defeq \sum_{\Tree\in\Tscr_{\le D}} w(\Tree)\Psikt(\gv^{t-1};\Tree) \vvkt(\gv^{t-1};\Tree),
\end{align*}
then we define the CTW OLO algorithm as the action
\[
\xvctw(\gv^{t-1})\defeq \vvctw(\gv^{t-1})\wealth_{t-1}(\gv^{t-1}).
\numberthis
\label{eq:ctw_olo}
\]
By Theorem~\ref{thm:olo_multiple_side_information}, we readily have the regret guarantee of the CTW OLO algorithm as follows:

\begin{corollary}
\label{cor:olo_ctw}
Let $D\ge 0$ be fixed.
For any $\gv_1,\ldots,\gv_T\in \unitball$, the CTW OLO algorithm~\eqref{eq:ctw_olo} satisfies
$\wealth_T\ge \wealth_0 \Psictw(\gv^T)$.
Moreover, we have
\begin{align*}
&\Reg(\uv[\Tree];\gv^T)\\
&= w(\Tree)\wealth_0 +  \tilde{O}\Biggl(\sqrt{\Bigl(\ln \frac{1}{w(\Tree)}\Bigr)\sum_{s\in\Tree} T_{s}^{\Tree}\|\uv_{s}^{\Tree}\|^2} \Biggr)
\end{align*}
for any tree $\Tree\in\Tscr_{\le D}$, where 
$T_{s}^{\Tree}$ denotes the number of occurrences of a side information symbol $s\in\Tree$ with respect to the tree side information $H_{\Tree;\Omega}$.
\end{corollary}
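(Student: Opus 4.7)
The plan is to derive Corollary~\ref{cor:olo_ctw} directly from Theorem~\ref{thm:olo_multiple_side_information} by recognizing the CTW OLO algorithm as a particular instance of the mixture OLO algorithm of Section~\ref{sec:olo_multiple_side_information}: its constituent algorithms are the per-state KT bettors of Section~\ref{sec:olo_single_side_information}, one for each tree side information $H_{\Tree;\Omega}$ with $\Tree\in\Tscr_{\le D}$, mixed with weights $w(\Tree)=2^{-\Gamma_D(\Tree)}$.

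The first step is to verify two structural properties: that $(w(\Tree))_{\Tree\in\Tscr_{\le D}}$ is a probability distribution on $\Tscr_{\le D}$, and that the CTW potential admits the mixture representation $\Psictw(\gv^t)=\sum_{\Tree\in\Tscr_{\le D}} w(\Tree)\Psikt(\gv^t;\Tree)$. Both follow by a routine induction on the depth $D$ unrolling the CTW recursion~\eqref{eq:psictw_recursion}, in exactly the manner of \citet{Willems--Shtarkov--Tjalkens1995}: each internal node of $\Tree$ and each leaf at depth $<D$ contributes a factor of $\half$, which yields $w(\Tree)=2^{-\Gamma_D(\Tree)}$ after accounting for the identity $\Gamma_D(\Tree)=2|\Tree|-1-|\{s\in\Tree\suchthat|s|=D\}|$. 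Specializing the recursion with $\Psikt_s\equiv 1$ (so $\Psictw\equiv 1$) then gives $\sum_\Tree w(\Tree)=1$.

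The second step is to identify $\vvctw$ with the mixture betting $\vvmix$ instantiated with the family $\{H_{\Tree;\Omega}\suchthat\Tree\in\Tscr_{\le D}\}$ and weights $w(\Tree)$. This is by direct inspection of~\eqref{eq:def_uvctw}: the numerator $\uvctw$ is exactly the mixture numerator $\uvmix$ of Section~\ref{sec:olo_multiple_side_information} under the relabeling $m\leftrightarrow\Tree$, and the denominator $\Psictw(\gv^{t-1})$ matches $\psimix(\gv^{t-1};\hv^{t-1})$ by the first step. Hence the CTW action~\eqref{eq:ctw_olo} literally coincides with the mixture action~\eqref{eq:mixture_olo} applied to this family.

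With these identifications in hand, Theorem~\ref{thm:olo_multiple_side_information} delivers the corollary: the wealth bound $\wealth_T\ge\wealth_0\Psictw(\gv^T)$ is immediate, and for any fixed $\Tree\in\Tscr_{\le D}$ the per-tree regret bound is the theorem specialized to index $m=\Tree$ with $w_m=w(\Tree)$, noting that the state visit counts $T_s^{\Tree}$ are exactly the counts for the per-state KT bettor on $H_{\Tree;\Omega}$. The main non-routine step is the mixture representation of $\Psictw$, but it is well-established in the CTW literature; once granted, the remainder is pure substitution into Theorem~\ref{thm:olo_multiple_side_information}.
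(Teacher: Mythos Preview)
Your proposal is correct and matches the paper's own argument: the paper states just before the corollary that ``By Theorem~\ref{thm:olo_multiple_side_information}, we readily have the regret guarantee of the CTW OLO algorithm as follows,'' and the preceding text already records the mixture representation $\Psictw(\gv^t)=\sum_{\Tree\in\Tscr_{\le D}} w(\Tree)\Psikt(\gv^t;\Tree)$ and the definition of $\vvctw$ as the corresponding mixture betting, so the corollary is exactly the specialization you describe. If anything, you are slightly more explicit than the paper in verifying that $\sum_\Tree w(\Tree)=1$ and that the actions coincide, but the route is the same.
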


Hence, the CTW OLO algorithm~\eqref{eq:ctw_olo} can tailor to the best tree side information in hindsight.
Now, the remaining question is: can we \emph{efficiently} compute the vectorial CTW betting~\eqref{eq:def_uvctw}?
As a first attempt, the summation over the trees $\Tree\in\Tscr_{\le D}$ in \eqref{eq:def_uvctw} can be naively computed via a similar recursive formula as \eqref{eq:psictw_recursion}.
We define
\[\rho(\omega_{t-D}^{t-1})\defeq \{\lambda,\omega_{t-1},\ldots,\omega_{t-D}^{t-1}\}\] and call the \emph{active nodes} given the side information suffix $\omega_{t-D}^{t-1}$.

\begin{proposition}
\label{prop:ctw_update_validity}
For each node $s$ of $\Tc_D$, define
{\small
\begin{align*}
\uvctw_s(\gv^{t-1})
&\defeq
\begin{cases}
\half\Psikt_s(\gv^{t-1}) \vvkt_s(\gv^{t-1})\\ \quad+\half\uvctw_{\bar{1}s}(\gv^{t-1})\uvctw_{1s}(\gv^{t-1}) & \textnormal{if }|s|<D,\\
\Psikt_s(\gv^{t-1}) \vvkt_s(\gv^{t-1}) & \textnormal{if }|s|=D,
\end{cases}
\\
\vvkt_s(\gv^{t-1})
&\defeq 
\begin{cases}
\vvkt(\gv^{t-1}(s)) & \textnormal{if $s\in\rho(\omega_{t-D}^{t-1})$}\\
1 & \textnormal{otherwise.}
\end{cases}
\numberthis
\label{eq:uvctw_recursion}
\end{align*}}%
Then, the recursion is well-defined, and
$\uvctw_\lambda(\gv^{t-1})=\uvctw(\gv^{t-1})$.
\end{proposition}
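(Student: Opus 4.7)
The plan is to verify that the recursion~\eqref{eq:uvctw_recursion} is well-defined and then prove $\uvctw_\lambda(\gv^{t-1}) = \uvctw(\gv^{t-1})$ by downward induction on $|s|$ along the context tree $\Tc_D$. Well-definedness is immediate: for every node $s$ of $\Tc_D$, both $\Psikt_s(\gv^{t-1}) = \Psikt(\gv^{t-1}(s))$ and $\vvkt(\gv^{t-1}(s))$ are defined via the Gamma-function interpolation of Section~\ref{sec:coin_1d} (with the empty-subsequence convention yielding $\vvkt = 0$), and the stipulation $\vvkt_s = 1$ for $s \notin \rho(\omega_{t-D}^{t-1})$ removes any ambiguity on the inactive branches.

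For the identity itself I would prove a two-part invariant at every node $s$ of $\Tc_D$: (i) if $s \notin \rho(\omega_{t-D}^{t-1})$, then $\uvctw_s(\gv^{t-1}) = \Psictw_s(\gv^{t-1})$; (ii) if $s \in \rho(\omega_{t-D}^{t-1})$, then $\uvctw_s(\gv^{t-1})$ is the mixture, over suffix trees $\Tree^{(s)}$ rooted at $s$ of depth at most $D - |s|$ with weights $2^{-\Gamma_{D-|s|}(\Tree^{(s)})}$, of the product of $\Psikt$ over all leaves of $\Tree^{(s)}$ times the $\vvkt$-factor at the unique leaf of $\Tree^{(s)}$ that lies on $\rho$. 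The base case $|s| = D$ is immediate from the second line of~\eqref{eq:uvctw_recursion} together with the singleton tree $\{s\}$. For the inductive step with $|s| < D$, if $s \notin \rho$ then neither child lies on $\rho$ (since any descendant suffix of $s$ on $\rho$ would force $s$ onto $\rho$); by (i) applied to both children together with $\vvkt_s = 1$, the recursion~\eqref{eq:uvctw_recursion} collapses to the CTW recursion~\eqref{eq:psictw_recursion} and delivers $\uvctw_s = \Psictw_s$. If instead $s \in \rho$, exactly one child, namely $\omega_{t-|s|-1}s$, lies on $\rho$ and the other does not; the first recursive term $\half \Psikt_s \vvkt(\gv^{t-1}(s))$ accounts for the single-leaf subtree $\{s\}$, while $\half \uvctw_{\bar{1}s} \uvctw_{1s}$ multiplies the inactive child's pure $\Psictw$-mass from (i) against the active child's inductive mixture from (ii), which already carries the betting factor on the unique leaf continuation; the $\half$-splits then reproduce the CTW weights $2^{-\Gamma_{D-|s|}(\cdot)}$ exactly as in the classical argument of~\citet{Willems--Shtarkov--Tjalkens1995}.

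Applying (ii) at the root $s = \lambda$, which is always on $\rho$, yields $\uvctw_\lambda(\gv^{t-1}) = \sum_{\Tree \in \Tscr_{\le D}} w(\Tree) \Psikt(\gv^{t-1};\Tree) \vvkt(\gv^{t-1};\Tree)$, which matches the defining expression~\eqref{eq:def_uvctw} of $\uvctw$. The main obstacle is the bookkeeping of the vector-valued factor: the classical CTW identity only handles scalar potentials, so the novelty here is to thread a single $\vvkt$ factor along the unique active path from root to leaf while ensuring that every one of the $D$ branches off this path contributes only its scalar $\Psictw$ mass, and that these two mechanisms remain compatible under the $\half$-splits that generate the CTW weights. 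Once the two-part invariant is set up, the induction is a routine extension of the standard CTW weighting lemma.
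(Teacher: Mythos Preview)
Your approach is correct and is essentially the same backward induction on depth, splitting into on-path and off-path nodes, that the paper uses; your two-part invariant (i)/(ii) is exactly the ``similar induction argument'' that the paper only alludes to for the identity $\uvctw_\lambda=\uvctw$. One small quibble: your claim that ``well-definedness is immediate'' glosses over the only point that actually needs arguing, namely that the product $\uvctw_{\bar{1}s}(\gv^{t-1})\,\uvctw_{1s}(\gv^{t-1})$ is never a product of two vectors (and that the sum in the recursion is always between like types); this is precisely what the paper's proof spells out via the type-checking induction, and it is also what your invariant delivers once you note that for any $s$ at most one of $\bar{1}s,1s$ lies on $\rho(\omega_{t-D}^{t-1})$.
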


While the recursions~\eqref{eq:psictw_recursion} and \eqref{eq:uvctw_recursion} take $O(2^D)$ steps for computing a mixture of $O(2^{2^D})$ many tree potentials, they are still not feasible as an online algorithm even for a moderate $D$.
In the next section, we show that the per-round time complexity $O(2^D)$ can be significantly improved to $O(D)$ by exploiting the tree structure further.

\subsubsection{The Efficient CTW OLO Algorithm with \texorpdfstring{$O(D)$}{O(D)} Steps Per Round}
\label{sec:ctw_olo_efficient_updates}

\paragraph{(1) Compute $\vvctw$ in $O(D)$ steps}
The key idea is that, given the suffix $\omega_{t-D}^{t-1}$, the vector betting $\vvctw=\uvctw/\Psictw$ can be computed efficiently via the recursive formulas~\eqref{eq:psictw_recursion} and \eqref{eq:uvctw_recursion}, by only traversing the active nodes $\rho(\omega_{t-D}^{t-1})= \{\lambda,\omega_{t-1},\ldots,\omega_{t-D}^{t-1}\}$ in the context tree $\Tc_D$.
In order to do so, we define
\[
\b_s(\gv^{t-1})\defeq \frac{\Psikt_s(\gv^{t-1})}{\Psictw_{\bar{1}s}(\gv^{t-1})\Psictw_{1s}(\gv^{t-1})}
\numberthis
\label{eq:def_beta}
\]
for every \emph{internal} node $s$ of $\Tc_D$.
\begin{restatable}{proposition}{PropVvctw}
\label{prop:vvctw}
Define
\begin{align*}
&\vvctw_{s_d}(\gv^{t-1})\\
&\defeq
\begin{cases}
\frac{\b_{s_d}(\gv^{t-1})}{\b_{s_d}(\gv^{t-1})+1} \vvkt_{s_d}(\gv^{t-1}) \\
\quad+ \frac{1}{\b_{s_d}(\gv^{t-1})+1} \vvctw_{s_{d+1}}(\gv^{t-1}) & \textnormal{if }d<D \\
\vvkt_{s_D}(\gv^{t-1}) & \textnormal{if }d=D
\end{cases}
\numberthis
\label{eq:vvctw_recursion}
\end{align*}
for $s_d=\omega_{t-d}^{t-1}\in\Tc_D$, $d=0,\ldots,D$. Then, $\vvctw(\gv^{t-1})=\vvctw_{\lambda}(\gv^{t-1})$.
\end{restatable}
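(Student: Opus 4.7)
The plan is to reduce the full $O(2^D)$ recursion from Proposition~\ref{prop:ctw_update_validity} to a linear traversal of the active path $\rho(\omega_{t-D}^{t-1}) = \{s_0, s_1, \ldots, s_D\}$ with $s_d = \omega_{t-d}^{t-1}$, and then divide by $\Psictw_{s_d}$ to obtain $\vvctw_{s_d}$. The argument hinges on isolating which nodes of $\Tc_D$ actually carry nontrivial information from $\vvkt_s$.

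First, I would establish the following key claim: for every node $s$ of $\Tc_D$ that lies \emph{off} the active path $\rho(\omega_{t-D}^{t-1})$, we have $\uvctw_s(\gv^{t-1}) = \Psictw_s(\gv^{t-1})$. The proof is a downward induction on $D - |s|$. The base case $|s| = D$ is immediate from \eqref{eq:uvctw_recursion} using $\vvkt_s = 1$. For the inductive step, note that if $s \notin \rho$ then both children $\bar{1}s$ and $1s$ are also off $\rho$ (since $\rho$ is a single root-to-leaf path), so the induction hypothesis applies to both, and substituting into the recursion for $\uvctw_s$ reproduces the recursion \eqref{eq:psictw_recursion} for $\Psictw_s$.

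Next, I would walk down the active path. For $s_d$ with $d < D$, exactly one of its two children equals $s_{d+1}$ (still on $\rho$); denote the other, off-path child by $s_d'$. By the claim, $\uvctw_{s_d'} = \Psictw_{s_d'}$. Substituting into \eqref{eq:uvctw_recursion} gives
\[
\uvctw_{s_d} = \tfrac{1}{2}\Psikt_{s_d}\,\vvkt_{s_d} + \tfrac{1}{2}\,\uvctw_{s_{d+1}}\,\Psictw_{s_d'},
\]
while \eqref{eq:psictw_recursion} yields
\[
\Psictw_{s_d} = \tfrac{1}{2}\Psikt_{s_d} + \tfrac{1}{2}\Psictw_{s_{d+1}}\Psictw_{s_d'}.
\]
Dividing the first by the second and factoring $\Psictw_{s_{d+1}}\Psictw_{s_d'}$ out of numerator and denominator, then using the definition $\b_{s_d} = \Psikt_{s_d}/(\Psictw_{\bar{1}s_d}\Psictw_{1s_d}) = \Psikt_{s_d}/(\Psictw_{s_{d+1}}\Psictw_{s_d'})$ from \eqref{eq:def_beta}, gives
\[
\vvctw_{s_d} \;=\; \frac{\b_{s_d}\,\vvkt_{s_d} + \vvctw_{s_{d+1}}}{\b_{s_d} + 1},
\]
which is exactly the $d < D$ branch of \eqref{eq:vvctw_recursion}. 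The case $d = D$ is trivial: $\uvctw_{s_D} = \Psikt_{s_D}\vvkt_{s_D}$ and $\Psictw_{s_D} = \Psikt_{s_D}$, so $\vvctw_{s_D} = \vvkt_{s_D}$. Finally, $s_0 = \lambda$ lies on the active path, so $\vvctw_\lambda = \uvctw_\lambda/\Psictw_\lambda = \vvctw(\gv^{t-1})$ by Proposition~\ref{prop:ctw_update_validity} and the definition of $\vvctw$.

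The main (and essentially only) obstacle is the off-path reduction in the first step; everything after that is an algebraic simplification driven by the definition of $\b_s$. A minor bookkeeping point is that when $\Psictw_{s_{d+1}}\Psictw_{s_d'} = 0$ the factoring above is not literally justified, but such degeneracies can be sidestepped by noting that $\Psikt_s > 0$ for all relevant $s$ (each $\qtkt$ is strictly positive in the open interval), so $\b_{s_d}$ and the denominators are well-defined throughout.
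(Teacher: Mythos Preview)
Your proof is correct and follows essentially the same approach as the paper: first show that off-path nodes satisfy $\uvctw_s=\Psictw_s$, then divide the recursions for $\uvctw_{s_d}$ and $\Psictw_{s_d}$ along the active path and simplify using the definition of $\b_{s_d}$. Your treatment is in fact more explicit than the paper's, which dispatches the off-path identity in a single line ``by definition~\eqref{eq:uvctw_recursion}'' rather than spelling out the downward induction.
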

Hence, if we can store $\sum\gv^{t-1}(s)$ and the value $\b_s(\gv^{t-1})$ as defined in \eqref{eq:def_beta} for every node $s$ of $\Tc_D$, we can compute $\vvctw$ in $O(D)$.

\paragraph{(2) Update \texorpdfstring{$\b_s$}{beta\_s} in $O(D)$ steps}
Upon receiving $\gv_t$, we need to update $\b_{s_d}(\gv^{t-1})$ as
\small
\[
\b_{s_d}(\gv^t)
=\b_{s_d}(\gv^{t-1}) \frac{\Psikt_{s_d}(\gv^t)}{\Psikt_{s_d}(\gv^{t-1})} \frac{\Psictw_{s_{d+1}}(\gv^{t-1})}{\Psictw_{s_{d+1}}(\gv^{t})}
\numberthis\label{eq:beta_update_recursion}
\]
\normalsize
for each $s_d=\omega_{t-d}^{t-1}\in\Tc_D$.
Here, the ratio $\Psictw_{s_{d}}(\gv^{t})/\Psictw_{s_{d}}(\gv^{t-1})$ can be also computed efficiently while traversing the path $\rho(\omega_{t-D}^{t-1})$ from the leaf node $s_D$ to the root $s_0=\lambda$, based on the following recursion:

\begin{proposition}
\label{prop:ctw_potential_ratio}
For each node $s_d=\omega_{t-d}^{t-1}\in\Tc_D$, $d=0,\ldots,D$,
\begin{align*}
&\frac{\Psictw_{s_d}(\gv^{t})}{\Psictw_{s_d}(\gv^{t-1})}\\
&=\begin{cases}
\frac{\b_{s_d}(\gv^{t-1})}{\b_{s_d}(\gv^{t-1})+1} \frac{\Psikt_{s_d}(\gv^t)}{\Psikt_{s_d}(\gv^{t-1})}\\
\quad+\frac{1}{\b_{s_d}(\gv^{t-1})+1}\frac{\Psictw_{s_{d+1}}(\gv^{t})}{\Psictw_{s_{d+1}}(\gv^{t-1})} & \textnormal{if }d<D\\
\frac{\Psikt_{s_D}(\gv^t)}{\Psikt_{s_D}(\gv^{t-1})} & \textnormal{if }d=D
\end{cases}.
\numberthis
\label{eq:ctw_potential_ratio_recursion}
\end{align*}
\end{proposition}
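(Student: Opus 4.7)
The plan is to prove the recursion by a case split on $d$, leveraging the defining CTW recursion~\eqref{eq:psictw_recursion} at both times $t-1$ and $t$, together with a key observation about which nodes are affected by the arrival of $\gv_t$. The base case $d = D$ is immediate: since $|s_D| = D$, the definition gives $\Psictw_{s_D}(\cdot) = \Psikt_{s_D}(\cdot)$, so the ratio reduces directly to $\Psikt_{s_D}(\gv^t)/\Psikt_{s_D}(\gv^{t-1})$ as claimed.

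For the inductive case $d < D$, denote the two children of $s_d$ in $\Tc_D$ by $\{s_{d+1}, s_{d+1}'\}$, where $s_{d+1} = \omega_{t-d-1}^{t-1}$ lies on the active path $\rho(\omega_{t-D}^{t-1})$ and $s_{d+1}'$ is its sibling. The crucial observation is that $s_{d+1}'$ is \emph{not} a suffix of $\omega_{t-D}^{t-1}$, so by the definition of $\gv^t(s)$, the sample $\gv_t$ is not appended to $\gv^{t-1}(s)$ for any node $s$ in the subtree rooted at $s_{d+1}'$. Consequently, $\Psictw_{s_{d+1}'}(\gv^t) = \Psictw_{s_{d+1}'}(\gv^{t-1})$ and similarly $\Psikt_{s_{d+1}'}(\gv^t) = \Psikt_{s_{d+1}'}(\gv^{t-1})$.

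Applying the CTW recursion~\eqref{eq:psictw_recursion} at time $t$ and using this invariance gives $\Psictw_{s_d}(\gv^t) = \half \Psikt_{s_d}(\gv^t) + \half \Psictw_{s_{d+1}}(\gv^t)\,\Psictw_{s_{d+1}'}(\gv^{t-1})$. For the denominator, I will apply the recursion at time $t-1$ and then substitute the definition~\eqref{eq:def_beta} of $\b_{s_d}(\gv^{t-1})$, which yields $\Psikt_{s_d}(\gv^{t-1}) = \b_{s_d}(\gv^{t-1})\,\Psictw_{s_{d+1}}(\gv^{t-1})\,\Psictw_{s_{d+1}'}(\gv^{t-1})$, and hence $\Psictw_{s_d}(\gv^{t-1}) = \tfrac{1}{2}(\b_{s_d}(\gv^{t-1})+1)\,\Psictw_{s_{d+1}}(\gv^{t-1})\,\Psictw_{s_{d+1}'}(\gv^{t-1})$. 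Taking the ratio, the common factor $\Psictw_{s_{d+1}'}(\gv^{t-1})$ cancels, and the two halves of the numerator split into the two terms on the right-hand side of~\eqref{eq:ctw_potential_ratio_recursion}. The $\Psikt$-term needs a multiplication by $\Psikt_{s_d}(\gv^{t-1})/\Psikt_{s_d}(\gv^{t-1})$ to produce the factor $\b_{s_d}(\gv^{t-1})/(\b_{s_d}(\gv^{t-1})+1)$ together with $\Psikt_{s_d}(\gv^t)/\Psikt_{s_d}(\gv^{t-1})$, while the $\Psictw_{s_{d+1}}$-term is a direct simplification.

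The main obstacle is purely bookkeeping: correctly identifying which child of each internal node is the active one, and ensuring that the sibling's CTW potential is frozen between times $t-1$ and $t$. Once that sibling invariance is established, both the recursion at time $t$ and the rewriting of the denominator at time $t-1$ via $\b_{s_d}$ involve the same product $\Psictw_{s_{d+1}}(\gv^{t-1})\,\Psictw_{s_{d+1}'}(\gv^{t-1})$, and the stated ratio falls out by direct algebraic simplification.
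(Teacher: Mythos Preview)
Your proposal is correct and follows essentially the same approach as the paper's proof: the key step is the sibling invariance $\Psictw_{s_{d+1}'}(\gv^t)=\Psictw_{s_{d+1}'}(\gv^{t-1})$ for the off-path child, after which the CTW recursion~\eqref{eq:psictw_recursion} at times $t-1$ and $t$ together with the definition~\eqref{eq:def_beta} of $\b_{s_d}$ yields the stated identity by direct algebra. The paper's own proof is very terse and only states the sibling invariance, whereas you spell out the algebraic simplification explicitly.
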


Hence, updating $\b_s$'s can be also performed efficiently in $O(D)$ time. 
The space complexity of this algorithm is $O(DT)$, since there can be at most $D$ nodes activated for the first time at each round.
The complete algorithm is summarized in Algorithm~\ref{alg:ctw_olo} in Appendix.

\section{EXPERIMENTS}
\label{sec:exps}

To validate the motivation of this work and demonstrate the power of the proposed algorithms in online convex optimization, we performed online linear regression with absolute loss following \citet{Orabona--Pal2016}.
We observed, however, that the datasets considered therein do not contain any temporal dependence and thus the proposed algorithms did not prove useful (data not shown).
Instead, we chose two real-world temporal datasets (Beijing PM2.5~\citep{Liang--Zou--Guo--Li--Zhang--Zhang--Huang--Chen2015} and Metro Interstate Traffic Volume~\citep{Hogue2019}) from the UCI machine learning repository~\citep{UCI2019}.
All details including data preprocessing can be found in Appendix~\ref{app:sec:exp} and the code that fully reproduce the results is available at \url{https://github.com/jongharyu/olo-with-side-information}.

To construct auxiliary sequences, we used the \emph{canonical binary quantizers $Q_{\ev_i}$}, where $\ev_i$ denotes the $i$-th standard vector.
We first ran the per-state versions of OGD, AdaNormal~\citep{McMahan--Orabona2014}, DFEG~\citep{Orabona2013}, and KT with Markov side information of different depths and ran the CTW algorithm for the maximum depth ranging $0,1,3\ldots,11$.
We optimally tuned the per-state OGD using only a single rate for all states due to the prohibitively large complexity of the optimal grid search; see Figures~\ref{fig:add_exps_metro}(a) and \ref{fig:add_exps_beijing}(a) in Appendix.   
While the per-state KT consistently showed the best performance, the performance degraded as we used too deep Markov side information beyond some threshold for all algorithms.
In Figures~\ref{fig:add_exps_metro}(b) and \ref{fig:add_exps_beijing}(b) in Appendix, CTW often achieved even better performance than the best performance achieved by KT across the different choices of quantizer, also being robust to the choice of the maximum depth. 

In practice, however, we do not know which dimension to quantize a priori. Hence, we showed the performance of the combined CTW algorithms over all $d$ quantizers aggregated by either the mixture or the addition---conceptually, the mixture of CTWs can be viewed as a \emph{context forest weighting}.
As a benchmark, we also ran the combined KT algorithms over all $d$ quantizers for each depth.
In Figure~\ref{fig:summary}, we summarized the per-coordinate results by taking the best performance over all quantizers; see the first five dashed lines in the legend. 
While these are only hypothetical which were not attained by an algorithm,
surprisingly, the combined CTW algorithms over different quantizers, either by the mixture or the addition of \citet{Cutkosky2019}, achieved the hypothetically best performance (plotted solid).

\begin{figure}[!htb]
    \centering
    \includegraphics[width=.425\textwidth]{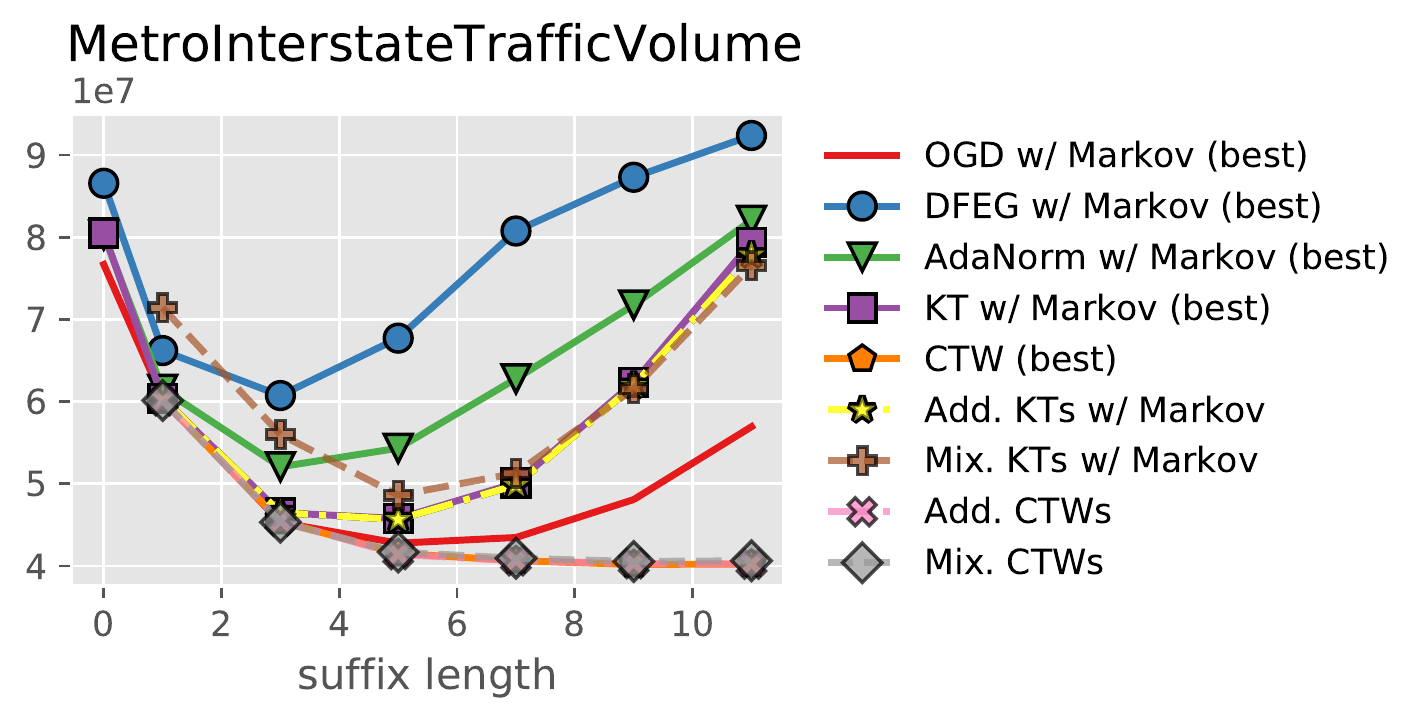}
    \vspace{-.5em}
    \includegraphics[width=.425\textwidth]{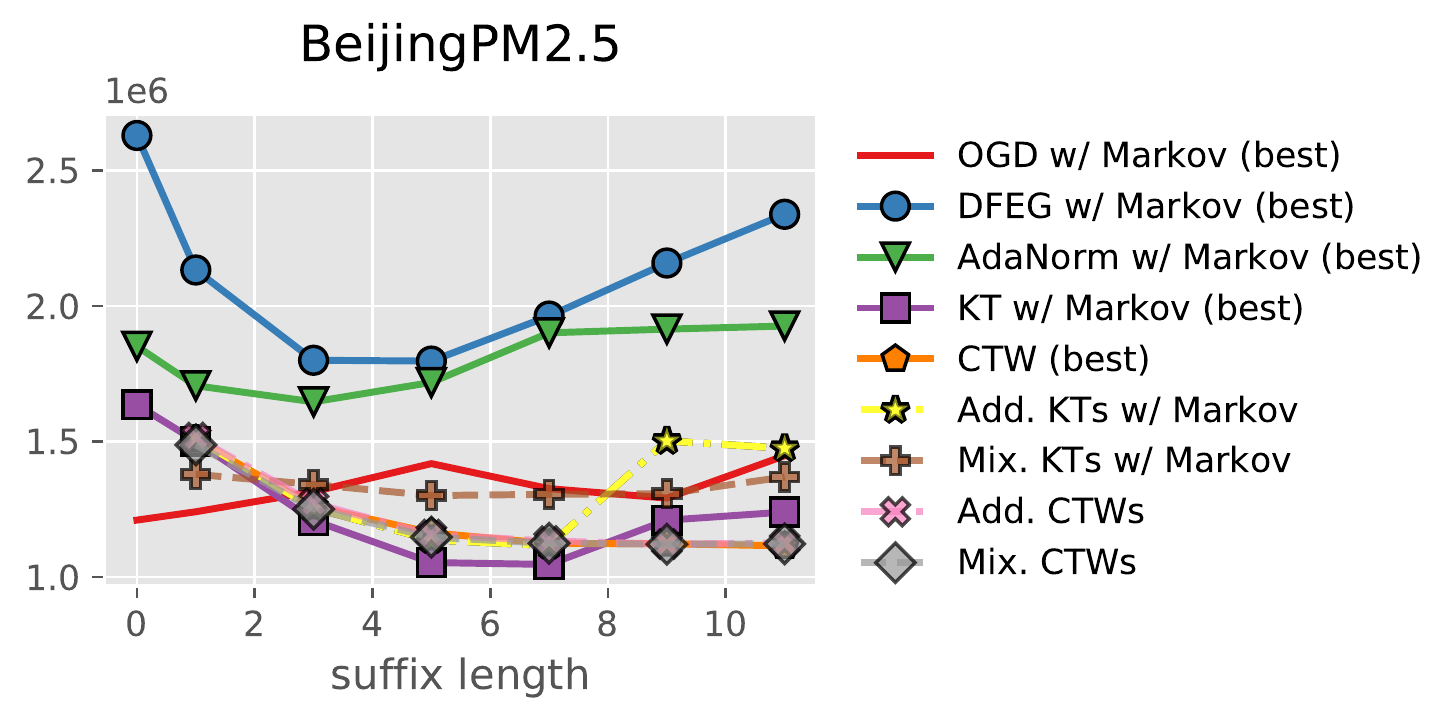}
    \caption{Summary of the experiments. 
    }
    \label{fig:summary}
\end{figure}

\newpage

\section{CONCLUDING REMARKS}
\label{sec:conclusion}
Aiming to leverage a temporal structure in the sequence $\gv^n$, we developed the CTW OLO algorithm that can efficiently adapt to the best tree side information in hindsight by combining a universal coin betting based OLO algorithm and universal compression (or prediction) techniques from information theory. Experimental results demonstrate that the proposed framework can be effective in solving real-life online convex optimization problems.

The key technical contribution of the paper is to consider the product and mixture potentials, motivated from information theory, and to adapt the CTW algorithm of \citet{Willems--Tjalkens--Ignatenko2006} to online linear optimization in Hilbert spaces. 
Main technical difficulties lie in analyzing the product potential (Proposition~\ref{prop:conjugate_base_function_product}) and properly invoking Rissanen's lower bound in Theorem~\ref{thm:olo_side_information_lower_bound} to establish the optimality.

We remark that an anonymous reader of an earlier version of this manuscript proposed a simpler alternative approach based on a meta algorithm that recasts any parameter-free OLO algorithm for tree-structured side information. 
The idea is to combine the specialist framework of \citet{Freund--Shapire--Singer--Warmuth1997} and apply the addition technique of \citet{Cutkosky2019}. 
Running a base OLO algorithm at each node of a context tree as a specialist, the meta algorithm adds up the outputs of the specialists on the active path at each round and updates them at the end of the round. 
This approach achieves a similar regret guarantee of the CTW OLO (Corllary~\ref{cor:olo_ctw}) with the same complexity. 
A detailed study is beyond the scope of this paper and thus left as future work. 


\newpage
\subsubsection*{Acknowledgements}
This work was supported in part by the National Science Foundation under Grant CCF-1911238.
The authors appreciate insightful feedback from anonymous reviewers to improve earlier versions of the manuscript.

\bibliographystyle{apalike}
\bibliography{ref}

\newpage
\iflong
\onecolumn
\makesupplementtitle

\renewcommand{\thefigure}{\thesection.\arabic{figure}}    
\setcounter{table}{0}
\renewcommand{\thetable}{\thesection.\arabic{table}}
\setcounter{algorithm}{0}
\renewcommand{\thealgorithm}{\thesection.\arabic{algorithm}}

\appendix

\section{RELATED WORK}
\label{sec:related_work}
There have been several parameter-free methods proposed for OLO in Hilbert space~\citep{Orabona2013,Orabona2014,McMahan--Orabona2014,Orabona--Pal2016} as well as learning with expert advice (LEA)~\citep{Freund--Schapire1997,Chaudhuri--Freund--Hsu2009,Chernov--Vovk2010,Luo--Schapire2015,Foster--Rakhlin--Sridharan2015,Koolen--VanErven2015,Orabona--Pal2016}; see also \citep[Chapter 9]{Orabona2019} and the references therein. 
A parallel line of work on parameter-free methods considers the case when the maximum norm of $\gv_t$ (often referred to as the \emph{Lipschitz constant}), which is assumed to be 1 throughout in this paper, is unknown but the competitor norm $\|\uv\|$ is known~\citep{Duchi--Hazan--Singer2011, Cutkosky--Boahen2017}.
Recently, \citet{Zhang--Wang--Yi--Yang2021, Chen--Luo--Wei2021} studied a similar setting in this paper, albeit establishing guarantees only for bounded domains. 
We remark that AdaNormalHedge~\citep{Luo--Schapire2015} is a parameter-free LEA algorithm which can compete with mixtures of forcasters with side information, in particular tree experts via mixtures of sleeping experts; for example, \citet{Kuzborskij--Cesa-Bianchi2020} used AdaNormalHedge with tree experts for binary classification with absolute loss.
For a comprehensive overview of these parameter-free methods, see the tutorial~\citep{Orabona--Cutkosky2020}.

The connection between OLO and gambling was shown by \citet{Orabona--Pal2016}, where they also described a reduction for LEA.
This idea was also applied to training deep neural networks~\citep{Orabona--Tommasi2017}.
While the proposed algorithms in this paper are against \emph{stationary} competitors,
\citet{Jun--Orabona--Wright--Willett2017} proposed a coin betting based OLO algorithm against nonstationary competitors characterized by a sequence of vectors $\uv_1, \dotsc, \uv_T$ such that have at most $m$ change points. \citet[Section 5 and particularly Theorem 9]{VanDerHoeven--VanErven--Kotlowski} establishes a connection between the exponential weights (EW) algorithm and the coin-betting scheme. Earlier on in the paper, in Section 2 the interpretation of compression as a special case of EW with $\eta = 1$ is provided as well. Similarly, \citet{Jun--Orabona2019} utilize such a connection as well. To the best of our knowledge, however, we did not find a clear bridge constructed between compression and coin-betting methods in either, even though a careful examination of the mathematical details may hint toward this connection.

Universal compression, which is a classical topic in information theory, aims to compress sequences with no (or very little) statistical assumptions.
In the last century, there have been several techniques proposed that can compete against the best \iid compressor~\citep{Krichevsky--Trofimov1981, Rissanen1984, Xie--Barron1997}, finite state compressor~\citep{Lempel--Ziv1977} and tree compressor~\citep{Willems--Shtarkov--Tjalkens1995}. 
The CTW probability assignment invented by \citep{Willems--Shtarkov--Tjalkens1995} has been one of the most successful and widely used universal compression techniques. Beyond compression, this technique has been applied to estimation of directed information~\citep{Jiao--Permuter--Zhao--Kim--Weissman2013}, universal portfolios~\citep{Kozat--Singer--Bean2008}, and reinforcement learning~\citep{Messias--Whiteson2018}, to name a few. 
The efficient CTW OLO algorithm presented in Section~\ref{sec:ctw_olo_efficient_updates} is in the spirit of the processing betas algorithm proposed by \citet{Willems--Tjalkens--Ignatenko2006} for computing the predictive conditional probability induced by the CTW probability assignment~\citep{Willems--Shtarkov--Tjalkens1995}.
\citet[Section~5.3]{Cesa-Bianchi--Lugosi2006} also presented a CTW-based Hedge algorithm for LEA; see bibliographic remarks therein for other applications of CTW to learning problems. 

A related line of recent work on online learning with hints~\citep{Dekel--Flajolet--Haghtalab--Jaillet2017, Bhaskara--Cutkosky--Kumar--Purohit2020Imperfect, Bhaskara--Cutkosky--Kumar--Purohit2020Many} considers a scenario where the learner receives a vector $\hv_t$ with $\|\hv_t\| = 1$ such that $\langle \hv_t, \gv_t/\|\gv_t\| \rangle \ge \a > 0$ as a ``hint'' to the future. However, our setting is not directly comparable, since we only consider a finite side information and this line of work aims to establish small regret $o(\sqrt{T})$ measured with respect to static competitors.
We also remark that \citet{Rakhlin--Sridharan13} studied the problem of OLO when $\gv_t$ is modelled as a ``predictable'' sequence, in the sense that $\gv_t = M(\gv^{t-1}) + \nv_t$ with some adversarial noise $\nv_t$ with a (possibly randomized) function $M$; yet, they considered static competitors unlike this work. 


\newpage

\section{PER-STATE EXTENSIONS OF EXISTING ALGORITHMS}
\label{sec:other_per_state_algos}
Here we present per-state versions of OGD and two existing parameter-free OLO algorithms: the dimension-free exponentiated gradient algorithm (DFEG)~\citep{Orabona2013} and the adaptive normal algorithm (AdaNormal)~\citep{McMahan--Orabona2014}.

Following the original problem setting in \citep{Orabona2013}, we describe the per-state DFEG only for online linear regression. 
Consider a loss function $\ell(\yh,y)$, which is convex and $L$-Lipschitz in its first argument. At each round $t$, a learner picks $\wv_t\in V$. A nature then reveals $(\xv_t,y_t)\in V\times \Real$, and the learner suffers loss $\ell_t(\wv_t)\defeq \ell(\yh_t,y_t)$, where $\yh_t\defeq \langle \wv_t,\xv_t\rangle$.
Note that the DFEG algorithm requires a norm of the instance $\|\xv_t\|$ to form an action $\wv_t$. 

\newcommand{\thb}{\boldsymbol{\th}}
\begin{algorithm*}[htb]
\caption{Per-state Dimension-free Exponentiated Gradient~\citep{Orabona2013} for online regression}
\label{alg:DFEG}
\begin{algorithmic}[1]
\Procedure{PerStateDFEG}{$L, \delta, 0.882 \le a \le 1.109$}
   \State Initialize $\thb^{(s)} \gets 0 \in V, H^{(s)} \gets \delta$ for each $s\in[S]$
   \For{$1 \le t \le T$}
      \State Receive $h_t\in[S]$ and $\|\xv_t\|$
      \State Update $H^{(h_t)} \gets H^{(h_t)} + L^2 \max\{\|\xv_t\|,\|\xv_t\|^2\}$
      \State Set $\a_t \gets a (H^{(h_t)})^{1/2}, \b_t \gets (H^{(h_t)})^{3/2}$
      \If{$\|\thb^{(h_t)}\| =0$}
      \State Set $\wv_t \gets 0$
      \Else 
      \State Set $\wv_t \gets \frac{\thb^{(h_t)}}{\b_t \|\thb^{(h_t)}\|}\exp(\frac{\|\thb^{(h_t)}\|}{\a_t})$ 
      \EndIf
      \State Receive $(\xv_t,y_t)$ and incur loss $\ell_t(\wv_t)$ 
      \State Update $\thb^{(h_t)} \gets \thb^{(h_t)} - \partial \ell_t(\langle \wv_t, \xv_t \rangle )\xv_t$
   \EndFor
\EndProcedure
\end{algorithmic}
\end{algorithm*}

\begin{algorithm*}[htb]
\caption{Per-state AdaptiveNormal~\citep{McMahan--Orabona2014} for OLO with side information}\label{alg:AdaNorm}
\begin{algorithmic}[1]
\Procedure{PerStateAdaNormal}{$L, a \ge \frac{3L^2 \pi}{4}, \e$}
   \State Initialize $\thb^{(s)} \gets 0\in V$ for each $s\in[S]$
   \For{$1 \le t \le T$}
      \State Receive $h_t\in[S]$
      \If{$\|\thb^{(h_t)}\| = 0$}
      \State Set $\wv_t \gets 0$
      \Else 
      \State Set $\wv_t \gets \e \frac{\thb^{(h_t)}}{\|\thb^{(h_t)}\|} \frac{1}{2L \ln^2(t+1)} \{\exp(\frac{(\|\thb^{(h_t)}\|+L)^2}{2at}) - \exp(\frac{(\|\thb^{(h_t)}\|-L)^2}{2at})\}$ 
      \EndIf
      \State Receive $\gv_t$ and incur loss $\langle \gv_t, \wv_t \rangle $ 
      \State Update $\thb^{(h_t)} \gets \thb^{(h_t)} - \gv_t$
   \EndFor
\EndProcedure
\end{algorithmic}
\end{algorithm*}

We remark that these two algorithms are also guaranteed to incur essentially the same order of regret without tuning learning rate. 
Also, while the per-state KT OLO algorithm serves as a base algorithm in the CTW OLO algorithm, to be a fair comparison, the two algorithms can be also used as a base in the specialist framework to solve the tree side information problem, as noted in Section~\ref{sec:conclusion}.
There are, however, two minor disadvantages we can observe.
First of all, the DFEG algorithm is tailored to the online linear regression problem, while the per-state KT OLO and AdaptiveNormal algorithms can be applied to a general OLO problem. 
Second, while the KT OLO has only one hyperparameter, the initial wealth $\mathsf{W_0}$, the above two per-state algorithms have two hyperparameters (except the Lipschitz constant), which may need to be chosen or tuned in practice.

\newpage
\section{DEFERRED TECHNICAL MATERIALS}

\subsection{Proofs for Section~\ref{sec:prelim}}

\subsubsection{Proof of Theorem~\ref{thm:continuous_coin_betting_to_1D_OLO}}
We note that all statements in Section~\ref{sec:prelim} originally appeared in \citep{Orabona--Pal2016}. 
The proofs given here are rephrased and simplified from \citep{Orabona--Pal2016}.

Before we prove Theorem~\ref{thm:continuous_coin_betting_to_1D_OLO}, we state some key properties of the KT potential function $\psikt$.

\begin{proposition}
\label{prop:kt_potential}
For each $t\ge 1$ and any $g_1,\ldots,g_t\in[-1,1]$, the followings hold:
\begin{enumerate}[label=(\alph*)]
\item (Coordinatewise convexity) $g\mapsto \psikt(g^{t-1}g)$ is convex for $g\in[-1,1]$.
\item (Consistency) $\psikt(g^{t-1})=\half(\psikt(g^{t-1}1)+\psikt(g^{t-1}\bar{1}))$.
\item (The relation of signed betting and potential) \[\bkt(g^{t-1})=\frac{\psikt(g^{t-1}1)-\psikt(g^{t-1}\bar{1})}{\psikt(g^{t-1}1)+\psikt(g^{t-1}\bar{1})}=\frac{\psikt(g^{t-1}1)-\psikt(g^{t-1}\bar{1})}{\psikt(g^{t-1})}.\]
\item For any $x\in[0,t)$, $x(\psikt_t)''(x)\ge (\psikt_t)'(x)$.
\end{enumerate}
\end{proposition}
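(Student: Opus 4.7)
The plan is to derive all four properties from two explicit descriptions of $\psikt_t$. The first is the Gamma-function form
\[
\psikt_t(x) = \frac{2^t\,\Gamma((t+x+1)/2)\,\Gamma((t-x+1)/2)}{\pi\,\Gamma(t+1)},
\]
coming directly from the Beta function. The second is the mixture representation
\[
\psikt_t(x) = \frac{1}{\pi}\int_{-1}^{1} (1-b^2)^{(t-1)/2}\,e^{x r(b)}\,db,\qquad r(b)\defeq \tfrac{1}{2}\log\tfrac{1+b}{1-b},
\]
obtained by substituting $p = (1+b)/2$ in the Beta integral for $B((t+x+1)/2,(t-x+1)/2)$. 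I would use the first form to address (b) and (c), and the second to address (a) and (d), since under the integral differentiation in $x$ passes through and simply brings down powers of $r(b)$.

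For (b), I would apply $\Gamma(z+1) = z\Gamma(z)$ inside $\psikt_t(x+1)$ and $\psikt_t(x-1)$ to factor out the common quantity $C \defeq \frac{2^t\,\Gamma((t+x)/2)\Gamma((t-x)/2)}{\pi\,\Gamma(t+1)}$ times $\tfrac{t+x}{2}$ and $\tfrac{t-x}{2}$, respectively. Their sum equals $C t$, and since $t/\Gamma(t+1) = 1/\Gamma(t)$ this reduces to exactly $2\psikt_{t-1}(x)$, giving (b). The same algebra with a subtraction gives $\psikt_t(x+1) - \psikt_t(x-1) = C x = \tfrac{2x}{t}\psikt_{t-1}(x)$, so dividing by $\psikt(g^{t-1}1) + \psikt(g^{t-1}\bar 1) = 2\psikt_{t-1}(x)$ yields $x/t = \bkt_t(x) = \bkt(g^{t-1})$, which is (c).

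For (a), I would observe that for each fixed $b\in(-1,1)$ the integrand $(1-b^2)^{(t-1)/2} e^{x r(b)}$ is a positive exponential in $x$, hence convex in $x$; integrating against the nonnegative measure $db/\pi$ preserves convexity, which gives (a). The main obstacle is (d). The plan is to set $h(x) \defeq x\psikt_t''(x) - \psikt_t'(x)$. Differentiating the mixture representation $k$ times under the integral yields $\psikt_t^{(k)}(x) = \frac{1}{\pi}\int_{-1}^{1} r(b)^k(1-b^2)^{(t-1)/2}e^{x r(b)}\,db$, from which $\psikt_t'(0) = 0$ follows by odd symmetry of $r$, so $h(0) = 0$; a direct computation also gives $h'(x) = x\psikt_t'''(x)$. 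It then suffices to show $\psikt_t'''(x) \ge 0$ on $[0,t)$. Folding the integral via the substitution $b \mapsto -b$ on $(-1,0)$ gives
\[
\psikt_t'''(x) = \frac{1}{\pi}\int_0^{1} r(b)^3(1-b^2)^{(t-1)/2}\bigl(e^{x r(b)} - e^{-x r(b)}\bigr)\,db,
\]
and on $(0,1)$ both $r(b) > 0$ and, for $x \ge 0$, $e^{x r(b)} - e^{-x r(b)} \ge 0$, so the integrand is nonnegative. Hence $h(x) = \int_0^x h'(s)\,ds \ge 0$, giving (d).
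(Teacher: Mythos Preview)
Your argument is correct. One cosmetic slip: the mixture representation should carry an extra factor $1/\Gamma(t+1)$, i.e.
\[
\psikt_t(x)=\frac{1}{\pi\,\Gamma(t+1)}\int_{-1}^{1}(1-b^2)^{(t-1)/2}e^{xr(b)}\,db,
\]
but since (a) and (d) are invariant under positive scalar multiples this has no effect on your proof.

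The paper does not actually prove (a) and (d); it simply cites \citet[Lemma~12]{Orabona--Pal2016}, whose proof works directly with the Gamma-function form and the digamma/trigamma functions (checking convexity and the inequality $x(\psikt_t)''\ge(\psikt_t)'$ via identities and monotonicity properties of $\Gamma'/\Gamma$ and its derivative). Your route through the integral representation is genuinely different: by writing $\psikt_t$ as a positive mixture of exponentials $e^{xr(b)}$, convexity is immediate and (d) reduces to the transparent observation that $\psikt_t'''(x)\ge 0$ for $x\ge 0$ after a symmetrization in $b$. This avoids any special-function machinery and makes the evenness/monotonicity structure of $\psikt_t$ visible at the level of the integrand; the digamma approach, by contrast, is more computational but stays closer to the closed-form expression. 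For (b) and (c) both you and the paper appeal to the recursion $\Gamma(z+1)=z\Gamma(z)$, so those parts coincide.
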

\begin{proof}
Recall $\qtkt_t(x)\defeq B(\frac{t+x+1}{2},\frac{t-x+1}{2})/B(\half,\half)$ and $\psikt(g^t)\defeq \psikt_t(\nsum g^t)\defeq 2^t\qtkt_t(\nsum g^t)$.
(a) and (d) follow from the properties of the Gamma function $\Gamma(\cdot)$; for details, see \cite[Lemma~12]{Orabona--Pal2016} and the proof therein.
(b) and (c) can be easily verified by the definition of the KT potential $\psikt$.
\end{proof}
We remark that the relation (b) can be understood as a continuous extension of the consistency of $\qtkt$ as a joint probability over a binary sequence $g^t\in\{-1,1\}^t$. Further, in view of the relation (c), the signed bet $\bkt$ is a continuous extension of the prequential probability $\qtkt(\cdot|g^{t-1})$ induced by the joint probability assignment $\qtkt(g^t)$.

We now show the following single round bound.
\begin{lemma}\label{lem:coin_betting_potential_implication}
For any $t\ge 1$ and $g_1,\ldots,g_t\in[-1,1]$, we have
\[
(1+g_t\bkt_t(g^{t-1}))\psikt(g^{t-1}) \ge \psikt(g^{t}).
\]
\end{lemma}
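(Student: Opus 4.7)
The plan is to reduce the one-round inequality to a direct application of the convexity property in Proposition~\ref{prop:kt_potential}(a), using (b) and (c) to rewrite the left-hand side as a convex combination of $\psikt(g^{t-1}1)$ and $\psikt(g^{t-1}\bar 1)$.

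First I would unpack $(1+g_t\bkt_t(g^{t-1}))\psikt(g^{t-1})$ using (c), which (modulo the factor of $2$ coming from consistency) gives
\[
\bkt(g^{t-1})\,\psikt(g^{t-1}) = \tfrac{1}{2}\bigl(\psikt(g^{t-1}1)-\psikt(g^{t-1}\bar 1)\bigr),
\]
so that
\[
(1+g_t\bkt(g^{t-1}))\psikt(g^{t-1}) = \psikt(g^{t-1}) + \tfrac{g_t}{2}\bigl(\psikt(g^{t-1}1)-\psikt(g^{t-1}\bar 1)\bigr).
\]
Next, applying the consistency relation (b), $\psikt(g^{t-1}) = \tfrac{1}{2}(\psikt(g^{t-1}1)+\psikt(g^{t-1}\bar 1))$, and grouping terms, the right-hand side rearranges to
\[
\tfrac{1+g_t}{2}\psikt(g^{t-1}1) + \tfrac{1-g_t}{2}\psikt(g^{t-1}\bar 1).
\]

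Finally, since $g_t\in[-1,1]$, the weights $\tfrac{1+g_t}{2}$ and $\tfrac{1-g_t}{2}$ are nonnegative and sum to one, and $g_t = \tfrac{1+g_t}{2}\cdot 1 + \tfrac{1-g_t}{2}\cdot(-1)$. The convexity property (a), applied to the map $g\mapsto \psikt(g^{t-1}g)$ on $[-1,1]$, then yields
\[
\psikt(g^t) = \psikt(g^{t-1}g_t) \le \tfrac{1+g_t}{2}\psikt(g^{t-1}1) + \tfrac{1-g_t}{2}\psikt(g^{t-1}\bar 1),
\]
which is precisely the claim.

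There is no real obstacle here; the only subtlety is bookkeeping the factor of $\tfrac{1}{2}$ between (b) and (c) so that the convex-combination form matches exactly. Once that is done, the inequality follows in one step from (a), and this is exactly the mechanism by which the KT betting strategy is designed so that the wealth tracks the potential $\psikt$ multiplicatively at every round.
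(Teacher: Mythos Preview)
Your proof is correct and follows essentially the same route as the paper: both arguments use Proposition~\ref{prop:kt_potential}(b) and (c) to rewrite $(1+g_t\bkt(g^{t-1}))\psikt(g^{t-1})$ as the convex combination $\tfrac{1+g_t}{2}\psikt(g^{t-1}1)+\tfrac{1-g_t}{2}\psikt(g^{t-1}\bar 1)$, and then apply convexity (a) to conclude. The only difference is the order in which (b) and (c) are invoked, which is immaterial.
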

\begin{proof}
By the definition of coin betting potentials, we have
\begin{align*}
(1+g_t\bkt(g^{t-1}))\psikt(g^{t-1})
&\stackrel{(i)}{\ge} (1+g_t\bkt(g^{t-1}))\half(\psikt(g^{t-1}1) + \psikt(g^{t-1}\bar{1}))\\
&\stackrel{(ii)}{=}\Bigl(1+g_t\frac{\psikt(g^{t-1}1)-\psikt(g^{t-1}\bar{1})}{\psikt(g^{t-1}1) + \psikt(g^{t-1}\bar{1})}\Bigr)\half(\psikt(g^{t-1}1) + \psikt(g^{t-1}\bar{1}))\\
&= \frac{1+g_t}{2}\psikt(g^{t-1}1) + \frac{1-g_t}{2}\psikt(g^{t-1}\bar{1})\\
&\stackrel{(iii)}{\ge} \psikt(g^{t}).
\end{align*}
where $(i)$, $(ii)$, and $(iii)$ follow from (b), (c), and (a) in Proposition~\ref{prop:kt_potential}, respectively.
\end{proof}

While the above lemma establishes the lower bound on the cumulative wealth, 
we then need the following statement that connects regret and wealth via convex duality. 
We remark that this relation is the key statement that motivates all coin betting based algorithms.

\begin{proposition}[{\citealp{McMahan--Orabona2014}, \citep[Lemma~1]{Orabona--Pal2016}}]
\label{prop:regret_to_wealth}
Let $\Phi\colon V\to\Real$ be a convex function and let $\Phi^\star\colon V\to \Real\cup \{+\infty\}$ denote its Fenchel conjugate function.
For any $\gv_1,\ldots,\gv_T\in V^\star$ and 
any $\wv_t,\ldots,\wv_T\in V$, we have
\begin{align*}
\sup_{\uv\in V}
\{\Reg(\uv;\gv^T)-\Phi(\uv)\}
=-\sum_{t=1}^T \langle \gv_t,\wv_t\rangle
+\Phi^\star\Bigl(\sum_{t=1}^T \gv_t\Bigr),
\end{align*}
where $\Reg(\uv;\gv^T)\defeq \sum_{t=1}^T \langle \gv_t,\uv-\wv_t \rangle$.
\end{proposition}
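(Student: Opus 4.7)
The plan is to prove this by straightforward algebraic manipulation, exploiting the linearity of the inner product together with the definition of the Fenchel conjugate. Since the term $\sum_{t=1}^T \langle \gv_t, \wv_t \rangle$ in the regret does not involve $\uv$, it can simply be pulled out of the supremum, leaving a quantity that matches the definition of $\Phi^\star$ at the point $\sum_{t=1}^T \gv_t$.

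Concretely, I would begin by substituting the definition of $\Reg(\uv;\gv^T)$ and separating the $\uv$-dependent and $\uv$-independent parts:
\begin{align*}
\sup_{\uv \in V}\{\Reg(\uv;\gv^T) - \Phi(\uv)\}
&= \sup_{\uv \in V}\biggl\{\sum_{t=1}^T \langle \gv_t, \uv - \wv_t\rangle - \Phi(\uv)\biggr\}\\
&= -\sum_{t=1}^T \langle \gv_t,\wv_t\rangle + \sup_{\uv \in V}\biggl\{\biggl\langle \sum_{t=1}^T \gv_t,\, \uv\biggr\rangle - \Phi(\uv)\biggr\}.
\end{align*}
In the first line I use the linearity of $\langle \cdot, \cdot \rangle$ in its second argument to collapse the sum $\sum_t \langle \gv_t, \uv\rangle$ into $\langle \sum_t \gv_t, \uv\rangle$, and in the second line I factor out the $\uv$-free term $-\sum_t \langle \gv_t, \wv_t\rangle$ from the supremum.

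Finally, by the definition of the Fenchel conjugate $\Phi^\star(\yv) \defeq \sup_{\uv \in V}\{\langle \yv, \uv\rangle - \Phi(\uv)\}$ applied at $\yv = \sum_{t=1}^T \gv_t \in V^\star$, the remaining supremum is exactly $\Phi^\star\bigl(\sum_{t=1}^T \gv_t\bigr)$, yielding the claimed identity. Note that the equality (as opposed to an inequality) is free, since no concavity or exchange of sup/inf is invoked; the identity holds for arbitrary convex $\Phi$ and any choice of $\wv_1,\ldots,\wv_T \in V$ and $\gv_1,\ldots,\gv_T \in V^\star$, with the right-hand side possibly taking the value $+\infty$ when $\sum_t \gv_t$ is outside the effective domain of $\Phi^\star$.

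There is no real obstacle here: the content of the statement is essentially that the Fenchel conjugate of $\uv \mapsto \Phi(\uv) - \langle \sum_t \gv_t, \uv\rangle$ (up to the additive constant $-\sum_t \langle \gv_t, \wv_t\rangle$) is $\Phi^\star$ shifted. The only thing worth double-checking is the sign convention, given that the paper switches between loss-minimization and reward-maximization framings; I would verify that the regret is consistently defined with $\uv - \wv_t$ (reward of competitor minus reward of learner) so that the linear functional appearing inside the supremum over $\uv$ is $\langle \sum_t \gv_t, \uv\rangle$ with a plus sign, matching the form used in $\Phi^\star$.
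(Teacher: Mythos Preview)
Your proposal is correct and matches the paper's own proof essentially line for line: expand the regret, pull the $\uv$-independent term $-\sum_t\langle \gv_t,\wv_t\rangle$ out of the supremum, and recognize the remaining supremum as the definition of $\Phi^\star\bigl(\sum_t \gv_t\bigr)$.
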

\begin{proof}
By definition of Fenchel dual, we have
\begin{align*}
\sup_{\uv\in V}\{\Reg(\uv;\gv^T)-\Phi(\uv)\}
&= \sup_{\uv\in V}\Bigl\{\sum_{t=1}^T \langle \gv_t,\uv-\wv_t\rangle -\Phi(\uv)\Bigr\}\\
&= -\sum_{t=1}^T \langle \gv_t,\wv_t\rangle +  \sup_{\uv\in V}\Bigl\{\Bigl\langle \sum_{t=1}^T \gv_t ,\uv\Bigr\rangle -\Phi(\uv)\Bigr\}\\
&= -\sum_{t=1}^T \langle \gv_t,\wv_t\rangle
+\Phi^{\star}\Bigl(\sum_{t=1}^T \gv_t\Bigr).\qedhere
\end{align*}
\end{proof}

Now we are ready to prove Theorem~\ref{thm:continuous_coin_betting_to_1D_OLO}.

\begin{proof}[Proof of Theorem~\ref{thm:continuous_coin_betting_to_1D_OLO}]
We first show the wealth lower bound $\wealth_t\ge \wealth_0\psikt(g^t)$ stated in \eqref{eq:1d_coin_betting_kt_lower_bound} by induction on $t$. 
Suppose that $\wealth_{t-1}\ge \wealth_0\psikt(g^{t-1})$.
Then,
\begin{align*}
\wealth_t &= \wealth_{t-1} + g_t w_t\\
&= (1+\bkt(g^{t-1}) g_t)\wealth_{t-1}\\
&\stackrel{(a)}{\ge} (1+\bkt(g^{t-1}) g_t) \wealth_0\psikt(g^{t-1})\\
&\stackrel{(b)}{\ge} \wealth_0\psikt(g^{t}),
\end{align*}
where $(a)$ follows from the induction hypothesis and $(b)$ follows from Lemma~\ref{lem:coin_betting_potential_implication}.

The wealth lower bound can be converted into the desired regret bound by Proposition~\ref{prop:regret_to_wealth}.
That is, we have
\[
\sup_{u\in\Real}\{\Reg(u;g^T)-\phi(u)\}
= - \sum_{t=1}^T g_tw_t + \wealth_0\psikt(g^T) \le \wealth_0,
\]
where $\phi\suchthat\Real\to\Real$ is a convex function such that its conjugate function $\phi^\star\suchthat\Real\to\Real\cup\{+\infty\}$ is equal to $\wealth_0\psikt_T(\nsum g^t)$.
Since $x\mapsto \psikt_T(x)$ is a convex, proper, closed function, one can check that $\phi(u)=\wealth_0(\psikt_T)^\star(\frac{u}{\wealth_0})$ using Lemma~\ref{lem:dual_scale}.
\end{proof}

\subsubsection{Proof of Theorem~\ref{thm:continuous_coin_betting_to_Hilbert_OLO}}

As in 1D OLO case, we first show the following single round bound.
\begin{lemma}
For any $\gv_1,\ldots,\gv_t\in \unitball$, we have
\[
(1+\langle \gv_t,\vvkt(\gv^{t-1})\rangle ) \Psikt(\gv^{t-1}) \ge \Psikt(\gv^t).
\]
\end{lemma}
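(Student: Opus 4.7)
My plan is to reduce the Hilbert-space single-round inequality to a scalar inequality on $[-1,1]$ by decomposing $\gv_t$ into components parallel and perpendicular to $\nsum\gv^{t-1}$, and then verify that scalar inequality using the four properties of $\psikt$ collected in Proposition~\ref{prop:kt_potential}. Setting $S:=\nsum\gv^{t-1}$, $r:=\|S\|$, and, in the nondegenerate case $r>0$, writing $\gv_t=a\hat S+\gv_t^{\perp}$ with $\hat S=S/r$, I get $a=\langle\gv_t,\hat S\rangle\in[-1,1]$, $\|\gv_t^{\perp}\|^2\le 1-a^2$, $\|\nsum\gv^t\|^2=(r+a)^2+\|\gv_t^{\perp}\|^2\le r^2+2ar+1$, and $\langle\gv_t,\vvkt(\gv^{t-1})\rangle=ar/t$. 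Since $\psikt_t$ is convex and even (Proposition~\ref{prop:kt_potential}(a)), it is non-decreasing in $|\cdot|$, so the right-hand side is largest when $\|\gv_t\|=1$, and it suffices to prove
\[
h(a) := (1+ar/t)\,\psikt_{t-1}(r) - \psikt_t(\sqrt{r^2+2ar+1}) \ge 0, \qquad a\in[-1,1].
\]

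The first substep will be to check the boundary values $h(\pm 1)=0$. At $a=1$, $\sqrt{r^2+2r+1}=r+1$, and combining the consistency identity $2\psikt_{t-1}(r)=\psikt_t(r+1)+\psikt_t(r-1)$ with the signed-bet formula $r/t=\bkt_t(r)=(\psikt_t(r+1)-\psikt_t(r-1))/(2\psikt_{t-1}(r))$ from Proposition~\ref{prop:kt_potential}(b,c) yields $(1+r/t)\psikt_{t-1}(r)=\psikt_t(r+1)$, whence $h(1)=0$; the case $a=-1$ follows by symmetry.

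The second substep will be to show that $h$ is concave on $[-1,1]$, which together with the boundary equalities gives $h\ge 0$ throughout. Since the first term of $h$ is linear in $a$, it suffices to prove convexity of $a\mapsto\psikt_t(\sqrt{r^2+2ar+1})$. Writing $\phi(s):=\psikt_t(\sqrt{s})$, a direct calculation gives
\[
\phi''(s)=\frac{1}{4s^{3/2}}\bigl(\sqrt{s}\,(\psikt_t)''(\sqrt{s})-(\psikt_t)'(\sqrt{s})\bigr),
\]
which is non-negative by Proposition~\ref{prop:kt_potential}(d); since $s(a)=r^2+2ar+1$ is affine in $a$, convexity of $\phi$ in $s$ transfers to convexity of $a\mapsto\phi(s(a))$, and $h$ is concave. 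The degenerate case $r=0$ will be handled separately: then $\vvkt(\gv^{t-1})$ vanishes and the inequality reduces to $\psikt_t(\|\gv_t\|)\le\psikt_{t-1}(0)$, which follows from evenness together with consistency $2\psikt_{t-1}(0)=\psikt_t(1)+\psikt_t(-1)$ and monotonicity in $|\cdot|$.

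The main subtlety I anticipate is sign bookkeeping in the concavity step: Proposition~\ref{prop:kt_potential}(d) provides an inequality in what looks like the ``wrong'' direction for the naive argument, but the change of variable $x=\sqrt{s}$ introduces a $1/\sqrt{s}$ factor that exactly flips it to yield convexity of $\phi$ in $s$. Once this is cleanly handled, the rest---the parallel/perpendicular decomposition, the reduction to the extremal direction $\|\gv_t\|=1$ via monotonicity of $\psikt_t$, and the boundary arithmetic---are routine consequences of the already-established properties of the KT potential.
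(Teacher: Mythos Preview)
Your proposal is correct and is essentially the paper's argument, just unpacked. The paper invokes a separate lemma (Lemma~\ref{lem:extremes}, originally \citep[Lemma~10]{Orabona--Pal2016}) which uses exactly your convexity-of-$\phi(s)=\psikt_t(\sqrt{s})$ computation from Proposition~\ref{prop:kt_potential}(d) to reduce to the collinear case, and then applies the 1D single-round bound (Lemma~\ref{lem:coin_betting_potential_implication}, based on Proposition~\ref{prop:kt_potential}(a)--(c)) at the endpoints; the only cosmetic difference is that the paper keeps $\|\gv_t\|$ general and reduces the angle to $\pm1$ first, whereas you first push $\|\gv_t\|$ to $1$ by monotonicity and then vary $a$, but the ingredients and the key concavity step are identical.
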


\begin{proof}
Let $\fv_{t-1}\defeq \nsum\gv^{t-1}$.
Consider
\begin{align*}
(1+&\langle \gv_t,\vvkt(\gv^{t-1})\rangle ) \Psikt(\gv^{t-1}) - \Psikt(\gv^t)\\
&= \Psikt(\gv^{t-1}) + 
\langle \gv_t,\vvkt(\gv^{t-1})\rangle \Psikt(\gv^{t-1}) - \Psikt(\gv^t)\\
&= \psikt_{t-1}(\|\fv_{t-1}\|) + \Bigl\langle \gv_t,\bkt_t(\|\fv_{t-1}\|)\frac{\fv_{t-1}}{\|\fv_{t-1}\|}\Bigr\rangle \psikt_{t-1}(\|\fv_{t-1}\|) - \psikt_t(\|\fv_{t-1}+\gv_t\|)\\
&\stackrel{(a)}{\ge} 
\psikt_{t-1}(\|\fv_{t-1}\|) + 
\min_{r\in\{\pm 1\}} \{
r\|\gv_t\| \bkt_t(\|\fv_{t-1}\|) \psikt_{t-1}(\|\fv_{t-1}\|) - \psikt_t(\|\fv_{t-1}\|+r\|\gv_t\|)
\}\\
&= \min_{r\in\{\pm 1\}} \{
(1+r\|\gv_t\| \bkt_t(\|\fv_{t-1}\|)) \psikt_{t-1}(\|\fv_{t-1}\|) - \psikt_t(\|\fv_{t-1}\|+r\|\gv_t\|)
\}\\
&\ge \min_{g\in[-1,1]} \{
(1+g\bkt_t(\|\fv_{t-1}\|)) \psikt_{t-1}(\|\fv_{t-1}\|) - \psikt_t(\|\fv_{t-1}\|+g)
\}\\
&\stackrel{(b)}{\ge} 0.
\end{align*}
Here, we apply Lemma~\ref{lem:extremes} since $\psikt_t$ satisfies $x(\psikt_t)''(x)\ge (\psikt_t)'(x)$ for all $x\in[0,t)$, to have $(a)$ by plugging in $\uv\gets \gv_t$, $\vv\gets \fv_{t-1}$, $c(\|\uv\|,\|\vv\|)\gets \frac{\bkt_t(\|\fv_{t-1}\|)}{\|\fv_{t-1}\|} \psikt_{t-1}(\|\fv_{t-1}\|)$, and $h(\cdot)\gets \psikt_t(\cdot)$. 
$(b)$ follows from the single round bound for 1D case established in Lemma~\ref{lem:coin_betting_potential_implication}.
\end{proof}

The proof of Theorem~\ref{thm:continuous_coin_betting_to_Hilbert_OLO} now follows similarly to that of Theorem~\ref{thm:continuous_coin_betting_to_1D_OLO}.
\begin{proof}[Proof of Theorem~\ref{thm:continuous_coin_betting_to_Hilbert_OLO}]
We show $\wealth_t\ge \wealth_0\Psikt(\gv^t)$
by induction on $t$.
For $t=0$, it trivially holds. 
For $t\ge 1$, assume that $\wealth_{t-1}\ge \wealth_0\Psikt(\gv^{t-1})$ holds.
Then, we have
\begin{align*}
\wealth_t &= \langle \gv_t, \xvkt_t\rangle + \wealth_{t-1}\\
&= (1+\langle \gv_t,\vvkt(\gv^{t-1})\rangle) \wealth_{t-1}\\
&\stackrel{(a)}{\ge} (1+\langle \gv_t,\vvkt(\gv^{t-1})\rangle) \wealth_0\Psikt(\gv^{t-1})\\
&\stackrel{(b)}{\ge} 
\wealth_0\Psikt(\gv^{t}).
\end{align*}
Here, $(a)$ follows from the induction hypothesis and $(b)$ follows from the above lemma.
The regret bound follows by the same logic of the 1D case using Proposition~\ref{prop:regret_to_wealth} with the additional application of Lemma~\ref{lem:dual_composition_dual}, which implies that
$(\psikt_t)^\star(\uv)=(\psikt_t)^\star(\|\uv\|)$.
\end{proof}

\subsection{Proofs for Section~\ref{sec:olo_side_information}}
\subsubsection{Proof of Theorem~\ref{thm:olo_side_information}}

The following statement generalizes Proposition~\ref{prop:regret_to_wealth} for static competitors to adaptive competitors.
\begin{proposition}\label{prop:regret_to_wealth_side_information}
Let $\Phi\colon V\times\cdots\times V\to\Real$ be a convex function and let $\Phi^\star\colon V\times\cdots\times V\to \Real\cup \{+\infty\}$.
For any side information sequence $H=(h_t)_{t\ge 1}$, any $\gv_1,\ldots,\gv_T\in V^\star$, and 
any $\wv_t,\ldots,\wv_T\in V$, we have
\begin{align*}
\sup_{\uv_{1:S}\in V\times\cdots V}
\{\Reg(\uv_{1:S}[H];\gv^T)-\Phi(\uv_{1:S})\}
=-\sum_{t=1}^T \langle \gv_t,\wv_t\rangle
+\Phi^\star\Bigl(\sum_{t\in[T]\suchthat h_t=1} \gv_t,\ldots,\sum_{t\in[T]\suchthat h_t=S} \gv_t\Bigr),
\end{align*}
where $\Reg(\uv_{1:S}[H];\gv^T)\defeq \sum_{s=1}^S \sum_{t\in[T]\suchthat h_t=S} \langle \gv_t,\uv_s-\wv_t \rangle$.
\end{proposition}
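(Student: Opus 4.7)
The proof will follow the same three-line template as Proposition~\ref{prop:regret_to_wealth}, only now lifted to the product space $V^S \defeq V \times \cdots \times V$ equipped with the natural inner product $\langle (\yb_1,\ldots,\yb_S),(\uv_1,\ldots,\uv_S)\rangle \defeq \sum_{s=1}^S \langle \yb_s,\uv_s\rangle$. The single key observation is that the cumulative reward of an adaptive competitor $\uv_{1:S}[H]$ against $\gv^T$ decomposes along the states indexed by $H$: since $\uv_{1:S}[H]$ plays $\uv_{h_t}$ at time $t$, we have
\[
\sum_{t=1}^T \langle \gv_t, \uv_{h_t}\rangle = \sum_{s=1}^S \Bigl\langle \sum_{t\in[T]\suchthat h_t=s}\gv_t,\, \uv_s\Bigr\rangle = \Bigl\langle \Bigl(\sum_{t\in[T]\suchthat h_t=s}\gv_t\Bigr)_{s=1}^S,\, \uv_{1:S}\Bigr\rangle,
\]
where the last inner product is the product-space one. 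This rewrites the state-wise cumulative reward as a single linear functional in $\uv_{1:S} \in V^S$ whose ``gradient vector'' is the $S$-tuple of partial sums $\bigl(\sum_{t:h_t=s}\gv_t\bigr)_{s=1}^S$.

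With this in hand, I would next subtract the algorithm's cumulative loss $\sum_{t=1}^T \langle \gv_t,\wv_t\rangle$, which is independent of $\uv_{1:S}$ and can therefore be pulled outside the supremum. What remains inside the supremum is
\[
\sup_{\uv_{1:S}\in V^S} \Bigl\{\Bigl\langle \Bigl(\sum_{t:h_t=s}\gv_t\Bigr)_{s=1}^S,\,\uv_{1:S}\Bigr\rangle - \Phi(\uv_{1:S})\Bigr\},
\]
which is exactly the definition of the Fenchel conjugate $\Phi^\star$ of the convex function $\Phi\colon V^S\to\Real$ evaluated at the $S$-tuple $\bigl(\sum_{t:h_t=s}\gv_t\bigr)_{s=1}^S$. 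Putting the two pieces together yields the claimed identity.

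The argument is essentially bookkeeping; there is no real obstacle beyond ensuring the product-space inner product is set up consistently with how $\Phi^\star$ is defined, and that the $\sup$ is taken over all of $V^S$ (so the identity is an equality, not merely an inequality). Note that the displayed definition of $\Reg(\uv_{1:S}[H];\gv^T)$ in the statement has a minor typographical slip ($h_t=S$ should read $h_t=s$ in the inner summation), but the proof as sketched uses the correct interpretation of summing over $\{t:h_t=s\}$ for each $s\in[S]$. No extra structural assumption on $\Phi$ is needed beyond convexity, and the result specializes to Proposition~\ref{prop:regret_to_wealth} when $S=1$, recovering the static competitor case verbatim.
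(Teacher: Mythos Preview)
Your proposal is correct and follows essentially the same approach as the paper: expand the regret, pull the $\uv_{1:S}$-independent term $-\sum_{t=1}^T\langle\gv_t,\wv_t\rangle$ outside the supremum, and recognize the remaining supremum as the Fenchel conjugate of $\Phi$ evaluated at the tuple of per-state partial sums. The paper's proof is exactly this three-line computation, and your observation about the typographical slip ($h_t=S$ versus $h_t=s$) is also accurate.
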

\begin{proof}
By definition of Fenchel dual, we have
\begin{align*}
\sup_{\uv_{1:S}\in V\times\cdots V}\{\Reg(\uv_{1:S}[H];\gv^T)-\Phi(\uv_{1:S})\}
&= \sup_{\uv_{1:S}\in V\times\cdots V}
\Bigl\{\sum_{s=1}^S\sum_{t\in[T]\suchthat h_t=s} \langle \gv_t,\uv_s-\wv_t\rangle -\Phi(\uv_{1:S})\Bigr\}\\
&= -\sum_{t=1}^T \langle \gv_t,\wv_t\rangle +  \sup_{\uv_{1:S}\in V\times\cdots V}
\Bigl\{\sum_{s=1}^S\Bigl\langle \sum_{t\in [T]\suchthat h_t=s} \gv_t ,\uv_s\Bigr\rangle -\Phi(\uv_{1:S})\Bigr\}\\
&= -\sum_{t=1}^T \langle \gv_t,\wv_t\rangle
+\Phi^\star\Bigl(\sum_{t\in[T]\suchthat h_t=1} \gv_t,\ldots,\sum_{t\in[T]\suchthat h_t=S} \gv_t\Bigr).
\qedhere
\end{align*}
\end{proof}

We are now ready to prove Theorem~\ref{thm:olo_side_information}.
\begin{proof}[Proof of Theorem~\ref{thm:olo_side_information}]
Since the vectorial betting $\vvkt(\gv^{t-1};h^t)$ only affects the component potential $\Psikt(\gv^t(h_t;h^{t-1}))$ by construction, the wealth lower bound readily follows from the same argument in the proof of Theorem~\ref{thm:continuous_coin_betting_to_Hilbert_OLO}.
Now, we observe that
\[
\Psikt(\gv^T;h^T) = 2^T \prod_{s\in[S]}\qtkt_{T_s}(\|\nsum \gv^T(s;h^T)\|),
\]
where $T_s\defeq |\{t\in[T]\suchthat h_t=s\}|$.
Since $\qtkt_T(x)\ge \frac{1}{2^T e\sqrt{\pi}}\frac{1}{\sqrt{T}}e^{\frac{2x^2}{T}}$ for $T\ge 1$ by {\citep[Lemma~14]{Orabona--Pal2016}}, we have
\[
\Psikt(\gv^T;h^T) \ge \Bigl(\frac{1}{e\sqrt{\pi}}\Bigr)^{S'} \frac{1}{\sqrt{T_1'\cdots T_S'}} \exp\Bigl(\sum_{s=1}^S \frac{2\|\nsum \gv^T(s;h^T)\|^2}{T_s'}\Bigr),
\]
where $S'\defeq \sum_{s=1}^S 1\{T_s\ge 1\}$ and $T_s'\defeq T_s\vee 1$.
Applying Propositions~\ref{prop:regret_to_wealth_side_information} and \ref{prop:conjugate_base_function_product} then establishes the regret upper bound.
\end{proof}

\subsubsection{Proof of Theorem~\ref{thm:olo_multiple_side_information}}
We show $\wealth_t\ge \wealth_0\psimix(\gv^t;\hv^t)$
by induction on $t$.
For $t=0$, it trivially holds. 
For $t\ge 1$, assume that $\wealth_{t-1}\ge \wealth_0\psimix(\gv^{t-1};\hv^{t-1})$ holds.
Then, we have
\begin{align*}
\wealth_t &= \langle \gv_t, \xvmix_t(\gv^{t-1};\hv^t)\rangle + \wealth_{t-1}\\
&= (1+\langle \gv_t,\vvmix(\gv^{t-1};\hv^t)\rangle) \wealth_{t-1}\\
&\stackrel{(a)}{\ge} (1+\langle \gv_t,\vvmix(\gv^{t-1};\hv^t)\rangle) \wealth_0\psimix(\gv^{t-1};\hv^{t-1})\\
&\stackrel{(b)}{\ge} 
\wealth_0\psimix(\gv^{t};\hv^t).
\end{align*}
Here, $(a)$ follows from the induction hypothesis, and $(b)$ follows from the construction of $\vvmix(\gv^{t-1};\hv^t)$.
The regret guarantee for $m\in[M]$ readily follows from the construction of the mixture potential, which guarantees $\wealth_T\ge w_m\wealth_0\Psikt(g^T;(h^{(m)})^T)$.
\qed

\subsubsection{Matching lower bounds for tree side information}
\label{supp:sec:lower_bound}
We first require the following theorem from \citep{Orabona2019}.
\begin{restatable}[{\citealp[Theorem~5.11]{Orabona2019}}]{theorem}{ThmOLOSuffCondCoinBetting}
\label{thm:olo_suff_cond_coin_betting}
Suppose that an OLO algorithm satisfies that for each $t\ge 0$
\[
\sup_{\gv^t\in\unitball^t}\Reg(\mathbf{0};\gv^t)
=-\inf_{\gv^t\in\unitball^t}\sum_{i=1}^t \langle \gv_i,\wv_i\rangle
\le \wealth_0^{(t)}
\numberthis\label{eq:mild_olo_algo_condition}
\]
with some nondecreasing sequence $(\wealth_0^{(t)})_{t\ge 0}$.
Then, for each $T\ge 1$, there exists $\vv_1,\ldots,\vv_T\in \unitball$ such that
\[
\wv_t=\vv_t\Bigl(\wealth_0^{(T)}+\sum_{i=1}^{t-1}\langle \gv_i,\wv_i\rangle\Bigr) \quad\text{for all $t\in [T]$}.
\]
\end{restatable}

For a binary quantizer $Q\suchthat \unitball\to \{\pm 1\}$, let $H_{\Tree,Q}$ denote the tree side information with respect to a tree $\Tree$ and an auxiliary sequence $\Omega=(\omega_t)_{t\ge 1}$ with $\omega_t=Q(\gv_t)$.
\begin{restatable}{theorem}{ThmOLOSideInformationLowerBound}
\label{thm:olo_side_information_lower_bound}
Let $V=\Real^d$ be the $d$-dimensional Euclidean space.
Suppose that a binary quantizer $Q\suchthat\unitball\to\{\pm 1\}$ satisfies $Q(\ev_j)=1$ and $Q(-\ev_j)=-1$ for some $j\in[d]$.
For $T$ sufficiently large, for any \emph{causal} OLO algorithm that satisfies the condition~\eqref{eq:mild_olo_algo_condition} in Theorem~\ref{thm:olo_suff_cond_coin_betting}, for any binary suffix tree $\Tree$, there exist a sequence $\gv_1,\ldots,\gv_T\in\unitball$ and a competitor $(\uv_s^*)_{s\in\Tree}[H_{\Tree,Q}]\in\Mc(H_{\Tree,Q})$ such that 
\begin{align*}
\Reg((\uv_s^*)_{s\in\Tree}[H_{\Tree,Q}]);\gv^T) 
&\ge 
\sqrt{
\sum_{s\in\Tree} T_s\|\uv_{s}^*\|_2^2 \ln\Bigl(
\frac{(T/|\Tree|)^{|\Tree|}}{(\wealth_0^{(T)})^2} \sum_{s\in\Tree}  T_s\|\uv_{s}^*\|_2^2 + 1\Bigr)}
+ \wealth_0^{(T)}.
\end{align*}
\end{restatable}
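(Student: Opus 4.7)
The plan is to extend the 1D argument of \citet[Theorem~5.12]{Orabona2019} from static competitors to tree-adaptive competitors, using a Rissanen-type redundancy bound for the parametric family of tree-Markov sources on $\{\pm 1\}^T$. First, I would apply Theorem~\ref{thm:olo_suff_cond_coin_betting} to rewrite the algorithm in coin-betting form, $\wv_t = \vv_t(\wealth_0^{(T)} + \sum_{i<t}\langle \gv_i, \wv_i\rangle)$ with $\|\vv_t\|\le 1$. Then I would restrict the adversary to the one-dimensional subspace $\gv_t = g_t \ev_j$ with $g_t \in \{\pm 1\}$: by the assumption on $Q$, the auxiliary sequence becomes $\omega_t = g_t$, so the tree side information $H_{\Tree,Q}$ is driven by the sign sequence itself, and I would simultaneously restrict the competitor to $\uv_s^* = u_s^* \ev_j$ at each leaf $s\in\Tree$. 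With $b_t \defeq v_t^{(j)} \in [-1,1]$, the wealth factorizes as $\wealth_T = \wealth_0^{(T)} \prod_t(1+g_t b_t) = \wealth_0^{(T)} \cdot 2^T q(g^T)$, where $q(g^T)\defeq \prod_t (1+g_t b_t)/2$ is a causal probability on $\{\pm 1\}^T$, and the competitor's reward equals $\sum_{s\in\Tree} u_s^*(n_s^+ - n_s^-)$ with $n_s^\pm$ the number of visits to leaf $s$ with outcome $\pm 1$.

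Next, I would invoke a Rissanen-type lower bound for the tree-Markov family $P_{\mathbf{p}}(g^T) \defeq \prod_{s\in\Tree} p_s^{n_s^+(g^T)}(1-p_s)^{n_s^-(g^T)}$ parametrized by $\mathbf{p}=(p_s)_{s\in\Tree}\in[0,1]^{|\Tree|}$. Averaging $\mathbf{p}$ under independent Jeffreys priors per leaf (the classical argument behind the CTW minimax redundancy) implies that no causal $q$ can outperform every $P_{\mathbf{p}}$ simultaneously: there exist $\mathbf{p}$ and $g^T$ satisfying
\[ \log_2\frac{P_{\mathbf{p}}(g^T)}{q(g^T)} \ge \frac{1}{2}\sum_{s\in\Tree}\log_2 T_s - O(|\Tree|) \]
together with the concentration $T_s \in [T/(2|\Tree|),\, 2T/|\Tree|]$ for every leaf, so that $\prod_s T_s \gtrsim (T/|\Tree|)^{|\Tree|}$ up to a constant absorbable into the error term. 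Equivalently, on this realization the algorithm's wealth is bounded by $\wealth_T \le \wealth_0^{(T)} \cdot 2^T P_{\mathbf{p}}(g^T) \cdot ((T/|\Tree|)^{|\Tree|})^{-1/2}$.

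Finally, I would translate this log-likelihood bound into the stated regret lower bound by parametrizing $p_s = \frac{1}{2}(1 + \alpha u_s^*/U)$ with $U\defeq \max_s \|\uv_s^*\|_2$ and a tunable scale $\alpha>0$. A second-order Taylor expansion of $\ln P_{\mathbf{p}}(g^T) = \sum_s[n_s^+\ln p_s + n_s^-\ln(1-p_s)]$ about $p_s=1/2$ bounds the algorithm's log-excess-wealth by $\tfrac{\alpha^2}{2U^2}\sum_s T_s (u_s^*)^2 - \tfrac{1}{2}\ln((T/|\Tree|)^{|\Tree|}) + O(|\Tree|)$, while the competitor's reward is at least $\tfrac{\alpha}{U}\sum_s T_s (u_s^*)^2$ up to an $O(\alpha\sqrt{T})$ fluctuation from concentration of $n_s^+-n_s^-$ around $T_s(2p_s-1)$. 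Tuning $\alpha \asymp U\sqrt{\ln((T/|\Tree|)^{|\Tree|}\sum_s T_s(u_s^*)^2/(\wealth_0^{(T)})^2)\big/\sum_s T_s(u_s^*)^2}$ balances the two terms and yields the claimed bound, with the additive $\wealth_0^{(T)}$ term coming from the residual algorithm gain along the optimized sequence. The main technical obstacle will be the coupling between the realized visit counts $T_s$ and the sequence $g^T$---unlike the independent-per-round Rissanen setting, $T_s$ is itself a functional of $g^T$---which I plan to handle by working under the stationary measure of the tree-Markov source $P_{\mathbf{p}}$ and conditioning on the typical set of balanced visits; this is the step that forces the $(T/|\Tree|)^{|\Tree|}$ factor inside the logarithm.
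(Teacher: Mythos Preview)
Your reduction to one-dimensional coin betting via Theorem~\ref{thm:olo_suff_cond_coin_betting} and the invocation of a Rissanen-type redundancy bound match the paper's proof. The divergence is in the final step, translating the wealth upper bound into the regret lower bound. The paper does not fix the competitor in advance and then search for a bad sequence; instead, it first extracts from Rissanen's bound a \emph{single deterministic} sequence $\tilde{c}^T$ on which the algorithm's wealth satisfies
\[
\wealth_T \le \wealth_0^{(T)}\, f\bigl((\nsum \tilde{c}^T(s))_{s\in\Tree}\bigr),\qquad f((x_s)_{s\in\Tree}) \defeq \prod_{s\in\Tree} \beta_s \exp\Bigl(\frac{x_s^2}{2\alpha_s}\Bigr),
\]
with $\alpha_s \propto T_s$ and $\beta_s = \sqrt{|\Tree|/T}$, obtained by bounding $\max_{b}\prod_t(1+bc_t)$ via Lemma~\ref{lem:best_wealth_upper_bound}. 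It then \emph{defines} the competitor as the Fenchel-optimal choice $u_s^* \defeq \wealth_0^{(T)}\,\partial_s f$ evaluated at that realized sequence, so that the regret lower bound becomes exactly $\wealth_0^{(T)}\bigl(1 + f^\star\bigr)$, and $f^\star$ is computed in closed form (Proposition~\ref{prop:conjugate_base_function_product}) via the Lambert $W$ function.

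The payoff of the paper's route is that it is entirely deterministic and uses whatever visit counts $T_s$ the Rissanen-hard sequence happens to produce; the $(T/|\Tree|)^{|\Tree|}$ factor comes from $\prod_s \beta_s^{-2}$, not from any balancing of the $T_s$. This completely sidesteps the obstacle you flag---the coupling between $T_s$ and $g^T$---so no stationary-measure or typical-set conditioning is needed. Your Taylor-expansion-and-tune approach can in principle be made to work, but it is strictly more laborious: you would have to control both the fluctuations $n_s^+ - n_s^- - T_s(2p_s-1)$ and the visit counts $T_s$ simultaneously on one realization, whereas the Fenchel construction absorbs all of that into a one-line duality identity.
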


\begin{proof}
Without loss of generality, assume that the binary quantizer $Q\suchthat\unitball\to\{\pm 1\}$ satisfies $Q(\ev_1)=1$ and $Q(-\ev_1)=-1$.
For a binary sequence $c^T\in\{\pm 1\}^T$,
we set $\gv_t=(c_t,0,\ldots,0)$ for $c_t\in\{\pm 1\}$, so that $\langle \gv_t,\wv_t\rangle =c_t x_{t1}$.
Then, by Theorem~\ref{thm:olo_suff_cond_coin_betting}, we can write
\[
x_{t1}
=v_{t1}\Bigl(\wealth_0^{(T)} + \sum_{i=1}^{t-1} \langle \gv_i,\wv_i \rangle\Bigr)
=v_{t1}\Bigl(\wealth_0^{(T)} + \sum_{i=1}^{t-1} c_i x_{i1}\Bigr)
\]
for some $v_{t1}$ such that $|v_{t1}|\le 1$. 
Hence, the OLO problem with any causal algorithms satisfying~\eqref{eq:mild_olo_algo_condition} with respect to the 1D sequences $\gv^T$ can be equivalently viewed as the 1D coin betting with initial wealth $\wealth_0=\wealth_0^{(T)}$.

Now, we state the celebrated Rissanen's lower bound for universal compression in the form of the wealth upper bound for the coin betting. 
\citet{Rissanen1996} showed that for any probability assignment $q(x^T)$ on a binary sequence $x^T\in\{0,1\}^T$, there exists a sequence $\xt^T\in\{0,1\}$ such that
\[
q(\xt^T)\le e^{-\frac{|\Tree|}{2}\ln\frac{T}{|\Tree|}} \max_{p_{\Tree}} p_{\Tree}(\xt^T),
\]
where the maximum is over all possible tree sources $p_{\Tree}$ with the underlying tree $\Tree$.
This can be translated into the wealth upper bound for the standard coin betting with binary outcomes $c_t\in\{\pm 1\}$ thanks to the equivalence between the coin betting and universal compression: 
for any continuous coin betting algorithm which plays a relative bet $b_t\in[-1,1]$ at time $t$, there exists a binary sequence $\ct^T\in\{\pm 1\}^T$ such that
\begin{align*}
\frac{\wealth_T}{\wealth_0}=\prod_{t=1}^T (1+b_t \ct_t) 
&\le \Bigl(\frac{|\Tree|}{T}\Bigr)^{\frac{|\Tree|}{2}} \prod_{s\in\Tree}\max_{b_s\in[-1,1]} \prod_{t\in[T]\suchthat h_t=s} (1+b_s \ct_t)
\\
&\stackrel{(a)}{\le} \Bigl(\frac{|\Tree|}{T}\Bigr)^{\frac{|\Tree|}{2}} \prod_{s\in\Tree} \exp\Bigl(\frac{\ln 2}{T_s'} \Bigl(\sum_{t\in[T]\suchthat h_t=s} \ct_t\Bigr)^2\Bigr),\\
&= f\bigl((\nsum \ct^T(s;H_{\Tree,Q}))_{s\in\Tree}\bigr),
\numberthis\label{eq:tree_wealth_upper_bound_explicit}
\end{align*}
where $h_t$ denotes the suffix of the sequence $c^{t-1}$ with respect to $\Tree$ at time $t$, $T_s'\defeq T_s\vee 1$, $T_s\defeq |\{t\in[T]\suchthat h_t=s\}|$, $f((x_s)_{s\in\Tree})\defeq\prod_{s\in\Tree} h_s(x_s)$, and $h_s(x_s)=\b_s\exp(\frac{x_s^2}{2\a_s})$ with $\a_s=\frac{2T_{s'}}{\ln 2}$, and $\b_s=\sqrt{|\Tree|/T}$.
Here, $(a)$ follows by Lemma~\ref{lem:best_wealth_upper_bound}.

For the adversarial coin sequence $(\ct_t)_{t\ge 1}$ satisfying \eqref{eq:tree_wealth_upper_bound_explicit}, define $\gv_t\defeq (\ct_t,0,\ldots,0)$. Then, we have
\begin{align*}
\wealth_0^{(T)} + \sum_{t=1}^T \langle \tilde{\gv}_t,\wv_t\rangle
&= \wealth_0^{(T)} + \sum_{t=1}^T \ct_t x_{t1}\\
&\le \wealth_0^{(T)} f\bigl((\nsum \ct^T(s;H_{\Tree,Q}))_{s\in\Tree}\bigr)\\
&= \sum_{s\in\Tree} \bigl(\nsum \ct^T(s;H_{\Tree,Q})\bigr) u_s^* - \wealth_0^{(T)} f^\star\Bigl(\Bigl(\frac{|u_s^*|}{\wealth_0^{(T)}}\Bigr)_{s\in\Tree}\Bigr)\\
&= \sum_{t=1}^T \langle \gv_t,\uv_{h_t}^*\rangle - \wealth_0^{(T)} f^\star\Bigl(\Bigl(\frac{\|\uv_s^*\|_2}{\wealth_0^{(T)}}\Bigr)_{s\in\Tree}\Bigr),
\end{align*}
where $(u_s^*)_{s\in\Tree}=\wealth_0^{(T)} \nabla f((\nsum \ct^T(s;H_{\Tree,Q}))_{s\in\Tree})$ and $\uv_s^*\defeq (u_s^*,0,\ldots,0)$ for each $s\in\Tree$.
Rearranging the terms, we have
\begin{align*}
\Reg((\uv_s^*)_{s\in\Tree}[H_{\Tree,Q}]);\gv^T) 
&=\sum_{t=1}^T \langle \gv_t,\uv_{h_t}^*\rangle-\sum_{t=1}^T \langle \gv_t,\wv_t\rangle\\
&\ge \wealth_0^{(T)} + \wealth_0^{(T)} f^\star\Bigl(\Bigl(\frac{\|\uv_s^*\|_2}{\wealth_0^{(T)}}\Bigr)_{s\in\Tree}\Bigr).\qedhere
\end{align*}
\end{proof}

\subsubsection{Proof of Proposition~\ref{prop:ctw_update_validity}}

We use a backward induction over the depth $|s|$ to show that the recursion is well-defined.
First, if $|s|=D$, $\uvctw_s(\gv^{t-1})=\Psikt_s(\gv^{t-1}) \vvkt_s(\gv^{t-1})$. 
By definition of $\vvkt_s(\gv^{t-1})$, $\uvctw_s(\gv^{t-1})$ is a vector if $s$ is the active node at depth $D$, and a scalar otherwise.
Now, for $d\le D-1$, assume that $\uvctw_{s'}(\gv^{t-1})$ is a scalar if $s'$ is an active node and a vector otherwise for any $|s'|=d+1$ (induction hypothesis). Consider any node $s$ of $\Tc_D$ with $|s|=d$. 
If $s$ is an active node, then $\uvctw_{\bar{1}s}(\gv^{t-1})\uvctw_{1s}(\gv^{t-1})$ is a vector by the induction hypothesis, since exactly one of $\bar{1} s$ and $1s$ is active. Hence, $\uvctw_{s}(\gv^{t-1})$ is a vector.
If $s$ is not an active node, then, $\uvctw_{\bar{1}s}(\gv^{t-1})\uvctw_{1s}(\gv^{t-1})$ is a scalar by the induction hypothesis, since neither of $\bar{1} s$ and $1s$ is active. Hence, $\uvctw_{s}(\gv^{t-1})$ is a scalar. This completes the induction and thus the recursion is well-defined for all nodes $s$.

The claim $\uvctw_\lambda(\gv^{t-1})=\uvctw(\gv^{t-1})$ can be checked by a similar induction argument.
\qed

\subsubsection{Proof of Proposition~\ref{prop:vvctw}}
We claim that $\vvctw_s(\gv^{t-1})=\frac{\uvctw_s(\gv^{t-1})}{\Psictw_s(\gv^{t-1})}$ for any $s=s_d=\omega_{t-d}^{t-1}\in\Tc_D$, $d=0,\ldots,D$.
This trivially holds for the leaf node $s_D=\omega_{t-D}^{t-1}$.
For the internal nodes $s_d$ with $d<D$,
by plugging in the recursive formulas of $\uvctw(\gv^{t-1})$ and $\Psictw(\gv^{t-1})$, we can write
\[
\frac{\uvctw(\gv^{t-1})}{\Psictw(\gv^{t-1})}
=\frac{\b_s(\gv^{t-1})}{\b_s(\gv^{t-1})+1} \vvkt_s(\gv^{t-1}) + \frac{1}{\b_s(\gv^{t-1})+1} 
\frac{\uvctw_{\bar{1}s}(\gv^{t-1})}{\Psictw_{\bar{1}s}(\gv^{t-1})}
\frac{\uvctw_{1s}(\gv^{t-1})}{\Psictw_{1s}(\gv^{t-1})}.
\]
It is now enough to show that
\[
\frac{\uvctw_{s'}(\gv^{t-1})}{\Psictw_{s'}(\gv^{t-1})} =1
\text{ for }s'=\overline{\omega_{t-1-|s|}}s.
\]
This holds since $\uvctw_s=\Psictw_s$ for any off-path node $s\notin \rho(\omega_{t-D}^{t-1})$ by definition~\eqref{eq:uvctw_recursion}.\qed

\subsubsection{Proof of Proposition~\ref{prop:ctw_potential_ratio}}
Similar to the processing betas algorithm~\citep{Willems--Tjalkens--Ignatenko2006}, we only need to show that 
\[
\frac{\Psictw_{\overline{\omega_{t-1-|s|}}s}(\gv^{t})}{\Psictw_{\overline{\omega_{t-1-|s|}}s}(\gv^{t-1})} =1
\text{ for }s'=\overline{\omega_{t-1-|s|}}s
\text{ for any $s\notin \rho(\omega_{t-D}^{t-1})$}.
\]
Since the new symbol $\gv_t$ is added to a node $s$ if and only if $s\in\rho(\omega_{t-D}^{t-1})$, if $s\notin \rho(\omega_{t-D}^{t-1})$, then the CTW potential on the node $s$ will not be updated. This proves the claim.
\qed

\subsection{Technical lemmas}

\begin{lemma}[{\citealp[Lemma~10]{Orabona--Pal2016}}]
\label{lem:extremes}
Let $h\suchthat (-a,a)\to\Real$ be an even, twice differentiable function that satisfies $xh''(x)\ge h'(x)$ for all $x\in[0,a)$.
Let $c\suchthat [0,\infty)\times[0,\infty)\to\Real$ be an arbitrary function.
If $u,v\in\Hc$ satisfy $\|u\|+\|v\|<a$, then
\begin{align*}
c(\|u\|,\|v\|)\cdot\langle u,v\rangle - h(\|u+v\|)
&\ge \min_{r\in\{\pm 1\}} \{
rc(\|u\|,\|v\|)\|u\|\|v\| - h(\|u\|+r\|v\|)
\}.
\end{align*}
\end{lemma}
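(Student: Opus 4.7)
The plan is to reduce the inequality to a one-dimensional concavity argument along the inner product $\langle u,v\rangle$. Concretely, with $\|u\|$ and $\|v\|$ fixed, note that by Cauchy--Schwarz the quantity $\rho \defeq \langle u,v\rangle$ ranges over the interval $[-\|u\|\|v\|,\,\|u\|\|v\|]$, and that $\|u+v\|$ depends on $u,v$ only through $\rho$ via $\|u+v\| = \sqrt{\|u\|^2+\|v\|^2+2\rho}$. Thus the left-hand side of the claimed inequality, viewed as a function of $(u,v)$ with $\|u\|,\|v\|$ fixed, is a function $F(\rho) \defeq c(\|u\|,\|v\|)\rho - h\bigl(\sqrt{\|u\|^2+\|v\|^2+2\rho}\bigr)$ of the scalar $\rho$. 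The endpoints $\rho = \pm \|u\|\|v\|$ correspond exactly to $\|u+v\| = \bigl|\|u\|\pm\|v\|\bigr|$, and, using that $h$ is even, these give the two quantities on the right-hand side of the inequality. Hence it suffices to prove that the minimum of $F$ over $[-\|u\|\|v\|,\,\|u\|\|v\|]$ is attained at an endpoint.

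To show this, I would prove that $F$ is concave on this interval, since a concave function on a compact interval attains its minimum at one of the endpoints. Writing $t = \sqrt{\|u\|^2+\|v\|^2+2\rho}$ so that $dt/d\rho = 1/t$, a direct differentiation yields
\[
F'(\rho) \;=\; c(\|u\|,\|v\|) \;-\; \frac{h'(t)}{t},
\qquad
F''(\rho) \;=\; -\,\frac{t\,h''(t) - h'(t)}{t^{3}}.
\]
The condition $\|u\|+\|v\| < a$ guarantees that $t < a$ throughout, so the hypothesis $x h''(x)\ge h'(x)$ applies and gives $t h''(t) - h'(t) \ge 0$, hence $F''(\rho)\le 0$. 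Concavity of $F$ in $\rho$ follows, and the desired inequality is immediate.

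Putting the pieces together: the proof reduces to (i) the reparametrization by $\rho = \langle u,v\rangle$, (ii) a routine chain-rule computation of $F''$, (iii) invoking the hypothesis $xh''(x)\ge h'(x)$ for concavity, and (iv) using evenness of $h$ to rewrite the endpoint values as $h(\|u\|+r\|v\|)$ for $r\in\{\pm 1\}$. Degenerate cases $u=0$ or $v=0$ are trivial since the interval for $\rho$ collapses to a point and both sides of the inequality coincide. I expect the only subtle point is verifying that the hypothesis is stated for $[0,a)$ while our $t$ could in principle equal $0$ when $u=-v$; however, that case falls under the degenerate handling or can be absorbed by noting that evenness of $h$ makes $h'(0)=0$ and the limit $h'(t)/t$ is controlled by the same inequality. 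Thus the main (and really the only) obstacle is the second-derivative bookkeeping to arrive at the clean expression for $F''(\rho)$; everything else is structural.
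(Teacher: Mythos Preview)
Your proposal is correct and follows essentially the same approach as the paper's proof: reparametrize by the inner product (the paper uses the normalized $\alpha=\langle u,v\rangle/(\|u\|\|v\|)$ while you use $\rho=\langle u,v\rangle$, a cosmetic difference), show concavity via the second derivative and the hypothesis $xh''(x)\ge h'(x)$, and conclude that the minimum is attained at an endpoint, using evenness of $h$ to identify the endpoint values with $h(\|u\|+r\|v\|)$.
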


\begin{proof}[Proof sketch]
It is easy to check that the inequality holds if $u=0$ or $v=0$.
Hence, we assume $u,v\neq 0$. With 
$\a \defeq \langle u,v\rangle/(\|u\|\|v\|)$, we can write the left hand side of the desired inequality as 
\[
f(\a) \defeq c(\|u\|,\|v\|)\|u\|\|v\| \a - h(\sqrt{\|u\|^2+\|v\|^2+2\a\|u\|\|v\|}).
\]
Since the function $h$ is assumed to be even, it is equivalent to showing that
\[
\inf_{\a\in[-1,1]} f(\a) = \min\{f(+1),f(-1)\}.
\]
By using the condition $xh''(x)\ge h'(x)$, one can easily show that $f$ is concave by checking $f''(\a)\le 0$, which concludes the proof.
\end{proof}

\begin{lemma}[{\citealp[Example 13.7]{Bauschke--Combettes2011}}]
\label{lem:dual_composition_dual}
Let $\phi\suchthat\Real \to(-\infty,+\infty]$ be even. Then $(\phi\circ\|\cdot\|)^\star=\phi^\star\circ\|\cdot\|$.
\end{lemma}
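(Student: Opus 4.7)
The plan is to compute both sides of the claimed identity directly from the definition of Fenchel conjugate and show they coincide. Write $f \defeq \phi \circ \|\cdot\|$, so that for any $u \in V$,
\[
f^\star(u) = \sup_{x \in V}\{\langle u, x\rangle - \phi(\|x\|)\},
\qquad
(\phi^\star \circ \|\cdot\|)(u) = \sup_{r \in \Real}\{r\|u\| - \phi(r)\}.
\]

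The first step is to reduce the supremum defining $f^\star(u)$ to a one-dimensional supremum over the ray $r \ge 0$. The upper bound comes from Cauchy--Schwarz: $\langle u, x\rangle \le \|u\|\|x\|$, so
\[
f^\star(u) \le \sup_{x \in V}\{\|u\|\|x\| - \phi(\|x\|)\} = \sup_{r \ge 0}\{r\|u\| - \phi(r)\},
\]
using the substitution $r = \|x\|$. The matching lower bound is obtained by choosing $x = r u / \|u\|$ for $u \neq 0$ (and handling $u = 0$ trivially), which gives $\langle u, x\rangle = r\|u\|$ and $\|x\| = r$; this yields $f^\star(u) \ge \sup_{r \ge 0}\{r\|u\| - \phi(r)\}$. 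Hence $f^\star(u) = \sup_{r \ge 0}\{r\|u\| - \phi(r)\}$.

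The second step uses evenness of $\phi$ to extend the one-dimensional supremum from $r \ge 0$ to all of $\Real$. For any $r < 0$, set $r' = -r > 0$; since $\phi(r) = \phi(r')$ and $r\|u\| \le r'\|u\|$ (as $\|u\| \ge 0$), we get $r\|u\| - \phi(r) \le r'\|u\| - \phi(r')$. Therefore
\[
\sup_{r \ge 0}\{r\|u\| - \phi(r)\} = \sup_{r \in \Real}\{r\|u\| - \phi(r)\} = \phi^\star(\|u\|),
\]
which is the desired identity. The $u = 0$ case similarly reduces to $-\inf_{r \ge 0}\phi(r) = -\inf_{r \in \Real}\phi(r)$ by evenness.

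There is no real obstacle here: the only subtlety is remembering that Cauchy--Schwarz attains equality at $x = ru/\|u\|$ (for $u \neq 0$), which is exactly what allows the two-sided bound, and that evenness is precisely what lets the one-sided supremum over $r \ge 0$ be upgraded to a supremum over all of $\Real$ without changing its value.
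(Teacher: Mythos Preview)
Your proof is correct: the Cauchy--Schwarz upper bound together with the explicit choice $x = r u/\|u\|$ gives $f^\star(u) = \sup_{r \ge 0}\{r\|u\| - \phi(r)\}$, and evenness of $\phi$ extends this to $\sup_{r \in \Real}\{r\|u\| - \phi(r)\} = \phi^\star(\|u\|)$; the $u=0$ case is handled cleanly as well. The paper does not supply its own proof of this lemma---it simply cites \citep[Example~13.7]{Bauschke--Combettes2011}---so there is nothing in the paper to compare your argument against, but what you have written is the standard direct verification and is entirely sound.
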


\begin{lemma}[{\citealp[Lemma 5.8]{Orabona2019}}]
\label{lem:dual_scale}
Let $f$ be a function and let $f^\star$ be its Fenchel conjugate.
For $a>0$ and $b\in\Real$, the Fenchel conjugate of $g(x)=af(x)+b$ is $g^\star(z)=af^\star(z/a)-b$.
\end{lemma}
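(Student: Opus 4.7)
The plan is to derive the claim directly from the definition of the Fenchel conjugate, $g^\star(z) = \sup_x (zx - g(x))$, treating the transformation $g(x) = af(x) + b$ as two elementary modifications (a positive rescaling and an additive constant) that each have a clean dual counterpart.

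First, I would substitute the definition of $g$ into the supremum, giving
\[
g^\star(z) = \sup_{x} \bigl(zx - af(x) - b\bigr) = -b + \sup_{x} \bigl(zx - af(x)\bigr),
\]
since the constant $-b$ does not depend on $x$ and can be pulled outside the supremum. This handles the additive shift.

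Next, for the rescaling, I would use $a > 0$ to factor $a$ out of the expression inside the supremum:
\[
\sup_{x}\bigl(zx - af(x)\bigr) = \sup_{x}\, a\Bigl(\tfrac{z}{a}\, x - f(x)\Bigr) = a\sup_{x}\Bigl(\tfrac{z}{a}\, x - f(x)\Bigr) = a f^\star(z/a),
\]
where the middle equality is the only place where positivity of $a$ matters (for $a<0$ the sup would turn into an inf, a standard pitfall worth flagging). Combining the two displays yields $g^\star(z) = a f^\star(z/a) - b$, which is the claimed identity.

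There is essentially no obstacle here: the argument is a two-line manipulation of the definition, with the only technical care being the sign of $a$ when moving it outside the supremum. Nothing beyond the definition of Fenchel conjugate is needed, and no convexity or regularity assumptions on $f$ are required for the identity to hold.
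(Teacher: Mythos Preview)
Your argument is correct and is exactly the standard two-line computation from the definition. The paper itself does not supply a proof for this lemma---it is stated with a citation to \citep[Lemma~5.8]{Orabona2019}---so there is nothing further to compare.
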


\begin{lemma}[{\citealp[Theorem~5.8]{Orabona2019}}]
For a convex, proper, closed function $h\suchthat \Real^d\to(-\infty,+\infty]$, we have $\langle \th,x\rangle \ge h(x)+h^\star(\th)$, where the equality is attained if and only if $x\in \partial h^\star(\th)$. 
\end{lemma}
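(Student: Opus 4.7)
The statement is the classical Fenchel--Young inequality together with its equality condition, and I note in passing that the inequality direction as typeset ($\ge$) appears to be a typo for $\le$ (the example $h(x)=\tfrac12\|x\|^2$, $h^\star(\theta)=\tfrac12\|\theta\|^2$ with $x\ne\theta$ violates $\ge$ but satisfies $\le$). I will prove $\langle \th,x\rangle \le h(x)+h^\star(\th)$ with equality iff $x\in\partial h^\star(\th)$. The inequality itself is immediate from the definition of the Fenchel conjugate: since
\[
h^\star(\th) \defeq \sup_{y\in\Real^d}\{\langle \th,y\rangle - h(y)\} \ge \langle \th,x\rangle - h(x)
\]
for every admissible $x$ (trivially also when $h(x)=+\infty$), rearranging yields the bound.

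\textbf{Equality condition.} For the ``only if'' direction, suppose $\langle\th,x\rangle = h(x)+h^\star(\th)$. Then $x$ attains the supremum in the definition of $h^\star(\th)$, equivalently $y\mapsto \langle \th,y\rangle - h(y)$ is maximized at $y=x$, which by the definition of subgradient is equivalent to $\th\in\partial h(x)$. To pivot to $\partial h^\star(\th)$, I will invoke the Fenchel--Moreau theorem: because $h$ is proper, closed, and convex, $h^{\star\star}=h$. Combined with the basic conjugate-subgradient inversion
\[
\th\in\partial h(x)\iff x\in\partial h^\star(\th),
\]
this gives $x\in\partial h^\star(\th)$. For the ``if'' direction, $x\in\partial h^\star(\th)$ is (by the same definition of subgradient applied to $h^\star$) the statement that $\th$ maximizes $\th'\mapsto \langle \th',x\rangle - h^\star(\th')$, so $\langle\th,x\rangle - h^\star(\th) = h^{\star\star}(x) = h(x)$, which rearranges to the equality case of Fenchel--Young.

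\textbf{Main obstacle.} There is no genuine technical obstacle; the only nontrivial ingredient is the subgradient inversion $\th\in\partial h(x)\iff x\in\partial h^\star(\th)$. This is a standard consequence of $h^{\star\star}=h$ (valid precisely because $h$ is proper, closed, and convex, matching the hypotheses), and can be proven directly by observing that each side of the equivalence is exactly the Fenchel--Young equality $\langle\th,x\rangle = h(x)+h^\star(\th)$ interpreted via the definition of the subdifferential of $h$ or of $h^\star$, respectively. Hence the argument reduces entirely to unpacking the definition of the Fenchel conjugate and invoking Fenchel--Moreau once.
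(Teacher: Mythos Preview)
Your proof is correct, and your observation about the inequality direction being a typo (should be $\le$, not $\ge$) is well-taken. Note, however, that the paper does not actually supply a proof of this lemma: it is stated with a citation to \citep[Theorem~5.8]{Orabona2019} and left unproved, so there is no ``paper's own proof'' to compare against. Your argument via the definition of the Fenchel conjugate, followed by the subgradient inversion $\th\in\partial h(x)\iff x\in\partial h^\star(\th)$ justified through Fenchel--Moreau (which is where the closedness hypothesis enters), is the standard textbook route and is complete.
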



Since $f(x)\ge h(x)$ for any $x\in\Real$ implies $f^\star(u)\ge h^\star(u)$ for any $u\in\Real$, it is enough to find the conjugate dual of a function $h(x)=\b\exp(\frac{x^2}{2\a})$ for $\a,\b>0$.

The \emph{Lambert function} $W\suchthat (-1/e,\infty)\to[0,\infty)$ is defined by the equation $x=W(x)e^{W(x)}$ for $x\ge 0$.

\begin{lemma}[{\citealp[Lemma 17]{Orabona--Pal2016}}]
For $x\ge 0$,
\[
0.6321 \ln(x+1) \le W(x) \le \ln(x+1).
\]
\end{lemma}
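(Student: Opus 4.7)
The plan is to reduce both bounds to elementary one-variable calculus by exploiting that $W$ is the inverse on $[0,\infty)$ of the strictly increasing map $y\mapsto ye^y$. In particular, for any $c>0$, applying $y\mapsto ye^y$ to both sides shows that $W(x)\ge c\ln(x+1)$ is equivalent to $x\ge c(x+1)^c\ln(x+1)$, and similarly for the opposite inequality.

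For the upper bound (taking $c=1$), it suffices to verify $x\le (x+1)\ln(x+1)$ for $x\ge 0$. Setting $\phi(x)\defeq (x+1)\ln(x+1)-x$, we have $\phi(0)=0$ and $\phi'(x)=\ln(x+1)\ge 0$, so $\phi\ge 0$ on $[0,\infty)$; this part is entirely routine.

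For the lower bound with $c=1-1/e$, rather than analyze the transcendental critical-point equation arising from $x-c(x+1)^c\ln(x+1)$ directly, I would change variables to $y=W(x)$, so that $x=ye^y$ and the claim becomes $f(y)\defeq y-c\ln(1+ye^y)\ge 0$ for $y\ge 0$. Since $f(0)=0$, it is enough to show $f'(y)\ge 0$ on $[0,\infty)$. A direct computation gives
\[
f'(y) = 1 - \frac{c(1+y)e^y}{1+ye^y},
\]
and after clearing denominators, using $c-1=-1/e$, and multiplying through by $e$, the inequality $f'(y)\ge 0$ becomes equivalent to $e^y(e-1-y)\le e$. The auxiliary function $k(y)\defeq e^y(e-1-y)$ satisfies $k'(y)=e^y(e-2-y)$, so it is maximized on $[0,\infty)$ at $y=e-2$ with value $k(e-2)=e^{e-2}$; since $e-2<1$ we have $e^{e-2}<e$, which establishes $f'(y)>0$ throughout and hence $f\ge 0$, yielding the claim.

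The only delicate step is the algebraic cleanup of $f'(y)\ge 0$; the specific constant $c=1-1/e$ is built in precisely so that this inequality reduces to the clean bound $e^y(e-1-y)\le e$, and a marginally larger $c$ would cause the inequality to fail near $y=e-2$, so the (near-)tightness of the stated constant $0.6321$ becomes visible directly from the maximization step. Aside from guessing this change of variables and rearrangement, no serious obstacle is involved; everything else is one-variable calculus. The constant $0.6321$ in the statement is a rational underestimate of $1-1/e\approx 0.63212$, so the stronger inequality with $c=1-1/e$ established above implies the stated one.
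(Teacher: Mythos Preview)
The paper does not supply its own proof of this lemma; it is quoted from \citet[Lemma~17]{Orabona--Pal2016} and accompanied only by a remark identifying the constant $0.6321$ with $1/b^*$ for a root $b^*$ of a certain transcendental equation. So there is no argument in the present paper to compare your proof against. Your proof itself is correct: the upper bound is immediate from $\phi(x)=(x+1)\ln(x+1)-x\ge 0$, and for the lower bound the substitution $y=W(x)$ reduces the claim to $f(y)=y-c\ln(1+ye^y)\ge 0$, your rearrangement of $f'(y)\ge 0$ into $e^y(e-1-y)\le e$ (for $c=1-1/e$) is valid, and the left side indeed peaks at $y=e-2$ with value $e^{e-2}<e$.

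One inaccuracy in your commentary (not in the proof): the claim that a marginally larger $c$ would already break $f'(y)\ge 0$ near $y=e-2$ is wrong. Rewriting $f'(y)\ge 0$ as $c\le (e^{-y}+y)/(1+y)$, the right side is minimized where $e^y=2+y$, i.e.\ near $y\approx 1.15$, with value about $0.682$; thus your derivative argument works for every $c\lesssim 0.682$, not only $c=1-1/e\approx 0.632$. The constant $0.6321$ in the stated lemma is therefore not close to sharp, contrary to your remark, though this does not affect the validity of the bound you establish.
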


\begin{remark}
Here, $0.6321\ldots\approx 1/b^*$, where $b^*$ is the solution to the equation
\[
\frac{eb}{(e+1)b+1} = \frac{b}{(b+1)\ln(b+1)}.
\]
\end{remark}

\begin{proposition}[{\citealp[Lemma 18]{Orabona--Pal2016}}]
\label{prop:conjugate_base_function}
For $h(x)=\b\exp(\frac{x^2}{2\a})$ with $\a,\b>0$,
\[
h^\star(y)
=y\sqrt{\a W\Bigl(\frac{\a y^2}{\b^2}\Bigr)} - \b\exp\Bigl(\half W\Bigl(\frac{\a y^2}{\b^2}\Bigr)\Bigr)
=y\sqrt{\a}\Bigl(\sqrt{W\Bigl(\frac{\a y^2}{\b^2}\Bigr)} - \sqrt{\frac{1}{W(\frac{\a y^2}{\b^2})}}\Bigr).
\]
In particular, 
\[
h^\star(y) \le y\sqrt{\a \ln\Bigl(\frac{\a y^2}{\b^2}+1\Bigr)} - \b.
\]
\end{proposition}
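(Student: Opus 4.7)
The plan is to compute the Fenchel conjugate $h^\star(y) = \sup_{x \in \Real}(xy - h(x))$ by first-order optimality, then identify the resulting fixed-point condition as the defining equation of the Lambert function $W$. Throughout, by evenness of $h$ and symmetry $h^\star(-y)=h^\star(y)$, it suffices to treat $y \ge 0$, for which the supremum is attained at some $x^* \ge 0$.

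The first main step is to set $\tfrac{d}{dx}(xy - \beta e^{x^2/(2\alpha)}) = 0$, which gives $y = \tfrac{\beta x}{\alpha} e^{x^2/(2\alpha)}$. Making the substitution $u := x^2/(2\alpha)$ turns this into $y^2 = \tfrac{2\beta^2 u}{\alpha} e^{2u}$, or equivalently
\[
(2u) e^{2u} = \frac{\alpha y^2}{\beta^2}.
\]
By the defining identity of the Lambert function ($W(z)e^{W(z)} = z$ for $z \ge 0$), this uniquely determines $2u = W\!\bigl(\tfrac{\alpha y^2}{\beta^2}\bigr)$, so the optimizer is $x^* = \sqrt{\alpha W(\alpha y^2/\beta^2)}$. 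Plugging back then yields the first claimed closed form $h^\star(y) = y\sqrt{\alpha W(\alpha y^2/\beta^2)} - \beta \exp\!\bigl(\tfrac12 W(\alpha y^2/\beta^2)\bigr)$.

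The equivalent second form is a purely algebraic rewriting: the identity $W(z)e^{W(z)} = z$ gives $e^{W(z)/2} = \sqrt{z/W(z)}$, so $\beta \exp\!\bigl(\tfrac12 W(\alpha y^2/\beta^2)\bigr) = y\sqrt{\alpha}/\sqrt{W(\alpha y^2/\beta^2)}$, and factoring $y\sqrt{\alpha}$ delivers the stated expression. For the inequality, I use the two simple bounds $W(z) \le \ln(z+1)$ (from the preceding Lambert bound lemma) on the first term, and $\exp\!\bigl(\tfrac12 W(z)\bigr) \ge 1$ (since $W(z) \ge 0$ for $z \ge 0$) on the second term, giving
\[
h^\star(y) \le y\sqrt{\alpha \ln\!\Bigl(\tfrac{\alpha y^2}{\beta^2}+1\Bigr)} - \beta.
\]

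There is no real obstacle here: the calculation is essentially one chain-rule derivative, one change of variables, and one invocation of $W$'s defining equation. The only minor subtlety is justifying that the stationary point is in fact the global maximum and handling the $y = 0$ (resp. $y < 0$) boundary case, but the former is immediate from strict convexity of $h$ (hence strict concavity of $xy - h(x)$), and the latter follows from evenness of $h$, so both are essentially bookkeeping rather than substantive obstacles.
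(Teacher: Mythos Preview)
Your proof is correct. The paper does not actually supply its own proof of this proposition---it is simply cited from \citet[Lemma~18]{Orabona--Pal2016}---but your argument matches exactly the method the paper uses to prove the product generalization (Proposition~\ref{prop:conjugate_base_function_product}): write the stationarity condition, square and rearrange to obtain $(2u)e^{2u} = \alpha y^2/\beta^2$, invoke the defining relation of the Lambert $W$ function, and substitute back; the second displayed form and the inequality then follow from $e^{W(z)/2}=\sqrt{z/W(z)}$ and the bounds $0\le W(z)\le\ln(1+z)$ exactly as you indicate.
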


For a generalization with the product potential, we also have the following proposition.
\begin{proposition}
\label{prop:conjugate_base_function_product}
Define $f_i(y_i)=\b_i\exp(\frac{y_i^2}{2\a_i})$ with $\a_i,\b_i>0$ for each $i\in S$, and define
$f(y_1,\ldots,y_S)=f_1(y_1)\cdots f_S(y_S)$.
Then, we have
\[
f^\star(y_1,\ldots,y_S)
=\sqrt{\a_1y_1^2+\ldots+\a_Sy_S^2}\Bigl(
\sqrt{W\Bigl(\frac{\a_1y_1^2+\ldots+\a_Sy_S^2}{\b_1^2\cdots\b_S^2}\Bigr)}
-\frac{1}{\sqrt{W\Bigl(\frac{\a_1y_1^2+\ldots+\a_Sy_S^2}{\b_1^2\cdots\b_S^2}\Bigr)}}
\Bigr).
\]
In particular, 
\[
f^\star(y_1,\ldots,y_S)
\le \sqrt{(\a_1y_1^2+\ldots+\a_Sy_S^2)\ln\Bigl(\frac{\a_1y_1^2+\ldots+\a_Sy_S^2}{\b_1^2\cdots\b_S^2}+1\Bigr)} - \b_1\cdots\b_S
\]
\end{proposition}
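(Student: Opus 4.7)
The plan is to mimic the proof of the univariate case in Proposition~\ref{prop:conjugate_base_function} (Lemma 18 of Orabona--P\'al 2016), exploiting the fact that the dependence of $f$ on $(y_1,\ldots,y_S)$ collapses into a single scalar quantity $A \defeq \sum_{i=1}^S \alpha_i y_i^2$. First, I would rewrite $f$ in the multiplicative-exponential form
\[
f(x_1,\ldots,x_S) = B \exp\Bigl(\tfrac{1}{2}\sum_{i=1}^S x_i^2/\alpha_i\Bigr),
\qquad B\defeq \beta_1\cdots\beta_S,
\]
and compute $f^\star(y)=\sup_{x\in\Real^S}\{\sum_i y_ix_i - f(x)\}$ by setting $\nabla_x=0$. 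This yields the first-order condition $y_i = (x_i/\alpha_i)\,f(x)$, so the maximizer satisfies $x_i^* = \alpha_i y_i/M$ where $M\defeq f(x^*)$.

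Next I would derive a self-consistency equation for $M$. Substituting $x_i^*$ into $f$ gives $M = B\exp(A/(2M^2))$, i.e.\ $M^2 = B^2 e^{A/M^2}$. Setting $u \defeq A/M^2$ turns this into $A/B^2 = ue^u$, whose unique nonnegative solution is $u=W(A/B^2)$; hence $M = \sqrt{A/W(A/B^2)}$. Substituting back, $\sum_i y_i x_i^* = A/M = \sqrt{A\cdot W(A/B^2)}$, so
\[
f^\star(y) = \frac{A}{M}-M = \sqrt{A}\Bigl(\sqrt{W(A/B^2)}-1/\sqrt{W(A/B^2)}\Bigr),
\]
which is the claimed closed form.

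For the upper bound, I would apply the two elementary inequalities $W(x)\le\ln(x+1)$ (Lemma cited from Orabona--P\'al) to bound the first term, and $M\ge B$ (immediate from $M^2 = B^2 e^{A/M^2}\ge B^2$) to bound the second term: the latter gives $\sqrt{A}/\sqrt{W(A/B^2)} = M \ge B$, so $-\sqrt{A}/\sqrt{W(A/B^2)} \le -B$. Combining yields $f^\star(y)\le \sqrt{A\ln(A/B^2+1)} - B$, as required.

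I do not expect any genuine obstacle here: the derivation is a direct multivariate lift of the scalar argument, and the only subtlety is noticing that the optimality conditions couple the coordinates only through the scalar $M$, so the Lambert-$W$ substitution applies verbatim with $y^2/\beta^2$ replaced by $A/B^2$. The mildly delicate step is verifying that the dropped quantity in the upper bound is nonnegative, which follows from the self-consistency equation $M^2 = B^2 e^{A/M^2}\ge B^2$ rather than from any additional estimate on $W$.
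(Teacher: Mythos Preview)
Your proposal is correct and follows essentially the same route as the paper's proof: set the gradient to zero to get $y_i=(x_i/\alpha_i)f(x)$, reduce to a scalar equation that is solved via the Lambert $W$ function, and assemble $f^\star=A/M-M$. You treat general $S$ directly (the paper writes out $S=2$ and asserts the extension) and you spell out the ``in particular'' upper bound via $W(x)\le\ln(x+1)$ and $M\ge B$, which the paper leaves implicit; otherwise the arguments coincide.
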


\begin{proof}
For the sake of simplicity, we prove only for $S=2$. 
The proof can be generalized to any $S\ge 2$ with little modification.
To find
\[
f^\star(y_1,y_2)=\sup_{x_1,x_2} (y_1x_1+y_2x_2 - f_1(x_1)f_2(x_2)),
\]
we consider the stationarity conditions
\[
\frac{\partial}{\partial x_i} (y_1x_1+y_2x_2 - f_1(x_1)f_2(x_2)) = 0
\]
for $i\in \{1,2\}$, which leads to 
\[
\begin{cases}
y_1&=f_1'(x_1)f_2(x_2),\\
y_2&=f_1(x_1)f_2'(x_2).
\end{cases}
\]
Since $f_i'(x)=\frac{x}{\a_i}f_i(x)$, we have
\[
\begin{cases}
y_1&=\frac{x_1}{\a_1}f_1(x_1)f_2(x_2),\\
y_2&=\frac{x_2}{\a_2}f_1(x_1)f_2(x_2).
\end{cases}
\]
Manipulating the equations, we have
\[
\Bigl(\frac{x_1^2}{\a_1}+\frac{x_2^2}{\a_2}\Bigr)\exp\Bigl(\frac{x_1^2}{\a_1}+\frac{x_2^2}{\a_2}\Bigr)=\frac{\a_1 y_1^2+\a_2 y_2^2}{\b_1^2\b_2^2},
\]
which leads to
\[
\frac{x_1^2}{\a_1}+\frac{x_2^2}{\a_2} = W\Bigl(\frac{\a_1y_1^2+\a_2y_2^2}{\b_1^2\b_2^2}\Bigr).
\]
Hence, 
\[
f(x_1^*,x_2^*)=\b_1\b_2\exp\Bigl(\half W\Bigl(\frac{\a_1y_1^2+\a_2y_2^2}{\b_1^2\b_2^2}\Bigr)\Bigr)
=\sqrt{\frac{\a_1y_1^2+\a_2y_2^2}{W(\frac{\a_1y_1^2+\a_2y_2^2}{\b_1^2\b_2^2})}}.
\]
Finally, we can compute 
\[
y_1x_1^*+y_2x_2^*
=\frac{\a_1 y_1^2+\a_2y_2^2}{f(x_1^*,x_2^*)}
=\sqrt{(\a_1y_1^2+\a_2y_2^2)W\Bigl(\frac{\a_1y_1^2+\a_2y_2^2}{\b_1^2\b_2^2}\Bigr)},
\]
whence
\begin{align*}
f^\star(y_1,y_2)
&=y_1x_1^*+y_2x_2^*-f(x_1^*,x_2^*)
\\&
=\sqrt{\a_1y_1^2+\a_2y_2^2}\Bigl(
\sqrt{W\Bigl(\frac{\a_1y_1^2+\a_2y_2^2}{\b_1^2\b_2^2}\Bigr)}-\frac{1}{\sqrt{W(\frac{\a_1y_1^2+\a_2y_2^2}{\b_1^2\b_2^2})}}
\Bigr).
\qedhere
\end{align*}
\end{proof}

\begin{lemma}[{\citealp[Lemma~9.4]{Orabona2019}}]
\label{lem:best_wealth_upper_bound}
For any $T\ge 1$ and any $c^T\in[-1,1]^T$, we have
\[
\max_{b\in[-1,1]} \prod_{t\in[T]} (1+bc_t)
\le \exp\Bigl(\frac{\ln 2}{T}(\nsum c^T)^2\Bigr).
\]
\end{lemma}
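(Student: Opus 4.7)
The plan is to compute the maximum over $b\in[-1,1]$ in closed form via first-order conditions, and then reduce the bound to a classical entropy inequality. The argument is cleanest in the binary case $c_t\in\{\pm 1\}$, which is the case actually invoked in the proof of Theorem~\ref{thm:olo_side_information_lower_bound}, so I focus on this case.

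First, since $b\mapsto\sum_{t=1}^T\log(1+bc_t)$ is concave on $(-1,1)$, its interior maximizer is characterized by $\sum_t c_t/(1+bc_t)=0$. For $c_t\in\{\pm 1\}$, letting $n_+\defeq|\{t\in[T]\suchthat c_t=+1\}|$, $n_-\defeq T-n_+$, and $S\defeq\nsum c^T=n_+-n_-$, the product factors as $(1+b)^{n_+}(1-b)^{n_-}$ and the first-order condition yields $b^\star=S/T\in[-1,1]$ (edge cases $n_\pm=T$ are handled by direct inspection and give equality in the lemma). Substituting this value produces the closed form
\[
\max_{b\in[-1,1]}\prod_{t=1}^T(1+bc_t)=\Bigl(\tfrac{2n_+}{T}\Bigr)^{n_+}\Bigl(\tfrac{2n_-}{T}\Bigr)^{n_-}=2^{T(1-H_2(n_+/T))},
\]
where $H_2(p)\defeq -p\log_2 p-(1-p)\log_2(1-p)$ denotes the binary entropy in bits.

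The claim then reduces to the classical entropy inequality $H_2(p)\ge 4p(1-p)$ on $[0,1]$, equivalently $1-H_2(p)\le(2p-1)^2$. Applying it at $p=n_+/T$, so that $2p-1=S/T$, gives $T(1-H_2(p))\le S^2/T$, whence
\[
\max_{b\in[-1,1]}\prod_{t=1}^T(1+bc_t)\le 2^{S^2/T}=\exp\Bigl(\frac{\ln 2}{T}\,S^2\Bigr),
\]
as desired.

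The main obstacle is verifying the entropy inequality $H_2(p)\ge 4p(1-p)$. I would translate it to the KL-divergence form $(1-H_2(p))\ln 2=D_{\mathrm{KL}}(p\,\|\,1/2)$ (in nats) and show the equivalent statement $D_{\mathrm{KL}}(p\,\|\,1/2)\le(2p-1)^2\ln 2$ by analyzing the auxiliary function $g(p)\defeq(2p-1)^2\ln 2-D_{\mathrm{KL}}(p\,\|\,1/2)$. This $g$ is symmetric about $1/2$, satisfies $g(0)=g(1/2)=g(1)=0$ and $g'(1/2)=0$, and has $g''(p)=8\ln 2-1/(p(1-p))$, which is positive near $p=1/2$ (so $g$ is convex in a neighborhood with a local minimum value zero at $p=1/2$) and negative near the endpoints (so $g$ is concave there). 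A case analysis on the three subintervals delineated by the two inflection points of $g$, combined with the boundary values $g(0)=g(1)=0$ and the chord/tangent bounds implied by the alternating concavity, then yields $g\ge 0$ throughout $[0,1]$.
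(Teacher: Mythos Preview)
The paper does not prove this lemma; it is simply quoted from \citet[Lemma~9.4]{Orabona2019}. Your argument for the binary case $c_t\in\{\pm1\}$ is correct and standard: the maximizer $b^\star=S/T$ is the Kelly bet, the resulting wealth equals $2^{T(1-H_2(n_+/T))}$, and the entropy inequality $1-H_2(p)\le(2p-1)^2$ (equivalently $H_2(p)\ge 4p(1-p)$) finishes. Your outline for $g(p)=(2p-1)^2\ln 2-D_{\mathrm{KL}}(p\,\|\,1/2)\ge 0$ is also sound: on the convex subinterval around $p=1/2$ the tangent line at $1/2$ already gives $g\ge 0$, and on the concave subinterval near the endpoint the chord from $(0,0)$ to the inflection point is nonnegative because $g$ at the inflection point is nonnegative by the previous step.

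One point deserves to be promoted from a parenthetical to an explicit observation: the lemma \emph{as stated in the paper}, with $c^T\in[-1,1]^T$, is false. Already for $T=1$ and $c_1=\tfrac12$ one has $\max_{b\in[-1,1]}(1+b/2)=\tfrac32$, while $\exp\bigl((\ln 2)\cdot\tfrac14\bigr)=2^{1/4}<\tfrac32$. The inequality holds only for $c_t\in\{\pm1\}$, which is exactly how it is invoked at step~$(a)$ of~\eqref{eq:tree_wealth_upper_bound_explicit}. So your restriction to the binary case is not a convenience but a necessity, and you should say so rather than leave the continuous case seemingly open.
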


\newpage
\section{THE CTW OLO ALGORITHM}

\def\NoNumber#1{{\def\alglinenumber##1{}\State #1}\addtocounter{ALG@line}{-1}}

\begin{algorithm}[htb]
    \caption{CTW OLO algorithm}\label{alg:ctw_olo}

    \Hyperparameter maximum depth $D\ge 1$, auxiliary sequence $\Omega=(\omega_t)_{t\ge 1}$, initial wealth $\wealth_0>0$.
    
    
    \begin{algorithmic}[1] 
        \Procedure{CtwOlo}{$D,\Omega,\wealth_0$}
            \State Initialize a context tree $\Tc_D$ of depth $D$ with $G_s\gets \phi$ and $\b_s\gets 1$ for each $s\in\Tc_D$
            \ForEach{$t=1,2,\ldots$}
                \State Compute $\vvctw(\gv^{t-1})=\vvctw_\lambda(\gv^{t-1})$ by computing,
                for $s_0,\ldots,s_D\in \rho(\omega_{t-D}^{t-1})$,
                \[
                \vvctw_{s_d}(\gv^{t-1})\gets
                \begin{cases}
                \frac{\b_{s_d}(\gv^{t-1})}{\b_{s_d}(\gv^{t-1})+1} \vvkt_{s_d}(\gv^{t-1}) + \frac{1}{\b_{s_d}(\gv^{t-1})+1} \vvctw_{s_{d+1}}(\gv^{t-1}) & \text{if }d<D\\
                \vvkt_{s_D}(\gv^{t-1}) & \text{if }d=D
                \end{cases}
                \tag{\ref{eq:vvctw_recursion}}
                \]

                \State Set $\xvctw_t(\gv^{t-1})\gets \vvctw(\gv^{t-1})\wealth_{t-1}$
                \State Receive $\gv_t$ and update the cumulative wealth 
                $\wealth_t\gets \wealth_{t-1}+\langle \gv_t,\xvctw_t(\gv^{t-1})\rangle$
                \State Update $G_s\gets G_s+\gv_t$ and update $\b_s$ for $s_d=\omega_{t-d}^{t-1}$, $d=0,\ldots,D-1$, as
                \[
                \b_{s_d}(\gv^{t-1})\gets
                \b_{s_d}(\gv^t)
                =\b_{s_d}(\gv^{t-1}) \frac{\Psikt_{s_d}(\gv^t)}{\Psikt_{s_d}(\gv^{t-1})} \frac{\Psictw_{s_{d+1}}(\gv^{t-1})}{\Psictw_{s_{d+1}}(\gv^{t})},
                \tag{\ref{eq:beta_update_recursion}}
                \]
                
                \qquad where
                \[
                \frac{\Psictw_{s_d}(\gv^{t})}{\Psictw_{s_d}(\gv^{t-1})}
                =\begin{cases}
                \frac{\b_{s_d}(\gv^{t-1})}{\b_{s_d}(\gv^{t-1})+1} \frac{\Psikt_{s_d}(\gv^t)}{\Psikt_{s_d}(\gv^{t-1})}
                +\frac{1}{\b_{s_d}(\gv^{t-1})+1}\frac{\Psictw_{s_{d+1}}(\gv^{t})}{\Psictw_{s_{d+1}}(\gv^{t-1})} & \text{if }d<D\\
                \frac{\Psikt_{s_D}(\gv^t)}{\Psikt_{s_D}(\gv^{t-1})} & \text{if }d=D
                \end{cases}
                \tag{\ref{eq:ctw_potential_ratio_recursion}}
                \]
                
                \qquad for $s_d=\omega_{t-d}^{t-1}$, $d=0,\ldots,D$
                \State Receive $\omega_t$
            \EndFor
        \EndProcedure
    \end{algorithmic}
\end{algorithm}

\newpage

\section{EXPERIMENT DETAILS AND ADDITIONAL FIGURES}
\label{app:sec:exp}
\paragraph{Problem setting} 
We applied the proposed OLO algorithms to solve the online linear regression problem as described in Appendix~\ref{sec:other_per_state_algos} especially with absolute loss $\ell_t(\wv_t)=|\langle \wv_t,\xv_t\rangle -y_t|$, where $\wv_t$ denotes the action of an OLO algorithm and $\xv_t$ denotes the feature vector.
Hence, we linearized the convex loss and fed the subgradient $\partial\ell_t(\wv_t)=\sgn(\langle \wv_t,\xv_t\rangle -y_t)\xv_t$ to an OLO algorithm.

\paragraph{Data preprocessing} 
For each dataset, we linearly interpolated any missing values. We discarded time stamps as well as some categorical features such as \texttt{cbwd} of Beijing PM2.5 and \texttt{weather\_description} of Metro Inter State Traffic Volume, and binarized the others, if possible, such as \texttt{holiday}, \texttt{weather\_main}, and \texttt{snow\_1h} of Metro Inter State Traffic Volume.
We also applied a logarithmic mapping $x\mapsto\ln(1+x)$ for the features \texttt{lws}, \texttt{ls}, \texttt{lr} of Beijing PM2.5 and applied another logarithmic mapping $x\mapsto \ln x$ to the feature \texttt{rain\_1h}, to make the features more suitable for linear regression.
We then normalized each feature $\tilde{\xv}_t$ so that $\|\tilde{\xv}_t\|_2=1$ and added all-one coordinates as the bias component with an additional scaling by $1/\sqrt{2}$. 
After this preprocessing step, we obtained 7-dimensional feature vectors for both datsets.
See the attached Python code for the details in Supplementary Material.

\paragraph{Computing resource}
All experiments were run on a single laptop with a CPU {Intel(R) Core(TM) i7-9750H CPU \@ 2.60GHz} with 12 (logical) cores and 16GB of RAM.

\begin{figure}[ht]
    \flushleft
    ~~\includegraphics[width=.8\textwidth]{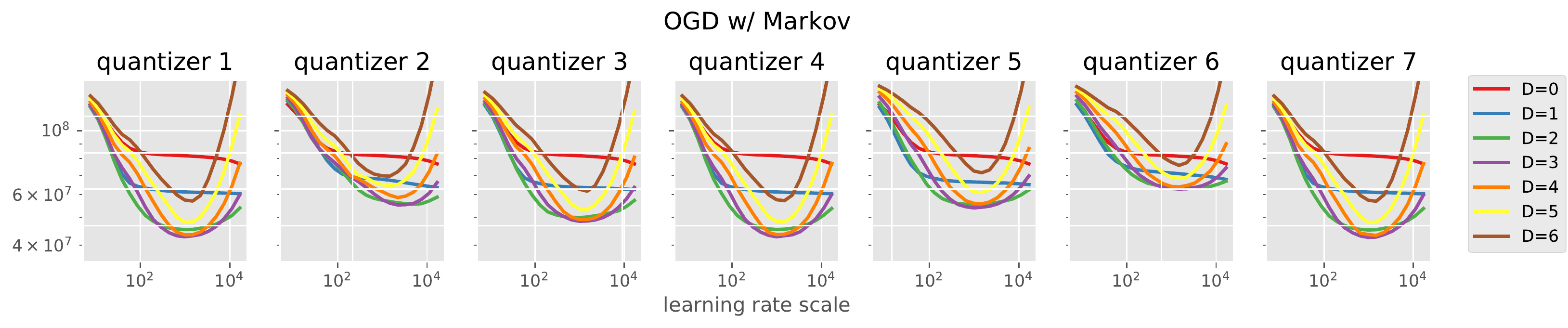}\\
    \includegraphics[width=.9\textwidth]{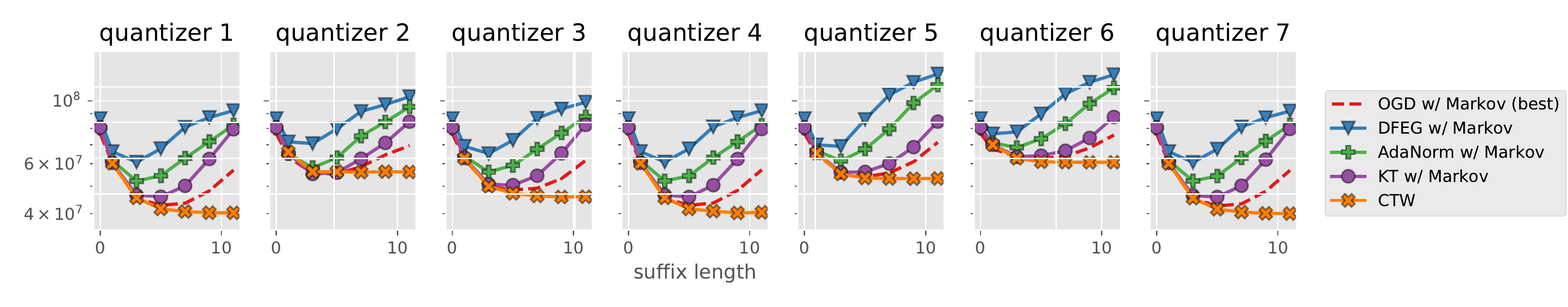}
    \vspace{-1.5em}
    \caption{Metro Inter State Traffic Volume dataset~\citep{Hogue2019}. 
    The $y$-axes represent cumulative losses.
    (a) Performance of per-state OGD adaptive to Markov side information with various learning rate scales. 
    (b) Performance of parameter-free algorithms.}
    \label{fig:add_exps_metro}
\end{figure}

\begin{figure}[ht]
    \flushleft
    ~~\includegraphics[width=.8\textwidth]{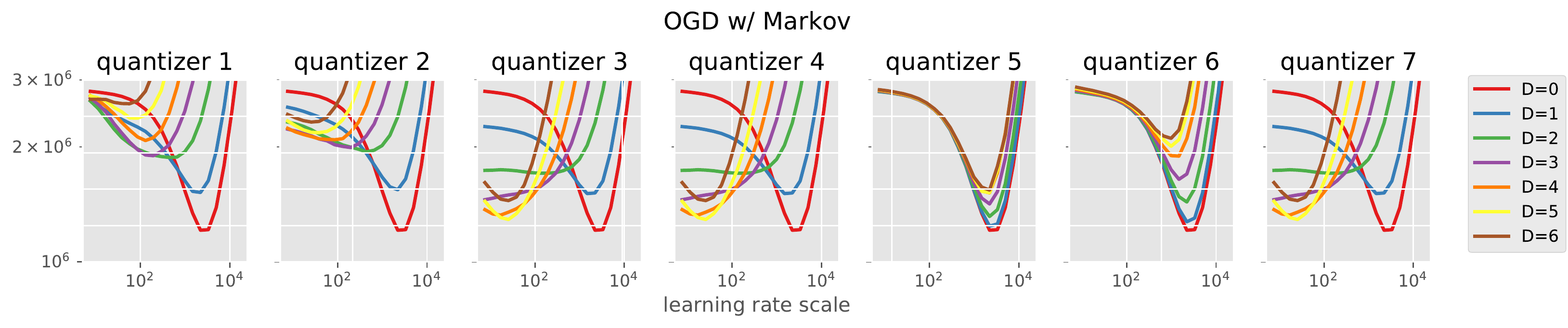}\\
    \includegraphics[width=.9\textwidth]{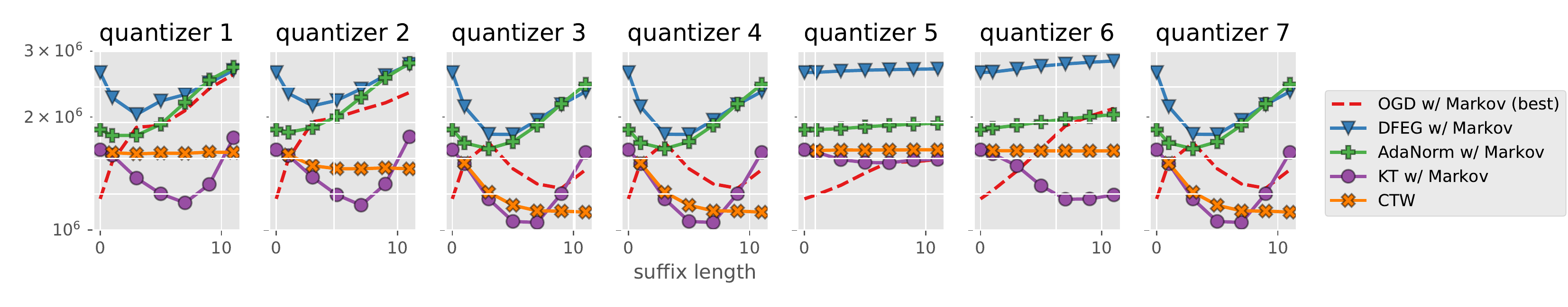}
    \vspace{-1.5em}
    \caption{Beijing PM2.5 dataset~\citep{Liang--Zou--Guo--Li--Zhang--Zhang--Huang--Chen2015}. 
    See the caption of Figure~\ref{fig:add_exps_metro} for details.}
    \label{fig:add_exps_beijing}
\end{figure}

\fi

\end{document}